%% file: covert_fading_manuscript.tex
\documentclass[10pt, onecolumn, romanappendices]{IEEEtran}
\usepackage{graphicx}
\usepackage{amsthm}
\usepackage{times}
\input{preamble}


\usepackage{stfloats}
\usepackage{url}
\usepackage{bbm}
\usepackage{cite}
\usepackage{enumerate}
\usepackage[colorlinks = true,
linkcolor = blue,
urlcolor  = blue, 
citecolor = red,
anchorcolor = green]{hyperref}

\title{Second-Order Asymptotics for Covert Communication over Quasi-Static Multiple-Antenna Fading Channels}
\author{Changhong Liu, Jingjing Wang, Qiaosheng Zhang, Jinpeng Xu and Lin Zhou

\thanks{C. Liu, J. Wang and J. Xu are with the School of Cyber Science and Technology, Beihang University (Emails: \{liuchanghong, drwangjj, xujinpeng\}@buaa.edu.cn). C. Liu is visiting the School of Automation and Intelligent Manufacturing, Southern University of Science and Technology.}
\thanks{Q. Zhang is with the Shanghai AI Laboratory (Email: zhangqiaosheng@pjlab.org.cn).}
\thanks{L. Zhou is with the School of Automation and Intelligent Manufacturing, Southern University of Science and Technology (Email: zhoul9@sustech.edu.cn).}
}

\begin{document}
\maketitle

\begin{abstract}
We study the second-order asymptotics of optimal codes for covert communication over quasi-static multi-antenna fading channels, under the covertness metric of Kullback--Leibler (KL) divergence.
In particular, we study all four cases regarding the availability of channel state information (CSI) for the legitimate transmitter and receiver, assume that the warden knows perfect CSI for the channel from the legitimate transmitter to itself, whereas the legitimate transmitter knows only that the warden's channel matrix belongs to a bounded deterministic uncertainty set.
Specifically, we show that, when the blocklength is $n$, the first-order covert rate satisfies the square root law, scaling as $\Theta(n^{-\frac{1}{2}})$ with the coefficient determined by the traces of the channel matrices of the legitimate users and the warden, and the second-order rate vanishes.
We also show that the availability of CSI at the transmitter enables optimal covert power allocation across the spatial sub-channels, which can increase the covert rate, relatively to equal power allocation, without changing the square root law.
Furthermore, we reveal the significant spatial diversity gain provided by multiple-antenna systems for covert communication and demonstrate the critical role of the number of antennas to achieve high throughput covert communication. For the covertness analysis, we extend the quasi-$\eta$-neighborhood framework to quasi-static fading channels.
For the reliability analysis, due to the vanishing power imposed by the covertness constraint, we refine the non-covert analysis by Yang \emph{et al.} (TIT, 2014), by carefully controlling higher-order terms and exploiting the properties of covert outage probability.
\end{abstract}

\begin{IEEEkeywords}
Low probability of detection communication, Information theoretical security, Physical layer security, Slow fading, Multiple antenna communication
\end{IEEEkeywords}

\section{Introduction}

For 6G and next-generation wireless systems, International Telecommunications Union (ITU) highlights that superlative ultra-reliable low-latency communication (SURLLC) scenarios require latency as small as 0.1 millisecond, roughly one hundredth of the requirement in 5G~\cite{letaief2019roadmap,yuan2025ground}.
Under such stringent delay constraints, quasi-static fading is a realistic channel model, in which the channel coefficients remain constant over an entire codeword~\cite[Section~5.4.1]{tse2005fundamentals}. For quasi-static multiple-antenna fading channels, Yang \emph{et al.} \cite{yang2014quasi} bounded the maximal achievable rate as a function of the blocklength and the tolerate error probability by deriving both non-asymptotic and second-order asymptotic bounds. Due to the outage nature of quasi-static fading, Shannon capacity becomes inapplicable. Instead, the \emph{outage capacity}~\cite[Page 2631]{biglieri2002fading} applies, characterizing the first-order maximal rate for reliable communication over quasi-static fading channels. Although outage capacity is inherently asymptotic, it remains an accurate benchmark for finite blocklength performance in low-latency communication, because the second-order rate vanishes for quasi-static fading~\cite{yang2014quasi}. The above result has been widely used as a theoretical benchmark for performance analyses in wireless communications~\cite{she2017cross,cheng2021adaptive,li2023joint,kallehauge2025prediction}.

Furthermore, ITU also highlights that strong security represents another core feature of 6G communications, which is crucial due to the open nature of wireless communication environments \cite{dang2020should}. Covert communication \cite{bash_limits_2013} has been proposed to confuse malicious warden in sensitive scenarios where the communication behavior needs to be protected \cite{chen2023covert}. The goal of covert communication is to protect the behavior of legitimate transmission so that any warden cannot reliably determine whether the legitimate users are communicating or not. Pioneering work investigated covert communication over an additive white Gaussian noise (AWGN) channel \cite{bash_limits_2013} and discrete memoryless channel (DMC) \cite{che2013reliable,wang_fundamental_2016,bloch_covert_2016}. The square root law was revealed, stating that the number of bits that can be covertly and reliably transmitted scales with the square root of the blocklength. To characterize the theoretical benchmark, the exact coefficient of the square root law is referred to as the first-order covert rate.
The square root law in \cite{bash_limits_2013,che2013reliable,wang_fundamental_2016,bloch_covert_2016} was later extended to various scenarios under different channel conditions \cite{zhang2021covert,wang2021covert,bouette2024covert,bullock2025fundamental} and to multi-user communication systems \cite{soltani2018covert,arumugam2019covert,tan2018time}. To account for the low-latency requirement as well, Tahmasbi and Bloch derived second-order asymptotics for covert communication over any DMC, which was generalized to an AWGN channel 
by Yu \emph{et al.}~\cite{yu_finite_2021,yu_second_2023} and recently further generalized to 
multiple-input multiple-output (MIMO) AWGN channels by Liu, Wang and Zhou~\cite{liu2026covertMIMO}. In particular, the authors of~\cite{liu2026covertMIMO} demonstrated the crucial role of MIMO systems in achieving high-rate covert communication.

However, all above studies of covert communication have focused primarily on non-fading channels, which fail to satisfy intertwined requirements of SURLLC and high security for 6G wireless communications. To bridge this gap, we study the second-order asymptotics of covert communication over a quasi-static MIMO fading channel. In particular, we answer three fundamental questions: \emph{1) how to accurately characterize the outage behavior in covert communication; 2) how the covertness constraint affects the second-order term; and 3) how the number of antennas influences the covert transmission rate}.

\subsection{Main Contributions}
Under the Kullback--Leibler (KL) divergence covertness metric, we bound the maximal achievable rate for covert communication over point-to-point (P2P) quasi-static MIMO fading channels as a function of the blocklength $n$ by deriving both non-asymptotic bounds and second-order asymptotic bounds. Analogous to covert communication results in the non-fading case~\cite{abdelaziz2017fundamental,wang_covert_2021,liu2026covertMIMO}, the first-order rate follows the square root law, scales in the order of $\Theta(n^{-\frac{1}{2}})$, and has the coefficient that is a function of the traces of the channel matrices involving legitimate users and the warden. Furthermore, analogous to the non-covert case \cite{yang2014quasi}, the second-order term vanishes due to quasi-static fading.
In contrast to the non-covert case of \cite{yang2014quasi}, conventional water-filling is generally infeasible under the covertness constraint. Nevertheless, the availability of channel state information (CSI) at legitimate users remains beneficial because it enables covert power allocation across the equivalent sub-channels, which can improve the first-order covert rate.

Our results generalize the analyses of \cite{yu_second_2023,liu2026covertMIMO} to quasi-static fading channels, and extend the second-order asymptotic analyses of non-covert communication in \cite{yang2014quasi} to covert communication.
On the one hand, to handle the covertness constraint, we apply the quasi-$\eta$-neighborhood framework in \cite{yu_second_2023,liu2026covertMIMO} to derive covert power bounds so that the higher-order approximation errors do not alter the second-order term, and optimize power allocation across the equivalent sub-channels.
On the other hand, the finite blocklength performance in~\cite{yang2014quasi} was established for non-vanishing power, which cannot be used when the power is vanishing as in covert communication. If one directly applies the analyses in \cite{yang2014quasi}, an unexpected $O(\frac{1}{\sqrt{n}})$ rate term arises in \cite[Eqs. (163), (266)]{yang2014quasi}, which makes it impossible to even derive the first-order rate for covert communication. 
To address this problem, we carefully verify the conditions of the Cram\'er--Esseen Theorem, analyze the high-order moments of the remainder terms and exploit the properties of covert outage probability (cf. Eqs. \eqref{eq:def of Fout}, \eqref{eq:F' final-3}).

Our theoretical benchmarks bring new insights beyond \cite{yang2014quasi,yu_second_2023,liu2026covertMIMO}, which are illustrated via numerical plots in Figs. \ref{fig:dif_lambda_0}--\ref{fig:fading_reliability}. 
Specifically, Fig.~\ref{fig:dif_lambda_0} illustrates the square root law and the inverse relation between the maximal achievable covert rate and warden's signal-to-noise ratio (SNR).
Fig.~\ref{fig:dif_m} illustrates the benefit of MIMO. In particular, when the number of antennas increases from 1 to 16 and 64, the maximal achievable covert rate increases by approximately 15.4 and 49.7 times, respectively. Finally, Fig.~\ref{fig:fading_reliability} highlights that the second-order term vanishes over quasi-static fading channels. As a result, the finite blocklength performance converges rapidly to the first-order rate. Specifically, for the setting in Fig.~\ref{fig:fading_reliability}, when the blocklength equals 5000, the non-asymptotic covert rate over quasi-static fading channel achieves $71.7\%$ of the asymptotic first-order rate while that for the AWGN case achieves $45.3\%$ of the corresponding first-order rate.

\subsection{Organization for the Rest of the Paper}
The rest of the paper is organized as follows. In Section \ref{2section_model_mainresults}, we formulate the problem by specifying the system model, performance metrics and theoretical benchmarks for covert communication over quasi-static MIMO fading channels. Subsequently, in Section \ref{section_main_results}, we present our main results with discussions. Sections \ref{section:ach csit} and \ref{section:con csit} prove the achievability and the converse parts, respectively. Finally, in Section \ref{section:conclusion}, we conclude our paper and discuss future research directions. For smooth presentation, the proofs of supporting lemmas are deferred to the appendices.

\subsection{Notation}
The set of real numbers, positive real numbers, complex numbers and integers are denoted by $\bbR$, $\bbR_+$, $\bbC$, $\bbN$, respectively. For any two integers $(a,b)\in\bbN^2$ such that $a\leq b$, we use $[a:b]$ to denote the set of integers between $a$ and $b$, and $[a]$ to denote $[1:a]$. We use calligraphic font (e.g., $\calX$) to denote all sets. Random variables are in capital (e.g., $X$) and their realizations are in lower cases (e.g. $x$). Random vectors of length $n\in\bbN$ and their particular realizations are denoted by $X^n:=(X_1,\ldots,X_n)$ and $x^n:=(x_1,\ldots,x_n)$, respectively. We use $\|\cdot\|$, $\|\cdot\|_\rmF$ and $\|\cdot\|_2$ to denote the $\ell_2$ norm of a vector, the Frobenius norm of a matrix and the spectral norm of a matrix, respectively. Given a matrix, we use $\tr(\cdot)$ and $\det(\cdot)$ to denote its trace and determinant, respectively, and use $\spn(\cdot)$ to denote the subspace spanned by its column vectors. Let $\Re(\cdot)$ denote the real part of a complex number, and $\imath$ be the imaginary unit. We use $\bbo(\cdot)$ to denote the indicator function, use logarithms with base $e$ and adopt the standard asymptotic notations such as $\Theta(\cdot)$, $O(\cdot)$ and $o(\cdot)$ (cf. \cite{cormen2022introduction}). 

Given any integer $a\in\bbN$, we use $\bI_a$ to denote the identity matrix of size $a\times a$, and use $\bI_{a,b}$ to denote the $a\times b$ matrix containing the first $b$ columns of $\bI_a$ for $a>b$. The probability density function (pdf) of a circularly-symmetric complex Gaussian random vector with a covariance matrix is denoted by $\calCN(\mathbf{0},\cdot)$. 
The usual gamma function is denoted by $\Gamma(x):=\int_{0}^{\infty}t^{x-1}e^{-t}\rmd t$. The complementary Gaussian cumulative distribution function is denoted by $Q(x):=\int_x^\infty \frac{1}{\sqrt{2\pi}}e^\frac{-x^2}{2}\rmd x$ and its inverse function is denoted by $Q^{-1}(\cdot)$. Given any random variable $X\in\bbR$ and real number $a\in(0,1)$, the $a$-quantile function of $X$ is defined as $\calQ_X(a) := \inf\{x: \Pr\{X \le x\} \ge a\}$.
The set of all probability distributions defined on an alphabet $\calX$ is denoted by $\calP(\calX)$ and the set of all conditional probability distributions from $\calX$ to $\calY$ is denoted by $\calP(\calY|\calX)$.  Given a continuous alphabet $\calX$ and any two distributions $P$ and $Q$ with pdf $f_P$, $f_Q$,\footnote{We use $P$ to denote a probability measure on a continuous alphabet $\calX$, and $f_P$ to denote its pdf.} the total variation (TV) distance is defined as $\bbV(P,Q):=\frac{1}{2}\int_\calX |f_P(x)-f_Q(x)|\rmd x$, and the KL divergence is defined as $\bbD(P\|Q):=\int_\calX f_P(x)\log\frac{f_P(x)}{f_Q(x)}\rmd x$.

\section{Problem Formulation} \label{2section_model_mainresults}
\subsection{System Model}
\label{sec:2.1 channel model}
\begin{figure}[tb]
\centering
\includegraphics[width=0.78\columnwidth]{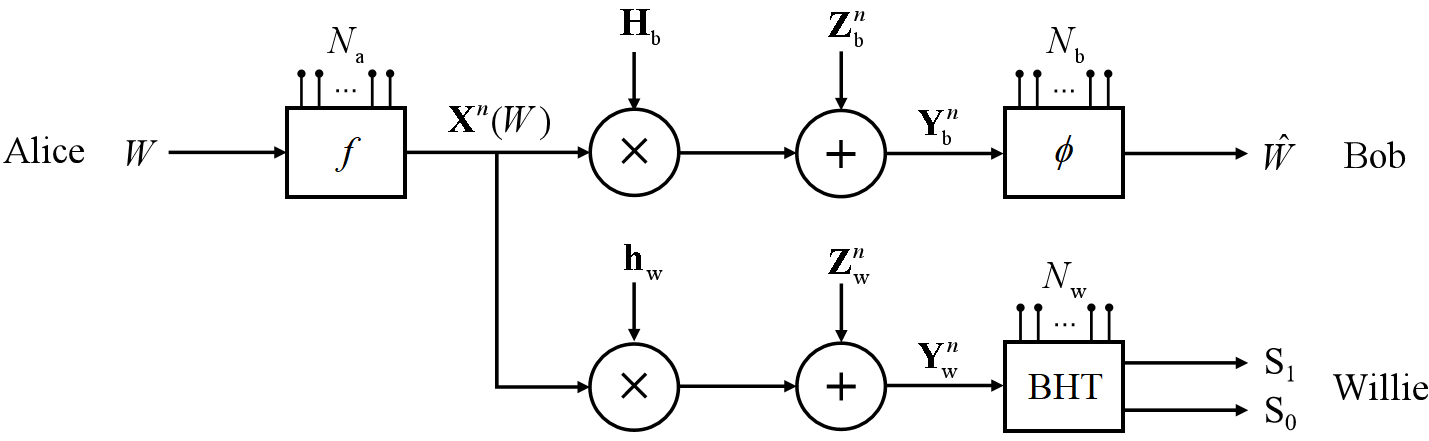}
\caption{System model for covert communication over quasi-static MIMO fading channels.}
\label{fig: covert system model}
\end{figure}

Fix a positive real number $\lambda_0\in\bbR_+$, five integers $(n,M,N_\rma,N_\rmb,N_\rmw)\in\bbN^5$ and five alphabets $(\calX,\calY_\rmb,\calY_\rmw,\calZ_\rmb,\calZ_\rmw)\subseteq \bbC^5$.
As shown in Fig.~\ref{fig: covert system model}, we study covert communication with three users: the transmitter Alice with $N_\rma$ antennas, the receiver Bob with $N_\rmb$ antennas and the warden Willie with $N_\rmw$ antennas. The legitimate channel matrix $\bH_\rmb\in\bbC^{N_\rma\times N_\rmb}$ is random but remains constant over the $n$ channel uses under the quasi-static fading model. Furthermore, consistent with \cite[Section~IV]{wang_covert_2021} and \cite[Section~V]{schaefer2015secrecy}, Willie's channel matrix $\bh_\rmw\in\bbC^{N_\rma\times N_\rmw}$ is modeled as an unknown deterministic matrix with bounded channel gain $\sqrt{\lambda_0}$, i.e., $\bh_\rmw$ falls in the following uncertainty set
\begin{align}
\label{eq:assume of Hw}
\calH_\rmw:=\big\{ \bh_\rmw:\|\bh_\rmw\|_2\leq \sqrt{\lambda_0}\big\},
\end{align}
which is equivalent to a constraint on Willie's received SNR. Consistent with \cite[Section II-A]{liu2026covertMIMO} and \cite[Section II-B]{wang_covert_2021}, we assume that both channel matrices have full rank and $\rank(\bH_\rmb)=\rank(\bh_\rmw)=N_\rma$. This implies that the number of Alice's antennas $N_\rma$ is no larger than those of Bob $N_\rmb$ and Willie $N_\rmw$.



There are two states $\rmS_0$ and $\rmS_1$. Under the state $\rmS_0$, Alice is not transmitting messages. The legitimate receiver Bob receives noise sequence $\bZ_\rmb^n=(\bZ_{\rmb,1},\ldots,\bZ_{\rmb,n})$, which is generated i.i.d. from the complex Gaussian distribution $\calCN(\mathbf{0},\bI_{N_\rmb})$. The warden Willie receives another noise sequence $\bZ_\rmw^n=(\bZ_{\rmw,1},\ldots,\bZ_{\rmw,n})$, generated i.i.d. from $\calCN(\mathbf{0},\bI_{N_\rmw})$. Under the state $\rmS_1$, Alice transmits a uniformly distributed message $W\in[M]$ reliably to Bob and covertly over the warden Willie, with $n$ channel uses. Using an encoder $f$, the legitimate transmitter Alice generates a channel input $\bX^n(W) \in \calX^{n\times N_\rma}$, which is passed over two quasi-static fading channels: for the legitimate receiver Bob, the additive noise $\bZ^n_{\rmb} \in \calZ_\rmb^{n\times N_\rmb}$ corrupts the channel input and yields the noisy channel output $\bY^n_{\rmb} \in \calY_\rmb^{n\times N_\rmb}$; for the warden Willie, the additive noise $\bZ^n_{\rmw} \in \calZ_\rmw^{n\times N_\rmw}$ corrupts the channel input and yields the noisy channel output $\bY^n_{\rmw} \in \calY_\rmw^{n\times N_\rmw}$, i.e.,
\begin{align}
\bY^n_{\rmb}&=\bX^n(W)\bH_\rmb+\bZ^n_{\rmb},\label{eq:yb channel}\\*
\bY^n_{\rmw}&=\bX^n(W)\bh_\rmw+\bZ^n_{\rmw}.\label{eq:yw channel}
\end{align}
After $n$ channel uses, with channel outputs $\bY^n_\rmb=(\bY_{\rmb,1},\ldots,\bY_{\rmb,n})$, Bob uses a decoder $\phi$ to generate an estimate $\hatW$ of the transmitted message $W$. Simultaneously, the warden Willie uses $\bY^n_\rmw=(\bY_{\rmw,1},\ldots,\bY_{\rmw,n})$ and CSI $\bh_\rmw$ to discriminate between states $\rmS_0$ and $\rmS_1$ via binary hypothesis testing (BHT). Consistent with \cite{bash_limits_2013,wang_fundamental_2016,abdelaziz2017fundamental}, we assume that the legitimate users Alice and Bob share a sufficiently long secret key. Consequently, Bob knows when Alice transmits, but Willie does not know the exact transmission time.

We assume that Willie is a powerful warden and knows the perfect CSI $\bh_\rmw$ to perform the optimal BHT, whereas Alice only knows the upper bound $\lambda_0$ in \eqref{eq:assume of Hw}. For CSI $\bH_\rmb$ between legitimate parties Alice and Bob, we consider the following four scenarios:
\begin{itemize}
\item no-CSI: neither the transmitter Alice nor the receiver Bob is aware of the realization of $\bH_\rmb$;
\item CSIT: the transmitter Alice knows $\bH_\rmb$;
\item CSIR: the receiver Bob knows $\bH_\rmb$;
\item CSIRT: both the transmitter Alice and the receiver Bob know $\bH_\rmb$.
\end{itemize}
For brevity, we denote these cases by $\{\rmno,\rmt,\rmr,\rmrt\}$, respectively. When CSIR is available, the legitimate receiver Bob uses maximal likelihood decoding \cite[Appendix A]{polyanskiy_channel_2010},  while Bob uses angle threshold decoding when CSIR is not available (cf. Section \ref{subsec:ach csit coding}).

Furthermore, define $\blambda_0:=\sqrt{\lambda_0}\bI_{N_\rma}$. Let $\bh_{\rmw,0}\in\bbC^{N_\rma\times N_\rmw}$ be a deterministic channel matrix such that
\begin{align}
\label{eq:def of hw0}
\bh_{\rmw,0}\bh_{\rmw,0}^\rmH=\blambda_0^2.
\end{align}
Note that $\bh_{\rmw,0}$ corresponds to the worst case channel for the warden. Our results are derived for $\bh_{\rmw,0}$, which naturally holds any warden's channel in the set $\calH_\rmw$.
The adoption of a worst channel is practical. On the one hand, in covert communication, Willie is generally passive and Alice could not known Willie's location, let alone the channel matrix. On the other hand, a reasonable upper bound on the channel gain always exists with extremely high probability for typical quasi-static fading channels.

\subsection{Performance Metric}
\label{sec:II-B}
Fix three integers $(n,M,K)\in\bbN^3$, and two positive real numbers $(\varepsilon,\delta)\in\bbR_+^2$. Fix any $\star\in\{\rmno,\rmt,\rmr,\rmrt\}$.
\begin{definition}
\label{def:code}
An $(n,M,K)_\star$-code for a quasi-static MIMO fading channel consists of an encoder $f_\star$ and a decoder $\phi_\star$, specified as follows:
\begin{itemize}
\item If $\star=\rmt$: $f_\rmt:[M]\times[K]\times \bbC^{N_\rma\times N_\rmb}\to\calX^{n\times N_\rma}$, and $\phi_\rmt:\calY_\rmb^{n\times N_\rmb}\times[K]\to [M]$;
\item If $\star=\rmr$: $f_\rmr:[M]\times[K]\to\calX^{n\times N_\rma}$, and $\phi_\rmr:\calY_\rmb^{n\times N_\rmb}\times[K]\times \bbC^{N_\rma\times N_\rmb}\to [M]$;
\item If $\star=\rmrt$: $f_\rmrt:[M]\times[K]\times \bbC^{N_\rma\times N_\rmb}\to\calX^{n\times N_\rma}$, and $\phi_\rmrt:\calY_\rmb^{n\times N_\rmb}\times[K]\times \bbC^{N_\rma\times N_\rmb}\to [M]$;
\item If $\star=\rmno$: $f_\rmno:[M]\times[K]\to\calX^{n\times N_\rma}$, and $\phi_\rmno:\calY_\rmb^{n\times N_\rmb}\times[K]\to [M]$.
\end{itemize}
\end{definition}
For ease of notation, we suppress the key index throughout the paper. To evaluate the performance of an $(n,M)_\star$-code, the maximal error probability and the covertness level are adopted as the reliability and security measures, respectively. Specifically, the maximal error probability is defined as follows:
\begin{align}
\rmP_\rme^{(n)}:=\max_{w\in[M]}\Pr\big\{\hatW\neq W|W=w\big\}.
\end{align}
The covertness analysis is related with error probabilities of BHT done by Willie. In BHT, there are two types of error events: Type--I error (or false alarm) occurs when Willie incorrectly rejects $\rmS_0$ and Type--II error (or missed detection) occurs when Willie incorrectly accepts $\rmS_0$. Let Type--I and Type--II error probabilities be denoted by $\alpha$ and $\beta$, respectively. To ensure covertness, one needs to maximize $\alpha+\beta$. Ideally, one would like $\alpha+\beta=1$, which corresponds to the case where Willie randomly guesses whether Alice is transmitting a message to Bob or not.

Recall that under the state $\rmS_0$ when Alice is not transmitting a message, the channel output at Willie $\bY_\rmw^n$ is the additive noise $\bZ_\rmw^n$ that is generated i.i.d. from $\calCN(\mathbf{0},\bI_{N_\rmw})$. Under the state $\rmS_1$, when Alice is transmitting the message $W$, the channel output at Willie satisfies $\bY_\rmw^n=\bX^n(W)\bh_\rmw+\bZ_\rmw^n$. For ease of notation, let the distribution of Willie's output $\bY_\rmw^n$ under states $(\rmS_1,\rmS_0)$ be denoted by $(Q_1^{(n)},Q_0^{(n)})$, respectively. It follows from \cite[Theorem 13.1.1]{lehmann1986testing} that for optimal tests of BHT, the sum of Type--I and Type--II error probabilities is related with TV distance via $\alpha+\beta=1-\bbV(Q_1^{(n)}, Q_0^{(n)})$. As a result, TV distance has been adopted as a metric for covertness \cite{che2013reliable, wang_covert_2021} due to its relevance to the performance of Willie~\cite[Section II-B]{tahmasbi_first-_2019}. Specifically, when TV distance $\bbV(Q_1^{(n)}, Q_0^{(n)})$ approaches zero, Willie cannot find any test to discriminate between states $\rmS_0$ and $\rmS_1$. However, TV distance is generally difficult to compute, especially for product distributions. Instead, KL divergence is easy to compute and related with TV distance via Pinsker's inequality~\cite[Lemma 11.6.1]{cover1999elements}: $\bbV(Q_1^{(n)}, Q_0^{(n)})\leq \sqrt{\frac{1}{2}\bbD(Q_1^{(n)} \| Q_0^{(n)})}$.
Therefore, consistent with \cite{wang_fundamental_2016, bloch_covert_2016, yu_second_2023}, we adopt KL divergence as the covertness level:
\begin{align}
\label{eq:Pc def kl}
\rmP_\rmc^{(n)}:=\bbD(Q_1^{(n)}\|Q_0^{(n)}).
\end{align}

The theoretical benchmark of covert communication characterizes the tradeoff among the maximal achievable rate, the error probability, and the covertness level as a function of the blocklength. Let $R:=\frac{\log M}{n}$, and fix a CSI scenario $\star\in\{\rmno,\rmt,\rmr,\rmrt\}$. Given a blocklength $n\in\bbN$, an error probability constraint $\varepsilon\in(0,1)$ and a covertness constraint $\delta\in\bbR_+$, the fundamental limit of covert communication under equal power allocation is defined as follows:
\begin{align}
\label{eq2:both constraints}
R^*_\star(n,\varepsilon,\delta):=\sup\big\{R:\exists~ \mathrm{an}~(n,M)_\star\mathrm{-code}~\mathrm{s.t.}~\rmP_\rme^{(n)}\leq\varepsilon~\text{and}~\rmP_\rmc^{(n)}\leq\delta,~\forall~\bh_\rmw\in\calH_\rmw\big\}.
\end{align}
From Definition~\ref{def:code}, it follows that
\begin{align}
R^*_\rmno(n,\varepsilon,\delta)&\le R^*_\rmt(n,\varepsilon,\delta)\le R^*_\rmrt(n,\varepsilon,\delta)\\*
R^*_\rmno(n,\varepsilon,\delta)&\le R^*_\rmr(n,\varepsilon,\delta)\le R^*_\rmrt(n,\varepsilon,\delta).
\end{align}
For almost any DMC and AWGN channel, it follows from \cite{wang_fundamental_2016,bash_limits_2013,bloch_covert_2016} that
\begin{align}
\label{eq:theoretical benchmark awgn}
R^*_\star(n,\varepsilon,\delta)=a\cdot n^{-\frac{1}{2}}+b\cdot n^{-\frac{3}{4}}+O\Big(\frac{\log n}{n}\Big),
\end{align}
where $(a,b)\in\bbR_+^2$ are the coefficients of the first- and second-order terms. In this paper, we generalize the above results to quasi-static MIMO fading channels, characterize the first-order coefficient $a$ explicitly and prove that the second-order coefficient $b=0$.

\section{Main Results and Discussions}
\label{section_main_results}

Recall from Section~\ref{sec:2.1 channel model} that $N_\rma$ is the number of Alice's antennas and $\rank(\bH_\rmb)=N_\rma$ under the full-rank assumption. To present our main results, we use the singular value decomposition (SVD) of the channel matrix $\bH_\rmb$, given by
\begin{align}
\bH_\rmb=\bL_\rmb\bLambda_\rmb\bV_\rmb^\rmH,
\label{eq:svd Hb}
\end{align}
where $\bL_\rmb\in\bbC^{N_\rma\times N_\rma}$ and $\bV_\rmb\in\bbC^{N_\rmb\times N_\rma}$ are unitary. The diagonal matrix containing the positive singular values is defined as $\bLambda_\rmb:=\diag(\sqrt{\Lambda_{\rmb,1}},\ldots,\sqrt{\Lambda_{\rmb,N_\rma}})$.
Recall that we use $\tr(\cdot)$ to denote the trace of a matrix.
Recall that $\blambda_0=\sqrt{\lambda_0}\bI_{N_\rma}$, where $\lambda_0$ is given in \eqref{eq:assume of Hw}. Note that $\bLambda_\rmb$ is a random matrix due to the randomness of the channel matrix $\bH_\rmb$. Given any real number $\varepsilon\in(0,1)$, define the following random variable and constant:
\begin{align}
K(\bLambda_\rmb)
&:=\sqrt{2}\tr\big(\bLambda_\rmb^2\big)\big(\tr\big(\blambda_0^4\big)\big)^{-\frac{1}{2}}=\frac{\sqrt{2}}{\lambda_0 \sqrt{N_\rma}}\sum_{j\in[N_\rma]}\Lambda_{\rmb,j},\label{eq:c1}\\*
\kappa_\varepsilon
&:= \inf\Big\{k\in\bbR: \Pr\big\{K(\bLambda_\rmb)\leq k\big\} \geq \varepsilon\Big\}.\label{eq:def k epsilon}
\end{align}
\begin{theorem} 
\label{theorem 1}
Under equal power allocation, given any $\varepsilon\in(0,1)$, $\delta\in\bbR_+$ and $n\in\bbN$, it follows that for any $\star\in\{\rmno,\rmt,\rmr,\rmrt\}$,
\begin{align}
\label{eq:theorem 1}
R^*_\star(n,\varepsilon,\delta)
= \kappa_\varepsilon\sqrt{\frac{\delta}{n}}+O\Big(\frac{\log n}{n}\Big).
\end{align}
\end{theorem}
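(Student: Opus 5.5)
We prove matching achievability and converse bounds. By the ordering $R^*_\rmno\le R^*_\rmt,R^*_\rmr\le R^*_\rmrt$ stated after \eqref{eq2:both constraints}, it suffices to show achievability for $\star=\rmno$ and the converse for $\star=\rmrt$.

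\emph{Covert power.} Under equal power allocation, Alice's per-antenna per-symbol power is $\rho_n$. The worst-case warden channel $\bh_{\rmw,0}$ in \eqref{eq:def of hw0} dominates every $\bh_\rmw\in\calH_\rmw$, because the divergence between Gaussian mixtures and $\calCN(\mathbf{0},\bI)$ is monotone in the received covariance. For i.i.d. Gaussian inputs, the per-symbol divergence is
\begin{align*}
\tr(\bA)-\log\det(\bI+\bA)=\tfrac12\tr(\bA^2)+O(\rho_n^3),\qquad \bA=\rho_n\bh_{\rmw,0}^\rmH\bh_{\rmw,0}.
\end{align*}
Requiring $n\cdot\frac12\rho_n^2\tr(\blambda_0^4)\le\delta$ then gives
\begin{align*}
\rho_n=\sqrt{\frac{2\delta}{n\,\tr(\blambda_0^4)}}+O(n^{-1}).
\end{align*}
The codebook is not i.i.d. Gaussian, so for achievability we use the quasi-$\eta$-neighborhood (shell/truncated-Gaussian) construction of \cite{yu_second_2023,liu2026covertMIMO}. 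It keeps $\rmP_\rmc^{(n)}\le\delta$ while losing only $O(n^{-1})$ in $\rho_n$. For the converse, we reuse the argument of \cite{liu2026covertMIMO}: any code with $\rmP_\rmc^{(n)}\le\delta$ under $\bh_{\rmw,0}$ has average second-moment matrix satisfying $\frac12\tr\big((\bar\bSigma\bh_{\rmw,0}\bh_{\rmw,0}^\rmH)^2\big)\le\frac{\delta}{n}(1+O(n^{-1/2}))$. Convexity then bounds the Bob-relevant power $\tr(\bar\bSigma\bH_\rmb\bH_\rmb^\rmH)$. Under equal allocation this becomes $\rho_n\tr(\bLambda_\rmb^2)$ up to lower-order terms.

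\emph{Outage.} In the low-SNR regime, the conditional mutual information density given $\bH_\rmb$ has mean
\begin{align*}
n\log\det(\bI+\rho_n\bH_\rmb^\rmH\bH_\rmb)=n\rho_n\tr(\bLambda_\rmb^2)+O(n\rho_n^2),
\end{align*}
which is $\Theta(\sqrt n)$. Its conditional variance is $O(n\rho_n)=O(\sqrt n)$, so the standard deviation is $O(n^{1/4})$. Since
\begin{align*}
n\rho_n\tr(\bLambda_\rmb^2)=\sqrt{n\delta}\,K(\bLambda_\rmb),
\end{align*}
the event $\{\log M>\text{information density}\}$ is, up to these fluctuations, the outage event $\{K(\bLambda_\rmb)<\log M/\sqrt{n\delta}\}$. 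For achievability, we apply a dependence-testing / threshold-decoding bound conditioned on $\bH_\rmb$ (angle-threshold decoding in the no-CSI case, following \cite{yang2014quasi}) with $\log M=\sqrt{n\delta}\,(\kappa_\varepsilon-\gamma_n)$. For the converse, we apply the meta-converse of \cite{yang2014quasi} with an auxiliary output distribution $\calCN(\mathbf{0},\bI)$ independent of the input. Both bounds yield
\begin{align*}
\varepsilon\gtrless \mathbb{E}\Big[\Pr\big\{\sqrt{n\delta}\,K(\bLambda_\rmb)+\text{fluct.}\le\log M\ \big|\ \bLambda_\rmb\big\}\Big]\pm O\big(n^{-1/2}\big).
\end{align*}
Averaging the conditional Berry–Esseen/Cramér–Esseen Gaussian term over $\bLambda_\rmb$ smooths it. Because $K$ has a bounded, smooth density near $\kappa_\varepsilon$, the Gaussian smoothing of width $O(n^{1/4})/\sqrt{n\delta}=O(n^{-1/4})$ perturbs the CDF only at second order, i.e. $O(n^{-1/2})$. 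This is the zero-mean-fluctuation cancellation of \cite{yang2014quasi}. Inverting then gives
\begin{align*}
\log M=\sqrt{n\delta}\,\kappa_\varepsilon+O(\log n),
\end{align*}
which is \eqref{eq:theorem 1}.

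\emph{Main obstacle.} The hardest step is the smoothing argument with vanishing power. Naively carrying over \cite[Eqs. (163), (266)]{yang2014quasi} leaves an $O(1/\sqrt n)$ rate term, which here is of the same order as the main term. To avoid this, we expand $\Pr\{\cdot\mid\bLambda_\rmb\}$ via Cramér–Esseen to order $n^{-1/2}$ relative to the $\Theta(n^{1/4})$ standard deviation. This requires verifying non-lattice conditions and uniformly controlling third and fourth moments of the remainder (which scale like $n\rho_n$). We then Taylor-expand the outage CDF $F(k)=\Pr\{K\le k\}$ around $\kappa_\varepsilon$: the first-order term cancels by zero mean and the second-order term is $O(n^{-1/2})$. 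Where $\tr(\bLambda_\rmb^2)$ is small or large, we truncate and bound the contribution by tails of $K$.
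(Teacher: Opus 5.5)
Your proposal is correct and follows essentially the same route as the paper. It reduces to no-CSI achievability and a CSIRT converse. It uses the worst-case warden $\bh_{\rmw,0}$ together with the quasi-$\eta$-neighborhood shell codebook and a maximum-entropy average-power converse. It combines angle-threshold decoding with a meta-converse, applies a conditional Cram\'er--Esseen approximation given $\bLambda_\rmb$, and then averages over the fading. Your $K$-scale smoothing (width $O(n^{-1/4})$, CDF error $O(n^{-1/2})$) is just the rescaled version of the paper's argument that $f_U=O(n^{1/4})$ and $f'_U=O(n^{1/2})$ are compensated by $F'_\rmout(C_\varepsilon(\Psi(n)))$ being of order $\sqrt{n}$.

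What you leave implicit is routine and also present in the paper. First, you need to expurgate from the covertness-induced average power constraint to a per-codeword constraint, which the paper does via \cite[Lemma 39]{polyanskiy_channel_2010} at a cost of $\log(n+1)$. Second, the final inversion assumes $K(\bLambda_\rmb)$ has positive density at $\kappa_\varepsilon$. Your auxiliary output $\calCN(\mathbf{0},\bI)$, used in place of the paper's $\calCN(\mathbf{0},(1+\Lambda_{\rmb,j}\Psi(n))\bI_n)$, only shifts the mean by $O(n\Psi^2(n))=O(1)$ and is harmless.
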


The achievability and converse proofs of Theorem \ref{theorem 1} are provided in Sections \ref{section:ach csit} and \ref{section:con csit}, respectively.

We make the following remarks. Firstly, Theorem \ref{theorem 1} characterizes the maximal achievable rate for covert communication over P2P quasi-static MIMO fading channels under equal power allocation, subject to a maximal error probability  $\varepsilon$ and a covertness constraint $\delta$ under the KL divergence covertness metric.
Our results extend the non-covert analysis of Yang \emph{et al.} \cite{yang2014quasi} by imposing the covertness constraint. In particular, with the vanishing power in covert communication, both the analysis of asymptotic behaviors and the conditions of using the Cram\'er--Esseen Theorem \cite[Theorem 15]{yang2014quasi} are significantly different. 
Furthermore, Theorem~\ref{theorem 1} generalizes the theoretical benchmarks in \cite{liu2026covertMIMO} for the non-fading AWGN case to quasi-static fading channels. Compared with \cite{liu2026covertMIMO}, our analyses address challenges due to random fading and cover all four different CSI availability scenarios.


\begin{figure}[tb]
\centering
\includegraphics[width=0.58\columnwidth]{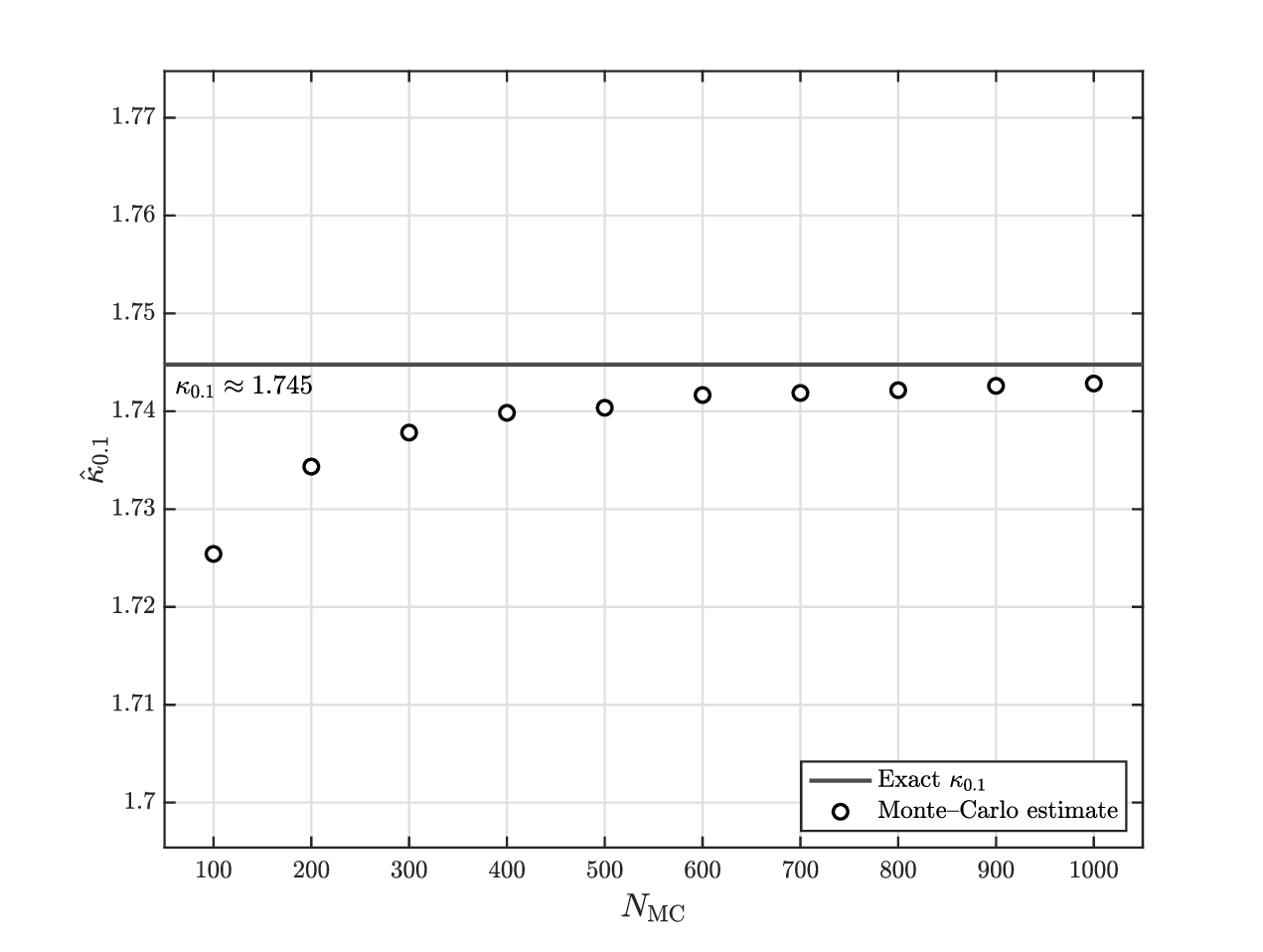}
\caption{Comparison between the exact quantile $\kappa_{0.1}\approx1.745$ and the Monte--Carlo estimates for $N_{\rm MC}=100,200,\ldots,1000$, where each estimate is averaged over $10^5$ independent repetitions. As $N_{\rm MC}$ increases, the Monte--Carlo estimate approaches the exact value, demonstrating improved estimation accuracy with a larger number of samples.}
\label{fig:kappacal}
\end{figure}

Secondly, Theorem \ref{theorem 1} characterizes the first-order asymptotics of $R^*_\star(n,\varepsilon,\delta)$ as $R_1(n,\varepsilon,\delta):=\kappa_\varepsilon\sqrt{\frac{\delta}{n}}$, 
which is consistent with the well-known square root law \cite[Page 1922]{bash_limits_2013}, \cite[Theorem 1]{wang_covert_2021}. With the covertness constraint, the first-order covert rate $\lim_{n\to\infty}\sqrt{n}R^*_\star(n,\varepsilon,\delta)$ equals $\kappa_\varepsilon\sqrt{\delta}$, which reveals the dependence of the theoretical benchmark on the covertness level $\delta$. 
In particular, $\kappa_\varepsilon$ is precisely the lower $\varepsilon$-quantile of $K(\bLambda_\rmb)$. It can be evaluated by inverting the corresponding cumulative distribution function (cdf) when the distribution of $\bH_\rmb$ is known, and estimated via Monte--Carlo trials for general fading models. For a sufficiently large number of simulation trials, the complexity of estimating $\kappa_\varepsilon$ from $N_\mathrm{MC}$ channel samples is $O(N_\mathrm{MC}\log N_\mathrm{MC})$.
The function $K(\bLambda_\rmb)$ depends on the traces of the channel matrices involving the legitimate users $\bH_\rmb$ and the warden $\bh_\rmw$, with the former's effective gain determined by the random matrix $\bLambda_\rmb$, and the latter constrained by the uncertainty set assumed in \eqref{eq:assume of Hw} with a maximal gain of the deterministic matrix $\blambda_0$.
We present two examples to illustrate how $\kappa_\varepsilon$ can be evaluated using the inverse cdf and Monte--Carlo simulation, respectively.
\begin{example}
\label{example:Hb kappa}
Consider a $2\times2$ i.i.d. Rayleigh fading channel whose
entries follow $\calCN(0,1)$. Set $\lambda_0=1$ and $\varepsilon=0.1$. Fix any $(a,b)\in\bbR_+^2$. The pdf for Gamma distribution with parameter $(a,b)$, denoted as $\calG(a,b)$, satisfies that for any $x\ge 0$,
\begin{align}
f_X(x):=\frac{x^{a-1}e^{-x/b}}{\Gamma(a)b^a},
\end{align}
where $\Gamma(x):=\int_{0}^{\infty}t^{x-1}e^{-t}\rmd t$ is the usual gamma function. It follows that $\tr(\bLambda_\rmb^2)=\|\bH_\rmb\|_\rmF^2\sim\calG(4,1)$.
Therefore, numerical evaluation yields
\begin{align}
\kappa_{0.1}=\frac{\sqrt{2}}{1\times \sqrt{2}}F^{-1}_{\calG(4,1)}(0.1)=F^{-1}_{\calG(4,1)}(0.1)\approx 1.745.
\end{align}
\end{example}
\begin{example}
\label{example:kappa mc}
Suppose that the distribution of $\bH_\rmb$ in Example~\ref{example:Hb kappa} is unknown, and $\kappa_\varepsilon$ can be estimated via Monte--Carlo simulation. Specifically, we first generate $N_{\rm MC}$ independent samples $\{K_l\}_{l\in[N_{\rm MC}]}$ of $K(\bLambda_\rmb)$ and sort them in ascending order as
\begin{align}
K_{(1)}\leq K_{(2)}\leq\cdots\leq K_{(N_{\rm MC})}.
\end{align}
The $\varepsilon$-quantile is then estimated by
\begin{align}
\hat{\kappa}_\varepsilon
:=K_{(\lceil N_{\rm MC}\cdot\varepsilon\rceil)}.
\end{align}
For $\varepsilon=0.1$, Fig.~\ref{fig:kappacal} compares the exact quantile $\kappa_{0.1}\approx1.745$ in Example~\ref{example:Hb kappa} with the Monte--Carlo estimates for $N_{\rm MC}=100,200,\ldots,1000$, where each estimate is averaged over $10^5$ independent repetitions. As $N_{\rm MC}$ increases, the Monte--Carlo estimate approaches the exact value, demonstrating improved estimation accuracy with a larger number of samples.
\end{example}
Moreover, consider the following channel matrix to facilitate our discussions below.
\begin{example}
\label{example:rician fading}
Let $\bH_\rmb$ be a $2\times 2$ MIMO Rician fading channel matrix generated according to
\begin{align}
\bH_\rmb = \sqrt{\frac{K}{K+1}}\bH_{\rm LOS} + \sqrt{\frac{1}{K+1}}\bH_{\rm NLOS},
\end{align}
with the Rician factor $K=10$, the deterministic line-of-sight (LOS) component $\bH_{\rm LOS}=\big[\begin{smallmatrix}1 & 1 \\* 1 & 1\end{smallmatrix}\big]$ and the scattered non-line-of-sight (NLOS) component $\bH_{\rm NLOS}$ whose entries are i.i.d. $\calCN(0,1)$.
\end{example}
To illustrate the impact of $\blambda_0$, in Fig. \ref{fig:dif_lambda_0}, we plot the first-order asymptotics $R_1(n,\varepsilon,\delta)$ as a function of the blocklength $n$ for a $2\times 2$ MIMO Rician fading channel with different $\blambda_0$, where $\varepsilon=0.01$, $\delta=0.1$ and $\bH_\rmb$ is as in Example \ref{example:rician fading}.
As observed, $R_1(n,\varepsilon,\delta)$ decreases as $\sqrt{\lambda_0}$ increases, reflecting the inverse relation between the covert transmission rate and Willie's admissible SNR, which implies that a stronger warden necessitates a stricter covertness constraint and inevitably leads to degraded covert performance. Furthermore, each value of $\lambda_0$ specifies a distinct channel uncertainty set in \eqref{eq:assume of Hw} and hence a corresponding well-defined fundamental limit.

\begin{figure}[tb]
\centering
\includegraphics[width=0.58\columnwidth]{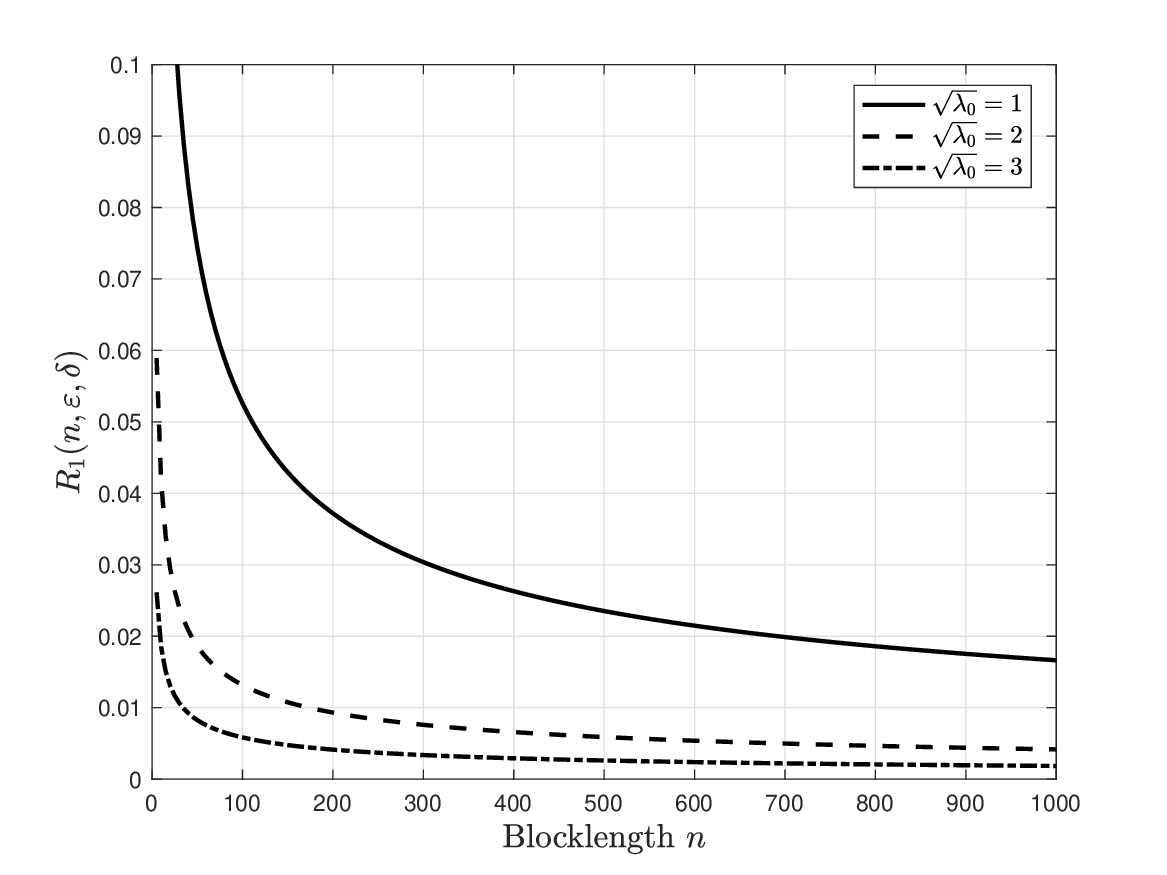}
\caption{Plot of first-order asymptotics $R_1(n,\varepsilon,\delta)$ as a function of blocklength $n$ for a $2\times 2$ MIMO Rician fading channel with different $\blambda_0$, where $\varepsilon=0.01$, $\delta=0.1$ and $\bH_\rmb$ is as in Example \ref{example:rician fading}, based on $10^6$ independent Monte--Carlo trials. As observed, $R_1(n,\varepsilon,\delta)$ decreases as $\sqrt{\lambda_0}$ increases, reflecting the inverse relationship between the covert transmission rate and Willie's admissible SNR. This behavior is also consistent with the square root law.}
\label{fig:dif_lambda_0}
\end{figure}

\begin{figure}[tb]
\centering
\includegraphics[width=0.58\columnwidth]{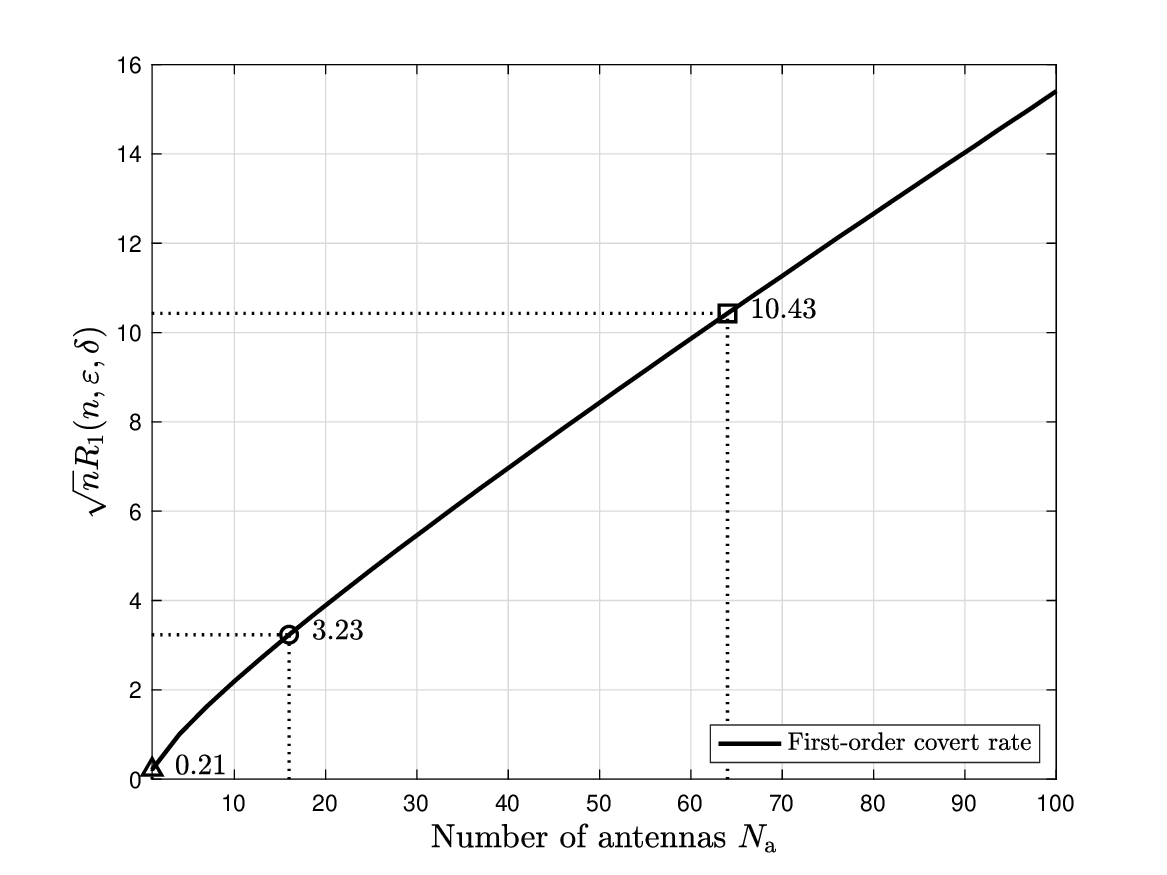}
\caption{Plot of first-order covert rate $\sqrt{n}R_1(n,\varepsilon,\delta)$ as a function of number of antennas $N_\rma$ for MIMO Rician fading channels, where $n=1000$, $\varepsilon=0.01$, $\delta=0.1$, $N_\rmb=N_\rmw=N_\rma$, $\blambda_0=\bI_{N_\rma}$ and $\bH_\rmb$ is as in Example \ref{example:rician fading}, based on $10^6$ independent Monte--Carlo trials. Increasing the number of transmit antennas significantly improves the covert transmission rate. Specifically, when $N_\rma=16$, compared with the single antenna case of $N_\rma=1$, $\sqrt{n}R_1(n,\varepsilon,\delta)$ is improved by a factor of $\frac{3.23}{0.21}\approx 15.4$; when $N_\rma=64$, the improvement factor increases to $\frac{10.43}{0.21}\approx 49.7$.}
\label{fig:dif_m}
\end{figure}

Thirdly, Theorem \ref{theorem 1} reveals the spatial gain of MIMO in improving the covert transmission rate over quasi-static fading channels. To illustrate the impact of the number of antennas, in Fig. \ref{fig:dif_m}, we plot the first-order covert rate $\sqrt{n}R_1(n,\varepsilon,\delta)$ as a function of $N_\rma$ for MIMO Rician fading channels, where $n=1000$, $\varepsilon=0.01$, $\delta=0.1$, $N_\rmb=N_\rmw=N_\rma$, $\blambda_0=\bI_{N_\rma}$ and $\bH_\rmb$ is as in Example \ref{example:rician fading}. When $N_\rma=16$, compared with the single antenna case of $N_\rma=1$, $\sqrt{n}R_1(n,\varepsilon,\delta)$ is improved by a factor of $\frac{3.23}{0.21}\approx 15.4$; when $N_\rma=64$, the improvement factor increases to $\frac{10.43}{0.21}\approx 49.7$. This result highlights the importance of adopting MIMO to improve the covert transmission rate.

\begin{figure}[tb]
\centering
\includegraphics[width=0.58\columnwidth]{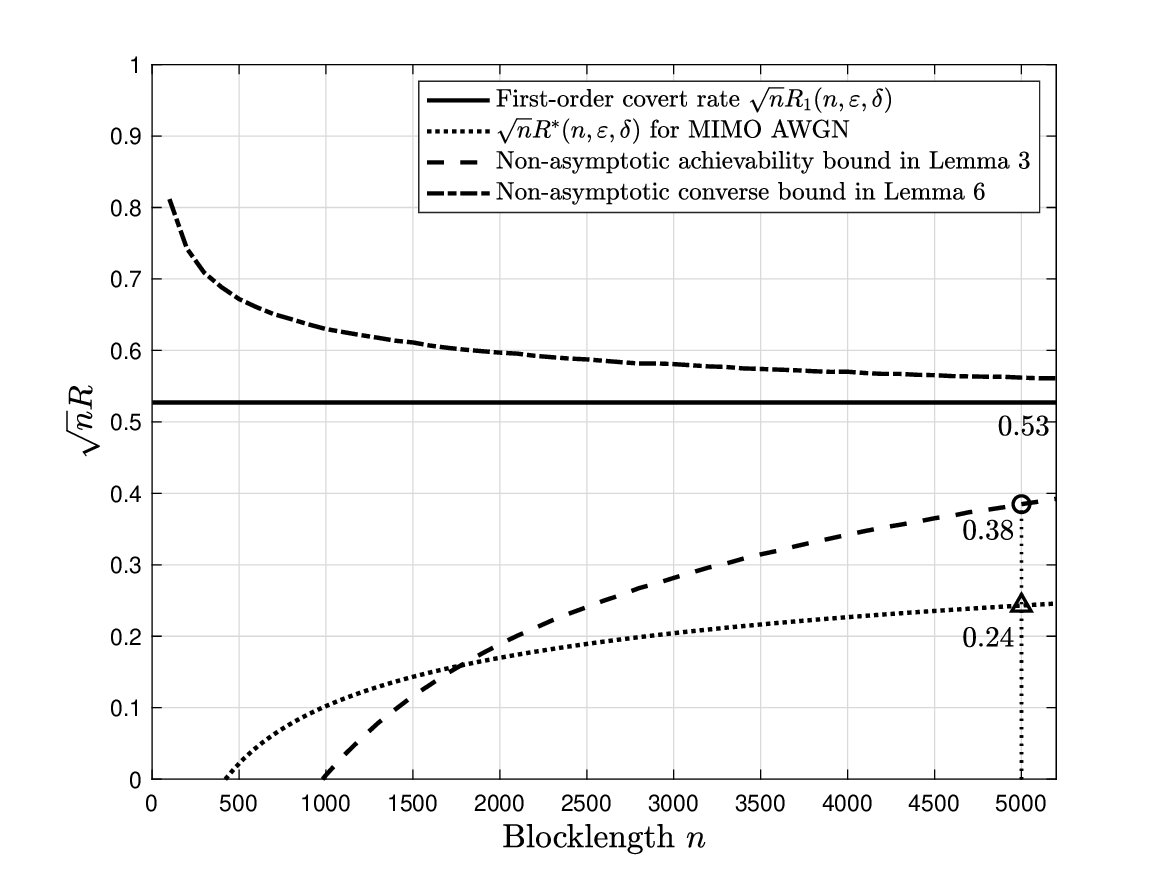}
\caption{Plot of non-asymptotic achievability and converse bounds for $\sqrt{n}R^*_\star(n,\varepsilon,\delta)$ as a function of blocklength $n$ for a $2\times 2$ MIMO Rician fading channel, where $\varepsilon=0.01$, $\delta=0.1$, $\blambda_0=\bI_2$ and $\bH_\rmb$ is as in Example \ref{example:rician fading}, based on $10^7$ independent Monte--Carlo trials. This setting yields the first-order covert rate $\sqrt{n}R_1(n,\varepsilon,\delta) = 0.53$. For comparison, the approximation for $\sqrt{n}R^*(n,\varepsilon,\delta)$ corresponding to a MIMO AWGN channel is also shown, which is reported in \cite[Theorem 1]{liu2026covertMIMO} with $a=0.53$ in \eqref{eq:theoretical benchmark awgn}. As the blocklength grows to $5000$, the achievability bound on $\sqrt{n}R^*_\star(n,\varepsilon,\delta)$ for the quasi-static MIMO fading channel can reach $\frac{0.38}{0.53}\approx 71.7\%$ of $\sqrt{n}R_1(n,\varepsilon,\delta)$, whereas that for the MIMO AWGN channel only achieves $\frac{0.24}{0.53}\approx 45.3\%$. This indicates that $R^*_\star(n,\varepsilon,\delta)$ converges rapidly to $R_1(n,\varepsilon,\delta)$ under quasi-static fading.}
\label{fig:fading_reliability}
\end{figure}

Fourthly, Theorem \ref{theorem 1} shows that the second-order term is zero, which implies that the maximal achievable rate $R^*_\star(n,\varepsilon,\delta)$ converges rapidly to the first-order asymptotics $R_1(n,\varepsilon,\delta)$ as the blocklength $n$ increases.
In non-fading channels, the finite blocklength rate loss is fundamentally driven by the random fluctuations of noise, which is reflected as the $O(n^{-\frac{3}{4}})$ rate penalty in \eqref{eq:theoretical benchmark awgn}.
However, the rate penalty disappears in the quasi-static fading case, which confirms that the reliability behavior of quasi-static fading channels remains dominated by outage even under the covertness constraint.
To illustrate the impact of quasi-static fading, in Fig. \ref{fig:fading_reliability}, we compare the non-asymptotic achievability bound in Lemma \ref{lemma:csit ach non} and the non-asymptotic converse bound in Lemma \ref{lemma:R* con non} for $\sqrt{n}R^*_\star(n,\varepsilon,\delta)$ as a function of blocklength $n$ for $2\times 2$ MIMO Rician fading channels. 
Setting $\varepsilon = 0.01$, $\delta = 0.1$, $\blambda_0 = \bI_2$, and $\bH_\rmb$ as in Example \ref{example:rician fading} yields the first-order covert rate $\sqrt{n}R_1(n,\varepsilon,\delta) = 0.53$. For comparison, Fig. \ref{fig:fading_reliability} also shows the approximation for $\sqrt{n}R^*(n,\varepsilon,\delta)$ corresponding to a MIMO AWGN channel, as reported in \cite[Theorem 1]{liu2026covertMIMO}, with $a=0.53$ in \eqref{eq:theoretical benchmark awgn}. 
For this scenario, as the blocklength grows to $5000$, the achievability bound on $\sqrt{n}R^*_\star(n,\varepsilon,\delta)$ for the quasi-static MIMO fading channel can reach $\frac{0.38}{0.53}\approx 71.7\%$ of $\sqrt{n}R_1(n,\varepsilon,\delta)$, whereas that for the MIMO AWGN channel only achieves $\frac{0.24}{0.53}\approx 45.3\%$. This indicates that $R^*_\star(n,\varepsilon,\delta)$ converges rapidly to $R_1(n,\varepsilon,\delta)$ under quasi-static fading.
Note that the covertness constraint forces the transmit power to decay with blocklength, leading to a fundamentally slower convergence rate compared with the non-covert case.

Furthermore, to illustrate the benefit of CSIT, the corollary below shows that Theorem~\ref{theorem 1} can be improved by optimizing power allocation based on CSIT. Define the following random variable:
\begin{align}
K'(\bLambda_\rmb)&:=\frac{\sqrt{2}}{\lambda_0}
\sqrt{\sum_{j\in[N_\rma]}\Lambda_{\rmb,j}^2}.
\label{eq:K' def}
\end{align}
\begin{corollary}
\label{corollary:CSIT gain}
When CSIT is available, the first-order coefficient
$\kappa_\varepsilon$ under equal power allocation in Theorem~\ref{theorem 1} can be improved to
\begin{align}
\kappa_\varepsilon':=\inf\Big\{k\in\bbR:\Pr\big\{K'(\bLambda_\rmb)\leq k\big\}\geq\varepsilon\Big\}.
\end{align}
\end{corollary}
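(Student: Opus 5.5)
The plan is to reuse the achievability argument behind Theorem~\ref{theorem 1}, changing only the covariance of the Gaussian input. Because CSIT is available, Alice can pick a precoding covariance that depends on $\bH_\rmb$. Write $\bH_\rmb=\bL_\rmb\bLambda_\rmb\bV_\rmb^\rmH$ as in \eqref{eq:svd Hb}. Alice sends $\bX_i=\bS_i\bP^{1/2}\bL_\rmb^\rmH$ with $\bS_i$ i.i.d.\ $\calCN(\mathbf{0},\bI_{N_\rma})$ and a diagonal power matrix $\bP=\diag(p_1,\ldots,p_{N_\rma})$, where each $p_j=\Theta(n^{-1/2})$ is a function of $\bLambda_\rmb$. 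This diagonalizes Bob's channel into $N_\rma$ parallel sub-channels with gains $\Lambda_{\rmb,j}$.

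\textbf{Step 1 (covertness).} Apply the quasi-$\eta$-neighborhood covertness analysis of Theorem~\ref{theorem 1} with the covariance $\bL_\rmb\bP\bL_\rmb^\rmH$ in place of a scaled identity. For the worst-case warden $\bh_{\rmw,0}$ in \eqref{eq:def of hw0}, we have $\bh_{\rmw,0}^\rmH\bL_\rmb\bP\bL_\rmb^\rmH\bh_{\rmw,0}$, whose nonzero eigenvalues are $\lambda_0p_j$. The per-letter KL divergence between $\calCN(\mathbf{0},\bI+\bSigma)$ and $\calCN(\mathbf{0},\bI)$ is $\tr\bSigma-\log\det(\bI+\bSigma)=\tfrac12\tr\bSigma^2+O(\|\bSigma\|^3)$. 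The covertness constraint therefore becomes
\[
\frac{n\lambda_0^2}{2}\sum_{j\in[N_\rma]}p_j^2+O(n^{-1/2})\le\delta .
\]
A key point is that $\bP$ is unitarily rotated by $\bL_\rmb$. I must check that the worst case over $\calH_\rmw$ is still governed only by $\|\bh_\rmw\|_2\le\sqrt{\lambda_0}$. This holds because $\tr\big((\bh_\rmw^\rmH\bQ\bh_\rmw)^2\big)\le\lambda_0^2\tr(\bQ^2)$ for any PSD $\bQ$, so the bound is uniform over $\calH_\rmw$.

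\textbf{Step 2 (reliability and outage).} Use the refined reliability analysis from the proof of Theorem~\ref{theorem 1}, in its non-asymptotic form of Lemma~\ref{lemma:csit ach non}, with the power on sub-channel $j$ equal to $p_j$. In the low-power regime the mutual information per channel use is $\sum_j\log(1+p_j\Lambda_{\rmb,j})=\sum_j p_j\Lambda_{\rmb,j}+O(n^{-1})$. Because the second-order term vanishes under quasi-static fading, the achievable rate is the $\varepsilon$-quantile of this quantity, up to $O(\frac{\log n}{n})$.

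\textbf{Step 3 (optimization).} For each realization of $\bLambda_\rmb$, maximize $\sum_jp_j\Lambda_{\rmb,j}$ subject to $\sum_jp_j^2\le\frac{2\delta}{n\lambda_0^2}$. By Cauchy--Schwarz, the maximizer is $p_j\propto\Lambda_{\rmb,j}$, which is matched-filter rather than water-filling allocation. The optimal value is
\[
\sqrt{\tfrac{2\delta}{n}}\,\frac{1}{\lambda_0}\sqrt{\sum_j\Lambda_{\rmb,j}^2}=\sqrt{\tfrac{\delta}{n}}\,K'(\bLambda_\rmb).
\]
Taking the $\varepsilon$-quantile gives the rate $\kappa'_\varepsilon\sqrt{\delta/n}+O(\frac{\log n}{n})$.

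Finally, I will show that this is indeed an improvement: $\kappa'_\varepsilon\ge\kappa_\varepsilon$. By Cauchy--Schwarz,
\[
\sqrt{N_\rma}\sqrt{\sum_j\Lambda_{\rmb,j}^2}\ge\sum_j\Lambda_{\rmb,j},
\]
so $K'(\bLambda_\rmb)\ge K(\bLambda_\rmb)$ pointwise, and hence $\kappa'_\varepsilon\ge\kappa_\varepsilon$. Equal allocation $p_j\equiv p$ is one feasible choice in the optimization, which recovers $\kappa_\varepsilon$.

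\textbf{Main obstacle.} The hardest step is Step 2 with a channel-dependent, non-identity covariance. The Cramér--Esseen and higher-order moment controls in the proof of Theorem~\ref{theorem 1} must hold uniformly over realizations of $\bLambda_\rmb$. Near the outage boundary, the normalized powers $p_j\sqrt n$ could become degenerate, for instance when some $\Lambda_{\rmb,j}\to0$. I would first truncate, replacing $p_j$ by $\max\{p_j,n^{-1/2}\eta\}$ or restricting to a compact set of $\bLambda_\rmb$ that has probability $1-O(n^{-1})$. I would then argue that this perturbs the covert outage probability only by a term absorbed in $O(\frac{\log n}{n})$, using the same smoothness of the covert outage probability exploited in \eqref{eq:F' final-3}.
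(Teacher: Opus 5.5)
\textbf{What matches the paper.} Your achievability half is essentially the paper's own argument in Appendix~\ref{subsubapx:ach opt power}. It uses the same ingredients:
\begin{itemize}
\item precoding with $\bL_\rmb^\rmH$ to split Bob's channel into SVD sub-channels;
\item the per-realization covertness budget $\frac{n\lambda_0^2}{2}\sum_j p_j^2\le\delta$;
\item the angle-threshold bound of Lemma~\ref{lemma:csit ach non} with sub-channel powers $p_j$;
\item the Cauchy--Schwarz allocation $p_j\propto\Lambda_{\rmb,j}$ inside the outage probability;
\item the pointwise comparison $K'(\bLambda_\rmb)\ge K(\bLambda_\rmb)$.
\end{itemize}
Your check that the rotated covariance $\bL_\rmb\bP\bL_\rmb^\rmH$ is still controlled uniformly over $\calH_\rmw$ by $\lambda_0$ is correct, and the paper skips it. One misattribution in your ``main obstacle'': the delicate estimates in this half are the density bounds behind Lemma~\ref{lemma:18ach} and $F'_\rmout=O(\sqrt{n})$. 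Cram\'er--Esseen appears only in the converse.

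\textbf{The gap: there is no converse.} As the paper proves it, the corollary identifies $\kappa'_\varepsilon$ as \emph{the} first-order coefficient with CSIT. Appendix~\ref{subsubapx:con opt power} gives a matching converse, even under CSIRT. Your argument shows only that $\kappa'_\varepsilon\sqrt{\delta/n}+O(\frac{\log n}{n})$ is achievable with CSIT, and that $\kappa'_\varepsilon\ge\kappa_\varepsilon$. Step~3 optimizes only within your Gaussian, SVD-diagonal ensemble, so nothing rules out a CSIT code with a larger first-order coefficient.

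To close this, follow the paper's route in three steps.
\begin{enumerate}
\item Define per-sub-channel average powers $\Psi_{\rma,j}(n)$ in the SVD coordinates. Rerun the maximum-entropy/Jensen argument of Lemma~\ref{lemma:converse power} coordinate-wise for $\bh_{\rmw,0}$. This gives $\frac{n\lambda_0^2}{2\omega}\sum_j\Psi_{\rma,j}^2(n)\le\delta$ for every covert code.
\item Run the CSIRT meta-converse and its Cram\'er--Esseen analysis (Lemmas~\ref{lemma:R* con non} and~\ref{lemma:csirt con reliability}) with auxiliary output $\prod_j\calCN(\mathbf{0},(1+\Lambda_{\rmb,j}\Psi_{\rma,j}(n))\bI_n)$. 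This bounds the rate by the $\varepsilon$-outage rate of $\max_{\bPsi_\rma(n)\in\calV_{\rma,n}}\sum_j\log(1+\Lambda_{\rmb,j}\Psi_{\rma,j}(n))$.
\item Reuse your Cauchy--Schwarz step, now as an upper bound, with $\omega=1+\frac{1}{n}$.
\end{enumerate}

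In step 1, state explicitly that the covertness constraint holds for each realization of $\bH_\rmb$. The paper does this implicitly by placing $\calV_{\rma,n}$ inside the outage probability. For your achievability this is only a sufficient condition, by convexity of the KL divergence, but the converse depends on it. If only Willie's $\bH_\rmb$-averaged divergence were constrained, Alice could stay silent whenever $K'(\bLambda_\rmb)<\kappa'_\varepsilon$. She could then raise the powers in the remaining states by a factor of about $(1-\varepsilon)^{-1/2}$, which beats $\kappa'_\varepsilon$.
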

The proof of Corollary~\ref{corollary:CSIT gain} is provided in Appendix~\ref{appendix:cor1}. In particular, we analyze the achievability bound under CSIT and the converse bound under CSIRT by introducing the power allocation vector and optimizing the transmit power over the effective sub-channels. Similar to the proof of Theorem~\ref{theorem 1}, we consider a random $\bH_\rmb$ and the worst-case channel $\bh_{\rmw,0}$ defined in \eqref{eq:def of hw0}. The randomness of $\bH_\rmb$ can likewise be incorporated into the characterization of the subspace angle distribution under CSIT (cf.~\eqref{eq:def of Tj opt}--\eqref{eq:sin2 pr 1-a opt}).

Compared with equal power allocation in Theorem~\ref{theorem 1}, optimal covert power allocation with CSIT in Corollary~\ref{corollary:CSIT gain} is obtained by maximizing the reliability lower bound subject to the KL divergence covertness constraint, resulting in a power allocation that favors directions with stronger legitimate receiver's channel gains.
Specifically, $\kappa_\varepsilon$ in Theorem~\ref{theorem 1} depends on $\frac{1}{\sqrt{N_\rma}}\sum_{j\in[N_\rma]}\Lambda_{\rmb,j}$, whereas $\kappa_\varepsilon'$ in Corollary~\ref{corollary:CSIT gain} depends on $\sqrt{\sum_{j\in[N_\rma]}\Lambda_{\rmb,j}^2}$. 
By the Cauchy--Schwarz inequality, $K'(\bLambda_\rmb)\ge K(\bLambda_\rmb)$ with equality for a given realization $\bLambda_\rmb=\blambda_\rmb$ if and only if $\lambda_{\rmb,1}=\ldots=\lambda_{\rmb,N_\rma}$, and consequently, $\kappa_\varepsilon'\ge\kappa_\varepsilon$ (cf. \eqref{eq:csit_opt_cs_1}). Furthermore, we would like to emphasize that our power allocation is significantly different from water-filling in \cite{yang2014quasi} without the covertness constraint. Note that water-filling power allocation is designed solely under a total power constraint and concentrates power non-uniformly across transmit directions. With the same total transmit power, water-filling power concentration generally increases the KL divergence between Willie's output distributions under states $\rmS_1$ and $\rmS_0$, potentially violating the covertness constraint~\cite[Section III]{wang_covert_2021}.
The example below illustrates the difference between the three allocations in terms of power allocation results, covertness guarantee under KL divergence and first-order asymptotic covert rate.

\begin{figure}[tb]
\centering
\includegraphics[width=1\columnwidth]{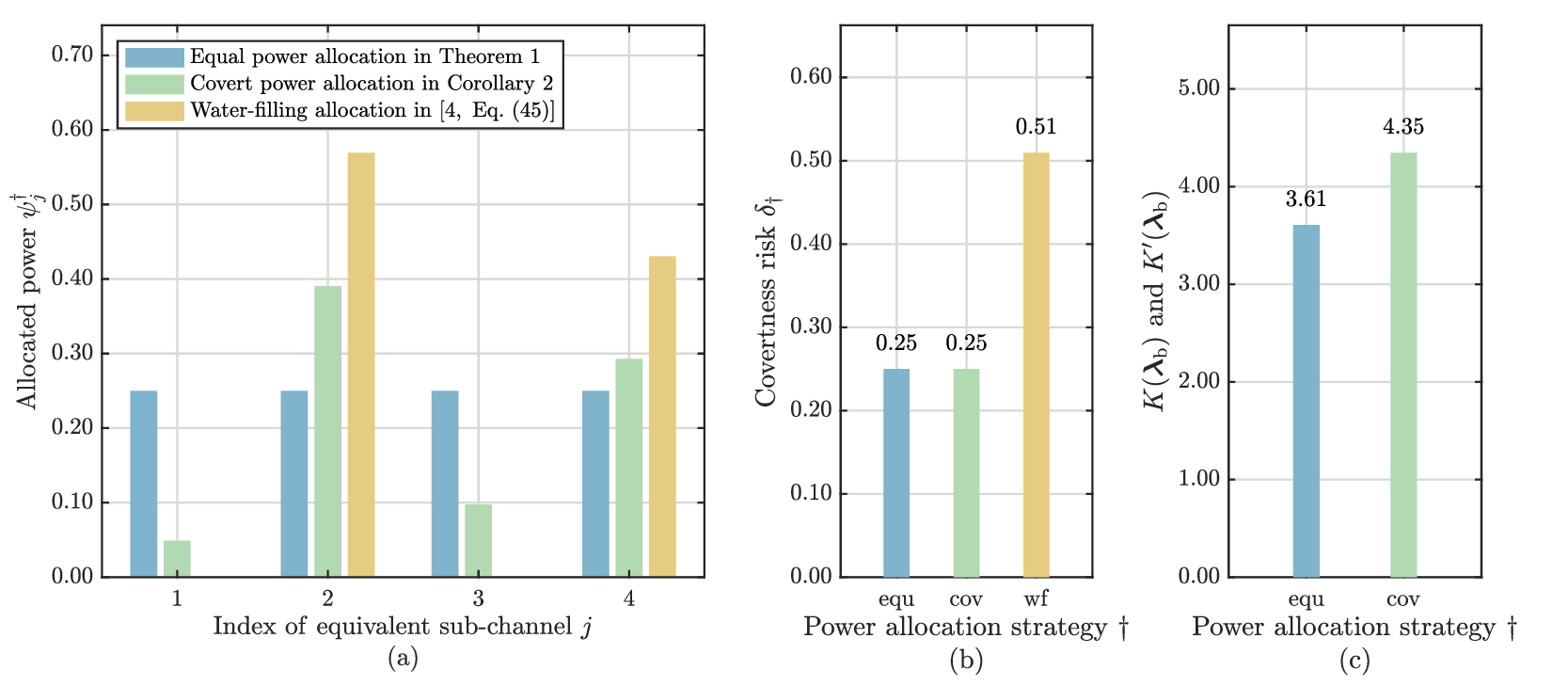}
\caption{Comparison of equal power allocation in Theorem~\ref{theorem 1}, covert power allocation in Corollary~\ref{corollary:CSIT gain} and water-filling power allocation in \cite[Eq.~(45)]{yang2014quasi} for $\bLambda_{\rmb}=(0.3,2.4,0.6,1.8)$ and $\Psi=1$ in Example~\ref{example:wf}. Specifically, subfigure (a) plots the allocated powers across the four equivalent sub-channels where $(\psi^\mathrm{equ}_1,\psi^\mathrm{equ}_2,\psi^\mathrm{equ}_3,\psi^\mathrm{equ}_4)=(0.25,0.25,0.25,0.25)$, $(\psi^\mathrm{cov}_1,\psi^\mathrm{cov}_2,\psi^\mathrm{cov}_3,\psi^\mathrm{cov}_4)=(0.05,0.39,0.10,0.29)$, and $(\psi^\mathrm{wf}_1,\psi^\mathrm{wf}_2,\psi^\mathrm{wf}_3,\psi^\mathrm{wf}_4)=(0,0.57,0,0.43)$; subfigure (b) plots the corresponding covertness risk $\delta_\mathrm{equ}=\delta_\mathrm{cov}=0.25$ and $\delta_\mathrm{wf}\approx 0.51$ under the KL divergence covertness metric in \eqref{eq:Pc def kl}, showing that water-filling increases the covertness risk by approximately 104\% relative to the other two allocations under the same total power constraint; subfigure (c) plots first-order asymptotic covert rates $K(\blambda_\rmb)\approx 3.61$ and $K'(\blambda_\rmb)\approx 4.35$, and demonstrates the benefit of CSIT.}
\label{fig:waterfilling}
\end{figure}

\begin{example}
\label{example:wf}
Consider a deterministic MIMO channel with four equivalent sub-channels, where $\blambda_\rmb=(0.3,2.4,0.6,1.8)$ with a total power constraint $\Psi=1$.
Equal power allocation in Theorem~\ref{theorem 1} gives $(\psi^\mathrm{equ}_1,\psi^\mathrm{equ}_2,\psi^\mathrm{equ}_3,\psi^\mathrm{equ}_4)=(0.25,0.25,0.25,0.25)$, covert power allocation in Corollary~\ref{corollary:CSIT gain} gives $(\psi^\mathrm{cov}_1,\psi^\mathrm{cov}_2,\psi^\mathrm{cov}_3,\psi^\mathrm{cov}_4)=(0.05,0.39,0.10,0.29)$ (cf. \eqref{eq:csit_opt_psi_star}), and water-filling allocation in \cite[Eq.~(45)]{yang2014quasi} gives $(\psi^\mathrm{wf}_1,\psi^\mathrm{wf}_2,\psi^\mathrm{wf}_3,\psi^\mathrm{wf}_4)=(0,0.57,0,0.43)$. For any $\dagger\in\{\mathrm{equ}, \mathrm{cov}, \mathrm{wf}\}$, define $\delta_\dagger:=\sum_{j\in[4]}(\psi_j^\dagger)^2$ as the covertness risk under the KL divergence covertness metric in \eqref{eq:Pc def kl} (cf. \eqref{eq:optimal constraint} and \eqref{eq:KL divergence G 7}). We obtain $\delta_\mathrm{equ}=\delta_\mathrm{cov}=0.25$ and $\delta_\mathrm{wf}\approx 0.51$. Thus, under the same total power constraint, water-filling increases the covertness risk by approximately 104\% relative to the other two allocations. Furthermore, \eqref{eq:c1} and \eqref{eq:K' def} yield $K(\blambda_\rmb)\approx 3.61$ and $K'(\blambda_\rmb)\approx 4.35$, respectively, demonstrating the benefit of CSIT.

For comparison, Fig.~\ref{fig:waterfilling} plots the three power allocation strategies described above. Specifically, subfigure (a) shows the allocated powers across the four equivalent sub-channels; subfigure (b) shows the corresponding covertness risk $\delta_\dagger$ for $\dagger\in\{\mathrm{equ},\mathrm{cov},\mathrm{wf}\}$ in terms of KL divergence metric in \eqref{eq:optimal constraint} and \eqref{eq:KL divergence G 7}; subfigure (c) compared the first-order asymptotic covert rates $K(\blambda_\rmb)$ and $K'(\blambda_\rmb)$ in \eqref{eq:c1} and \eqref{eq:K' def}, respectively.
\end{example}

\section{Achievability Proof for Quasi-Static MIMO Fading Channels}
\label{section:ach csit}

This section presents the achievability proof of Theorem \ref{theorem 1}. Specifically, Section \ref{subsec:ach csit coding} presents a random coding scheme using truncated complex Gaussian distributions, and introduces angle threshold decoding for Bob and BHT detection scheme for Willie; Section \ref{subsec:ach csit covertness} analyzes the covertness constraint and yields an upper bound for the transmit power; Section \ref{subsec:ach csit reliability} analyzes the reliability part and yields a lower bound on the transmission rate; finally, Section \ref{subsec:ach csit final} wraps the above analyses and yields the desired result.

\subsection{Coding Scheme}
\label{subsec:ach csit coding}
\subsubsection{Codebook Generation and Encoding}
\label{subsubsec:ach csit coding codebook_encoding}
We use a truncated complex Gaussian codebook, where each codeword  lies within a thin spherical shell. To define the codeword distribution, let $\rho\in(0,1)$ be a constant and $\Psi(n)=O(\frac{1}{\sqrt{n}})$ be a decreasing function of blocklength $n$, both which are specified later. Fix $r\in\bbR_+$ and let $\calB_0^n(r)$ be the $n$-dimensional sphere with center $\bf0$ and radius $r$, i.e., 
\begin{align}
\calB_0^n(r):=\{x^n\in\bbC^n:\|x^n\|\leq r\}.
\end{align}
Recall that $\calX\subseteq\bbC$.

\begin{definition}
\label{def:distributions}
Define the following pdf $\Pi_{X^n}^\rmG$ of complex Gaussian: for any $x^n\in\calX^n$, 
\begin{align}\label{eq:G}
\Pi_{X^n}^\rmG(x^n):=\frac{\exp(-(\rho\Psi(n))^{-1}\|x^n\|^2)}{(\pi\rho\Psi(n))^{n}}.
\end{align}	
For any $x^n\in\calX^n$, the pdf $\Pi_{X^n}^\rmtG$ for truncated complex Gaussian satisfies that
\begin{align}\label{eq:tG}
\Pi_{X^n}^\rmtG(x^n):=
\begin{cases}
\displaystyle\frac{1}{\Delta_n }\Pi_{X^n}^\rmG(x^n)
&\mathrm{if}~\sqrt{\rho^2n\Psi(n)}\leq\|x^n\|\le\sqrt{n\Psi(n)} \\*
~0
& \mathrm{otherwise},
\end{cases} 
\end{align}
where the normalized coefficient $\Delta_n$ is given by
\begin{align}\label{eq:def Delta}
\Delta_n =\Pr_{\Pi^\rmG_{X^n}}\Big\{X^n\in\big(\calB_0^n(\sqrt{n\Psi(n)})\setminus \calB_0^n(\sqrt{\rho^2n\Psi(n)})\big)\Big\}.
\end{align} 
\end{definition}

Given any $M\in\bbN$, a random codebook $\{\bX^n(w)\}_{w\in[M]}$ is generated such that each codeword $\bX^n(w)\in\calX^{n\times N_\rma}$ is independently drawn from $\Pi_{\bX^n}^{\rmtG}:=\prod_{j\in[N_\rma]} \Pi_{X^n}^{\rmtG}(X^n_j)$. 
As we shall show later, the truncated complex Gaussian codebook enables us to confuse Willie and achieves reliable covert communication. 

\subsubsection{Angle Threshold Decoding}
\label{subsubsec:ach csit coding angle_decoding}

We apply angle threshold decoding, which is a natural approach when CSI $\bH_\rmb$ is not available at the receiver \cite[Section IV-A]{yang2014quasi}. 
As the blocklength $n\to\infty$, the noise is approximately orthogonal to the transmitted codeword, so the received signal lies close to the subspace of the transmitted codeword. Consequently, the angle between the subspace of the transmitted codeword and that of the received signal is significantly smaller than the corresponding angle for any other codeword.
The angle between two subspaces can be characterized by a set of principal angles  \cite[Definition 5]{yang2014quasi}, \cite[Page 289]{absil2006largest}.

Fix $(a,b)\in\bbN^2$. 
Let $\dim(\cdot)$ be the dimension of a subspace, $\calA$ and $\calB$ be subspaces in $\bbC^n$ with dimensions $a:=\dim(\calA)$ and $b:=\dim(\calB)$, where $a \leq b$.
The principal angles between $\calA$ and $\calB$ are defined as
$0\leq\theta_1\leq\ldots\leq\theta_a\leq\frac{\pi}{2}$, where $a$ is determined by the number of linearly independent directions in the subspace of smaller dimension.
It follows from \cite[Page 289]{absil2006largest} that for $j\in[a]$,
\begin{align}\label{7-decoder siso long}
\cos\theta_j:=\max_{\substack{
\ba\in\calA,\bb\in\calB:~\|\ba\|=\|\bb\|=1,\\
\ba^\rmH\ba_{j'}=\bb^\rmH\bb_{j'}=0,~j'\in[j-1]
}}
\ba^\rmH\bb.
\end{align}
It follows from \cite[Eq.~(40)]{yang2014quasi} that the sine value of the angle between subspaces $\calA$ and $\calB$ is defined as
\begin{align}\label{7 1:sin def}
\sin(\calA,\calB):=\prod_{j\in[a]}\sin\theta_j.
\end{align}
To illustrate the concept of principal angles, Fig. \ref{fig:principal angles r2} plots the principal angles between two $2$-dimensional subspaces $\calA$ and $\calB$ in $\bbR^3$, where the black vectors form the first principal angle $\theta_1=0$ due to the nontrivial intersection of the two subspaces in $\bbR^3$, and the red vectors form the second principal angle $\theta_2>0$ and are orthogonal to the black vectors.
Figuratively, the principal angles represent the angles between the closest vectors of the two subspaces along each direction. Thus, \eqref{7 1:sin def} can be used to measure the similarity between the subspaces across all directions, and the smaller $\sin(\cdot)$ is, the closer the two subspaces are.
Let the columns of $\tilde{\ba}$ and $\tilde{\bb}$ be orthonormal bases of $\calA$ and $\calB$, respectively. The cosine values of the principal angles $\{\cos\theta_j\}_{j\in[a]}$ defined in \eqref{7-decoder siso long} are equal to the singular values of $\tilde{\ba}^\rmH \tilde{\bb}$ \cite[Page 289]{absil2006largest}.
\begin{figure}[tb]
\centering
\includegraphics[width=0.47\columnwidth]{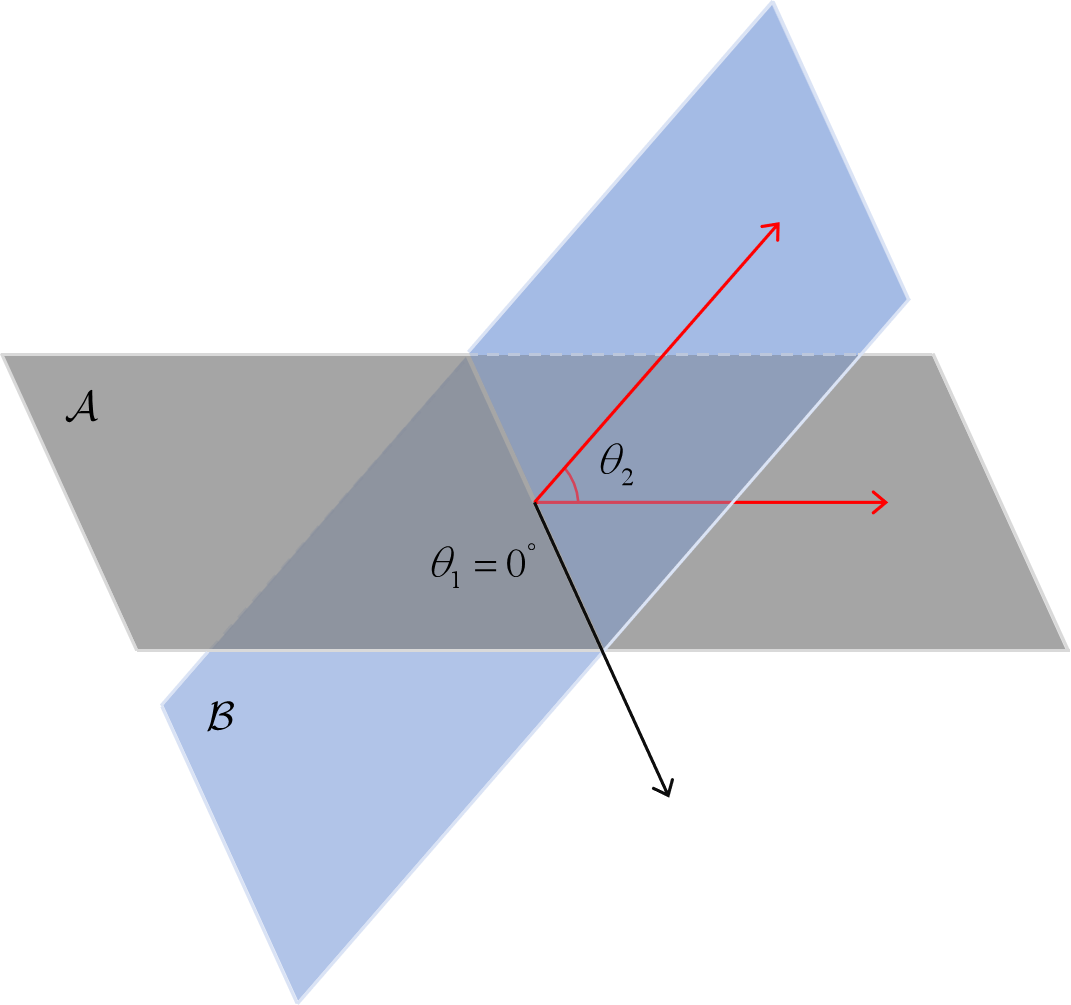}
\caption{Plot of principal angles between two $2$-dimensional subspaces $\calA$ and $\calB$ in $\bbR^3$. The black vectors form the first principal angle $\theta_1=0$ due to the nontrivial intersection of the two subspaces in $\bbR^3$. The red vectors form the second principal angle $\theta_2>0$ and are orthogonal to the black vectors.}
\label{fig:principal angles r2}
\end{figure}

Recall from \eqref{eq:yb channel} that Bob receives the noisy channel output $\bY_\rmb^n=\bX^n(W)\bH_\rmb +\bZ_\rmb^n$. Recall that, for a matrix, $\spn(\cdot)$ denotes the subspace spanned by its column vectors. Given ${\bY}_\rmb^n$, Bob applies the sequential angle threshold decoding to estimate the message as $\hatW$. Specifically, there is an associated threshold $\gamma\in\bbR_+$ for codewords, and the decoder computes the angle between the subspaces, and selects the first codeword that satisfies $\sin^2(\spn(\bX^n(w)),\spn(\bY_\rmb^n))\leq\gamma$.
It yields a tight achievability bound at finite blocklength if the threshold $\gamma$ is chosen appropriately \cite[Section IV]{yang2014quasi}.

Recall that random variables are in capital and their realizations are in lower case. The example below illustrates the principal angles and demonstrates the effectiveness of angle threshold decoding.
\begin{figure}[tb]
\centering
\includegraphics[width=0.58\columnwidth]{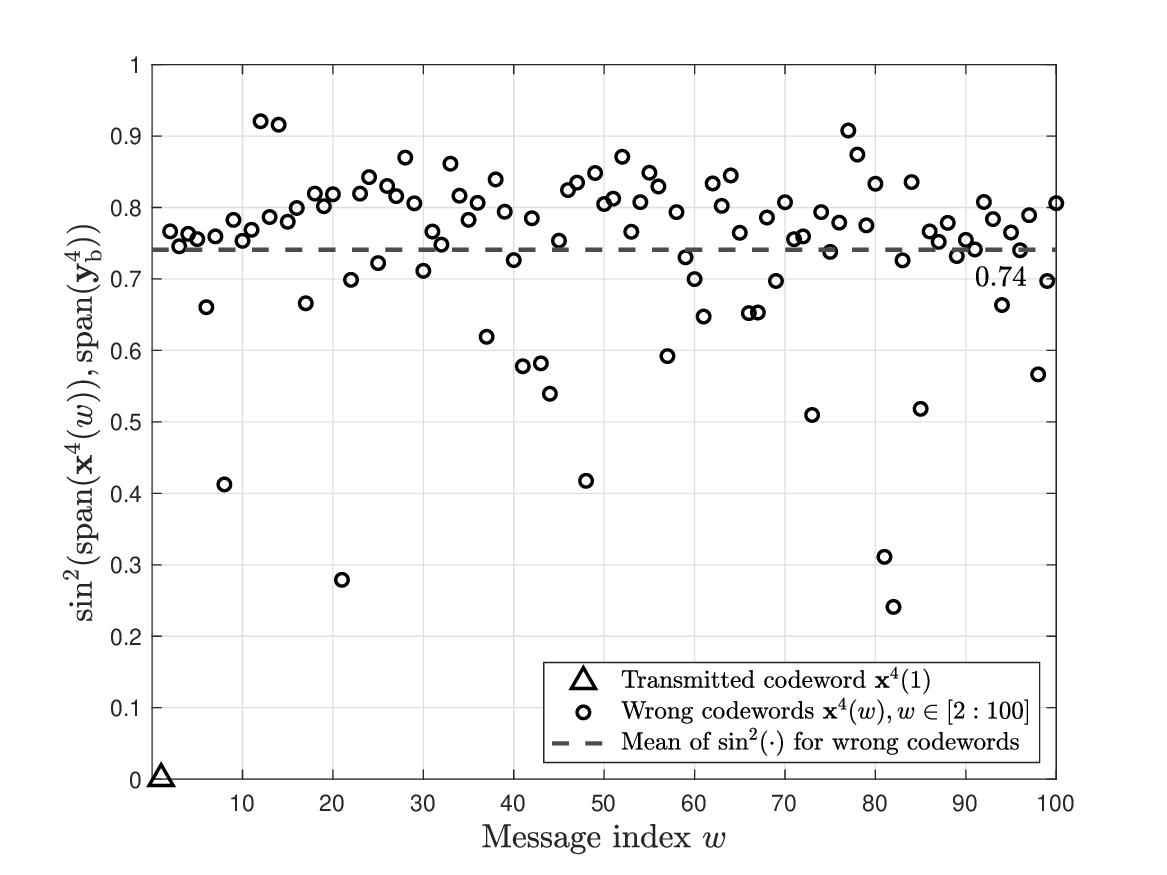}
\caption{Plot of $\sin^2(\spn(\bx^4(w)), \spn(\by_\rmb^4))$ for the transmitted codeword $\bx^4(1)$ and other candidate codewords ${\bx^4(w)}_{w\in[2:100]}$, whose entries are i.i.d. $\calCN(0,2.125)$. The transmitted codeword, shown as a triangle marker, achieves a significantly smaller value $1\times 10^{-3}$ in~\eqref{eq:sin2 theta1 atd} than the wrong codewords, which are shown as circles and whose mean $\sin^2(\cdot)$ is approximately $0.74$. This demonstrates the effectiveness of angle threshold decoding.}
\label{fig: decoding example}
\end{figure}
\begin{example}[Angle Threshold Decoding]
Set $n=4$, $N_\rma=2$. Let the transmitted codeword $\bx^4(1)\in\bbC^{4\times 2}$ and the channel matrix $\bh_\rmb\in\bbC^{2\times 2}$ be
\begin{align}
\bx^4(1) =
\begin{bmatrix}
0 & 2 \\* 1-2\imath & 1 \\* 2 & 1-\imath \\* 0 & -1
\end{bmatrix}, \quad
\bh_\rmb =
\begin{bmatrix}
0.9+0.2\imath & 0.3-0.1\imath \\* -0.2+0.1\imath & 0.8-0.2\imath
\end{bmatrix},\nn
\end{align}
respectively. Given a particular realization $\bz_\rmb\in\bbC^{4\times 2}$ with i.i.d. $\calCN(0,1)$ entries, the channel output $\by_\rmb\in\bbC^{4\times 2}$ is
\begin{align}
\by_\rmb^4=
\begin{bmatrix}
-0.35 - 0.27\imath & 2.39 - 0.19\imath \\* 1.77 - 1.66\imath & 1.13 - 1.30\imath \\* 1.46 + 0.11\imath & 0.91 - 0.82\imath \\* 0.23 - 0.19\imath & -1.25 + 0.53\imath
\end{bmatrix}.\nn
\end{align}
To obtain the principal angles, the basis matrices $\bx^4(1)$ and $\by_\rmb^4$ are orthonormalized as
\begin{align}
\tilde{\bx}^4 =
\begin{bmatrix}
0 & -0.76 \\* -0.33+0.67\imath & -0.25-0.25\imath \\* -0.67 & -0.13+0.38\imath \\* 0 & 0.38
\end{bmatrix}, \quad
\tilde{\by}^4_\rmb =
\begin{bmatrix}
-0.12 - 0.09\imath & -0.83 + 0.03\imath \\* 0.61 - 0.58\imath & -0.15 + 0.07\imath \\* 0.51 + 0.04\imath & -0.05 + 0.22\imath \\* 0.08 - 0.07\imath & 0.43 - 0.21\imath
\end{bmatrix},\nn
\end{align}
respectively.
The projection matrix corresponding to the two orthonormal bases is given by
\begin{align}
(\tilde{\bx}^4)^\rmH\tilde{\by}^4_\rmb =
\begin{bmatrix}
-0.93 - 0.24\imath & 0.13 - 0.07\imath \\* 0.06 + 0.15\imath & 0.90 - 0.17\imath
\end{bmatrix}. \nn
\end{align}
The SVD of $(\tilde{\bx}^4)^\rmH\tilde{\by}^4_\rmb$ yields singular values $0.998$ and $0.898$, which are cosine values of principal angles. Therefore, the subspaces $\spn(\bx^4(1))$ and $\spn(\by^4_\rmb)$ have principal angles $\theta_1\approx3.97^\circ$ and  $\theta_2\approx26.17^\circ$, and
\begin{align}
\label{eq:sin2 theta1 atd}
\sin^2(\spn(\bx^4(1)),\spn(\by^4_\rmb))=\sin \theta_1 \times \sin \theta_2 \approx 1\times 10^{-3}.
\end{align}
For comparison, Fig.~\ref{fig: decoding example} plots
$\sin^2(\spn(\bx^4(w)), \spn(\by_\rmb^4))$
for candidate codewords $\{\bx^4(w)\}_{w\in[2:100]}$, whose entries are i.i.d. as $\calCN(0,2.125)$.
The transmitted codeword $\bx^4(1)$, represented by a triangle marker, achieves a significantly smaller value $1\times 10^{-3}$ in~\eqref{eq:sin2 theta1 atd} than the wrong codewords, which are represented by circles, and the mean of $\sin^2(\cdot)$ for wrong codewords is approximately $0.74$. This demonstrates the effectiveness of angle threshold decoding.
\end{example}

\subsubsection{Covertness Metric}
\label{subsubsec:ach csit coding bht}
Recall from \eqref{eq:yw channel} that Willie receives the noisy channel output $\bY_\rmw^n=\bX^n(W)\bh_\rmw +\bZ_\rmw^n$. Since the warden knows $\bh_\rmw$ and $\rank(\bh_\rmw)=N_\rma$, Willie can post-process the observation $\bY_\rmw^n$ to obtain ${\bY'}_\rmw^n\in\calY_\rmw^{n\times N_\rma}$ without loss of detection performance. With a slight abuse of notation, we reuse ${\bY}_\rmw^n$ to represent ${\bY'}_\rmw^n$.
Given $\bY_\rmw^n$, Willie uses BHT to determine whether Alice is communicating with Bob. By applying the Neyman--Pearson Lemma \cite[Appendix B]{polyanskiy_channel_2010}, the optimal approach for Willie to minimize the detection error probability is the likelihood ratio test, whose performance is characterized by the sum of Type--I and Type--II error probabilities. Recall from Section \ref{sec:II-B} that we adopt KL divergence as the covertness metric, as it is closely related to Willie’s error probabilities. Let the distributions of Willie's output $\bY_\rmw^n$ induced under states $(\rmS_1,\rmS_0)$ be denoted by $(Q_{\bY_\rmw^n}^\rmtG,Q_{\bZ_\rmw^n})$, which correspond to $(Q_1^{(n)},Q_0^{(n)})$ defined in Section \ref{sec:II-B}, respectively. In this case, the covertness constraint in \eqref{eq2:both constraints} is equivalent to
\begin{align}
\label{eq4:covertness constraint}
\bbD(Q^\rmtG_{\bY^n_\rmw}\| Q_{\bZ_\rmw^n})\leq \delta.
\end{align}

\begin{table}[tb]
\centering
\scriptsize
\caption{Commonly Used Notation}
\label{tab:notations}
\renewcommand{\arraystretch}{1.2}
\begin{tabular}{cc}
\hline
Symbols & Meaning\\*
\hline
$N_\rma$ & The number of Alice's antennas \\*
$\rho$ & A truncation parameter with value in $(0,1)$ \\*
$\Psi(n)$ & A power scaling function of order $O(\frac{1}{\sqrt{n}})$ \\*
$\bH_\rmb$ & Channel matrix from Alice to Bob with $\rank(\bH_\rmb)=N_\rma$ \\*
$\bh_\rmw$ & Channel matrix from Alice to Willie with $\rank(\bh_\rmw)=N_\rma$ \\*
$\bLambda_\rmb$ & Singular values of $\bH_\rmb$ with $\bLambda_\rmb=\diag(\{\sqrt{\Lambda_{\rmb,j}}\}_{j\in[N_\rma]})$ \\*
$\blambda_0$ & The upper bound on channel gains of $\bh_\rmw$ with $\blambda_0=\sqrt{\lambda_0}\bI_{N_\rma}$ \\*
$\Pi_{\bX^n}^\rmG$ & Complex Gaussian distribution with \eqref{eq:G} \\*
$\Pi_{\bX^n}^\rmtG$ & Truncated complex Gaussian distribution with \eqref{eq:tG} \\*
$Q_{\bY_\rmw^n}^\rmtG$ & Channel output distribution induced by $\Pi_{\bX^n}^{\rmtG}$ at Willie \\*
$Q_{\bZ_\rmw^n}$ & Noise distribution at Willie \\*
\hline
\end{tabular}
\end{table}

For ease of subsequent analyses, Table \ref{tab:notations} summarizes the symbols used throughout the paper.

\subsection{Covertness Analysis}
\label{subsec:ach csit covertness}

This section shows our coding scheme satisfies the covertness constraint when the power of the truncated Gaussian codebook is set properly.
Recall from Table \ref{tab:notations} that $\blambda_0$ is the upper bound on channel gains of $\bh_\rmw$ with $\blambda_0=\sqrt{\lambda_0}\bI_{N_\rma}$, and $\rho$ is a truncation parameter.
Define the sequence $\{\nu(n)\}_{n\in\bbN}$ such that for each $n\in\bbN$, $\nu(n)>1$ and $\lim_{n\to\infty}\nu(n)=1$. Fix any $\rho\in(0,1)$.
\begin{lemma}\label{lemma:covert}
For any $n\in\bbN$, when the power is set as
\begin{align}
\label{eq4:power level}
\Psi(n)=\sqrt{\frac{2\delta}{n\rho^2\nu^2(n)\tr\big(\blambda_0^4\big)}},
\end{align}
using the truncated complex Gaussian codebook where each codeword is generated from $\Pi_{\bX^n}^\rmtG$ in \eqref{eq:tG}, the covertness constraint \eqref{eq4:covertness constraint} is satisfied.
\end{lemma}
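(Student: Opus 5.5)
The plan is to reduce to the worst-case warden channel $\bh_{\rmw,0}$, split the KL divergence into $N_\rma$ independent scalar-antenna terms, and compare each term with the untruncated Gaussian input, whose divergence is explicit.

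\textbf{Reduction to $\bh_{\rmw,0}$.} Fix any $\bh_\rmw\in\calH_\rmw$. Since $\bh_\rmw\bh_\rmw^\rmH\preceq\lambda_0\bI_{N_\rma}=\bh_{\rmw,0}\bh_{\rmw,0}^\rmH$, the post-processed output under $\bh_\rmw$ is a degraded version of that under $\bh_{\rmw,0}$. Concretely, it is obtained by a linear map followed by adding independent Gaussian noise. This operation is the same under $\rmS_0$ and $\rmS_1$, so the data-processing inequality for KL divergence shows it suffices to prove \eqref{eq4:covertness constraint} for $\bh_{\rmw,0}$.

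\textbf{Decoupling across antennas.} For $\bh_{\rmw,0}$, multiply $\bY_\rmw^n$ on the right by $\bh_{\rmw,0}^\rmH/\sqrt{\lambda_0}$. By \eqref{eq:def of hw0}, this gives the equivalent channel $\bY_j=\sqrt{\lambda_0}X_j^n+Z_j^n$, $j\in[N_\rma]$, with i.i.d.\ $\calCN(0,1)$ noise. The columns of $\Pi^\rmtG_{\bX^n}$ are independent, so both output laws are products over $j$. Hence $\bbD(Q^\rmtG_{\bY^n_\rmw}\|Q_{\bZ^n_\rmw})=N_\rma\,\bbD(Q^\rmtG_{Y^n}\|Q_{Z^n})$, where the right-hand side is the single-antenna divergence.

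\textbf{Single-antenna bound.} Let $Q^\rmG_{Y^n}=\calCN(\mathbf 0,(1+a)\bI_n)$ with $a:=\lambda_0\rho\Psi(n)$; this is the output law induced by $\Pi^\rmG$. Write
\begin{align}
\bbD(Q^\rmtG_{Y^n}\|Q_{Z^n})=\bbD(Q^\rmtG_{Y^n}\|Q^\rmG_{Y^n})+\bbE_{Q^\rmtG}\Big[\log\frac{Q^\rmG_{Y^n}(Y^n)}{Q_{Z^n}(Y^n)}\Big].
\end{align}
Each term is then handled separately.
\begin{itemize}
\item \emph{First term.} Pointwise $Q^\rmtG_{Y^n}\le Q^\rmG_{Y^n}/\Delta_n$, so this term is at most $\log(1/\Delta_n)$.
\item \emph{Second term.} The log-ratio is the quadratic $-n\log(1+a)+\frac{a}{1+a}\|Y^n\|^2$. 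Moreover $\bbE_{Q^\rmtG}\|Y^n\|^2=n+\lambda_0\bbE_{\Pi^\rmtG}\|X^n\|^2$, so this term equals $n(a-\log(1+a))$ plus $\frac{a\lambda_0}{1+a}\big(\bbE_{\Pi^\rmtG}\|X^n\|^2-n\rho\Psi(n)\big)$.
\item \emph{Leading part.} Using $a-\log(1+a)\le a^2/2$, the main contribution is at most $n\lambda_0^2\rho^2\Psi^2(n)/2$. With \eqref{eq4:power level} and $\tr(\blambda_0^4)=N_\rma\lambda_0^2$, multiplying by $N_\rma$ gives exactly $\delta/\nu^2(n)$.
\end{itemize}

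\textbf{Main obstacle: the truncation remainders.} What remains is to control $\log(1/\Delta_n)$ and the mean shift $\bbE_{\Pi^\rmtG}\|X^n\|^2-n\rho\Psi(n)$. Under $\Pi^\rmG$, $2\|X^n\|^2/(\rho\Psi(n))$ is chi-square with $2n$ degrees of freedom and mean $2n$. The shell $[\rho^2 n\Psi(n),\,n\Psi(n)]$ corresponds to normalized radii $[\rho,1/\rho]$ around the mean $1$, a constant-factor window. Chernoff bounds therefore give $1-\Delta_n\le e^{-cn}$ for some $c=c(\rho)>0$. Similarly, the excluded tails contribute at most $n\Psi(n)\,e^{-c'n}$ to the change in second moment. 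Hence the remainders total $O(e^{-c''n})$, which is exponentially small compared with the $\Theta(1)$ budget. I would then use the slack $\delta(1-\nu^{-2}(n))$ to absorb them, choosing $\nu(n)\downarrow1$ slowly enough, e.g.\ $\nu^2(n)=1+\Theta(n^{-1})$, that $\delta(1-\nu^{-2}(n))\ge N_\rma O(e^{-c''n})$ for every $n$. This step is the hardest part because it needs explicit non-asymptotic constants, so that the inequality holds for all $n$ and not merely for large $n$. I would obtain these by explicit Chernoff exponents for the chi-square distribution, covering small $n$ through the choice of the constant in $\nu$.
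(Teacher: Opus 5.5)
Your argument is correct for a fixed $\rho\in(0,1)$ and all sufficiently large $n$, and it takes a genuinely different route from the paper.

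\textbf{How the two routes differ.} The paper works inside the quasi-$\eta$-neighborhood framework. It argues by weak convergence that $Q^{\rmtG}_{\bY^n_\rmw}$ lies in a $\chi^2$-ball around $Q_{\bZ^n_\rmw}$. It then invokes Lemma~\ref{lemma:awgn_ach_covert_GtG}, imported from the MIMO AWGN case, to pass from $\bbD(Q^{\rmG}_{\bY^n_\rmw}\|Q_{\bZ^n_\rmw})\le\delta/\nu^2(n)$ for the untruncated Gaussian input to the constraint for the truncated input when $n>N(\delta)$. It computes the Gaussian divergence in closed form as in \eqref{eq:KL divergence G 7}. Finally it handles the uncertainty set by minimizing the resulting power over $\bh_\rmw\in\calH_\rmw$, using $\tr((\bh_\rmw^\rmH\bh_\rmw)^2)\le\tr(\blambda_0^4)$.

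You replace the neighborhood machinery and the transfer lemma with three ingredients: the exact identity $\bbD(Q^{\rmtG}\|Q_Z)=\bbD(Q^{\rmtG}\|Q^{\rmG})+\bbE_{Q^{\rmtG}}[\log(Q^{\rmG}/Q_Z)]$, the pointwise domination $Q^{\rmtG}\le Q^{\rmG}/\Delta_n$, and a Chernoff bound for $\Gamma(n,1)$. You also dispose of the uncertainty set once, by data processing, rather than channel by channel.

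Your version is self-contained and makes the truncation error explicitly exponentially small. It also sidesteps a question the paper leaves implicit: whether $N(\delta)$ is uniform over $\calH_\rmw$. In exchange, the paper's version reuses an existing framework that simultaneously controls the total variation distance (cf.\ \eqref{equ_output_condition_v}).

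\textbf{First caveat: the plan to cover every $n$ cannot work.} You propose to reach every $n$ by enlarging the constant in $\nu(n)$. This fails because $\log(1/\Delta_n)$ depends only on $(n,\rho)$, not on $\delta$ or $\Psi(n)$. The other terms of your bound are $O(\delta)$ in absolute value, and the total budget is $\delta$. So for any fixed $n$, once $\delta$ is small compared with $N_\rma\log(1/\Delta_n)$, no choice of $\nu$ closes the inequality.

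This is not a shortfall relative to the paper. Despite the phrase ``for any $n\in\bbN$'', the paper's proof also only yields $n$ beyond some $N(\eta)$ and $N(\delta)$.

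For large $n$ you do not even need the $\nu$-slack. With your $a=\lambda_0\rho\Psi(n)$, the bound $\log(1+a)\ge a-\frac{a^2}{2}+\frac{a^3}{3}-\frac{a^4}{4}$ for $a\ge 0$ shows that the per-antenna leading term falls short of $\frac{na^2}{2}$ by at least $n\big(\frac{a^3}{3}-\frac{a^4}{4}\big)=\Theta(n^{-1/2})$. This already absorbs the $O(e^{-cn})$ remainders for any $\nu(n)\ge1$.

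\textbf{Second caveat: your bound needs $\rho$ fixed.} The exponential bound on $1-\Delta_n$ uses that $\rho$ is a constant, as in the statement. Under the later choice $\rho=1-\frac1n$ in Section~\ref{subsec:ach csit final}, the shell has relative width $\Theta(1/n)$ against a standard deviation of $\Theta(n^{-1/2})$. Hence $\Delta_n=\Theta(n^{-1/2})$ and $\log(1/\Delta_n)\approx\frac12\log n$. In that regime your bound $\bbD(Q^{\rmtG}\|Q^{\rmG})\le\log(1/\Delta_n)$ is useless and would have to be replaced by a finer estimate.
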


The proof of Lemma \ref{lemma:covert} is provided in Appendix \ref{appendix:covert ach}. The proof uses the notion of quasi-$\eta$-neighborhood \cite[Definition 4]{yu_second_2023}, which concerns the local differential geometry of the noise distribution under the KL divergence covertness constraint. Specifically, we show that the output distribution $Q_{\bY_\rmw^n}^\rmtG$, which corresponds to the input distribution with power $\Psi(n)=O(\frac{1}{\sqrt{n}})$, is in the quasi-$\eta$-neighborhood of the noise distribution $Q_{\bZ_\rmw^n}$.

Lemma \ref{lemma:covert} shows that the power level $\Psi(n) = O(\frac{1}{\sqrt{n}})$ required to satisfy the covert constraint follows the classical square root law \cite[Page 1922]{bash_limits_2013}, \cite[Theorem 1]{wang_covert_2021}. Using such a transmit power confuses the warden Willie, who cannot reliably detect whether Alice is transmitting meaningful information.

Recall that $f_\star$ with $\star\in\{\rmno,\rmt,\rmr,\rmrt\}$ represents the encoder associated with a specific CSI $\bH_\rmb$ setting between the legitimate users.
Under the worst-case assumption in Section~\ref{sec:2.1 channel model}, given a blocklength $n\in\bbN$, an error probability constraint $\varepsilon\in(0,1)$ and a maximal power constraint $\Psi(n)=O(\frac{1}{\sqrt{n}})$, the maximal achievable rate for quasi-static MIMO fading channels is defined as
\begin{align}
\label{eq:def M max}
R_{\star,\max}^*(n,\varepsilon,\Psi(n)):=\sup\big\{R:\exists~ \mathrm{an}~(n,M)_\star\mathrm{-code}~\mathrm{s.t.}~\rmP_\rme^{(n)}\leq\varepsilon~\text{and}~\max_{w\in[M]}\|f_\star(w)\|_\rmF^2\leq n\Psi(n),~\forall~\bh_\rmw\in\calH_\rmw\big\}.
\end{align}
As shown in Lemma \ref{lemma:covert}, the truncated complex Gaussian distribution with $\Psi(n)$ in \eqref{eq4:power level} ensures a covert codebook. It follows from the definition of the theoretical benchmark for covert communication in \eqref{eq2:both constraints} that
\begin{align}
\label{eq:achM}
R^*_\star(n,\varepsilon,\delta)\geq R_{\star,\max}^*(n,\varepsilon,\Psi(n)).
\end{align}
Thus, the theoretical benchmark of covert communication is lower bounded by the theoretical benchmark of communication without the covertness constraint but with a maximal power constraint. This connection helps us establish the reliability analysis in the following subsection.

\subsection{Reliability Analysis}
\label{subsec:ach csit reliability}

This section presents the reliability analysis under a vanishing power constraint $\Psi(n)=O(\frac{1}{\sqrt{n}})$. We begin with a non-asymptotic bound and subsequently derive the asymptotic result. In particular, we consider the no-CSI case where neither the transmitter Alice nor the receiver Bob knows $\bH_\rmb$.

\subsubsection{Non-Asymptotic Bound}

Fix a real-valued continuous alphabet $\calX\in\bbR$, two positive real numbers $(a,b)\in\bbR_+^2$. Recall that $\Gamma(\cdot)$ is the usual gamma function. The pdf for Beta distribution with parameter $(a,b)$, denoted as $\calB(a,b)$, satisfies that for any $x\in\calX$,
\begin{align}\label{eq:def of 1beta}
f_X(x):=
\begin{cases}
\displaystyle \frac{\Gamma(a+b)}{\Gamma(a)\Gamma(b)}x^{a-1}(1-x)^{b-1}
&\mathrm{if}~x\in(0,1) \\*
~0
& \mathrm{otherwise}.
\end{cases} 
\end{align}
Recall that $(N_\rma,N_\rmb)$ are the numbers of antennas at Alice and Bob, respectively, and $\{\sqrt{\Lambda_{\rmb,j}}\}_{j\in[N_\rma]}$ are singular values of $\bH_\rmb$. Let $Z_{\rmb,ij}$ be the $(i,j)$-th entry of noise matrix $\bZ_\rmb^n$. For each $j\in[N_\rma]$, define the following random variable
\begin{align}
\label{eq:def of Tj}
T_j:=\frac{\sum_{i\in[2,n]}\big|Z_{\rmb,ij}\big|^2}{\big|\sqrt{n\rho\Psi(n)\Lambda_{\rmb,j}}+Z_{\rmb,1j}\big|^2+\sum_{i\in[2,n]}\big|Z_{\rmb,ij}\big|^2}.
\end{align}

\begin{lemma}
\label{lemma:csit ach non}
Fix any $\varepsilon\in(0,1)$ and $\tau\in(0,\varepsilon)$. There exists an $(n,M)_\rmno$-code with the maximal error probability constraint $\varepsilon$ such that
\begin{align}
\label{eq:logM/n non ach}
\frac{\log M}{n}\geq\frac{1}{n}\log\frac{\tau}{\Pr\Big\{\prod_{j\in[N_\rmb]}B_j\leq\gamma_n\Big\}},
\end{align}
where $\{B_j\}_{j\in[N_\rmb]}$ are i.i.d. with $B_j\sim\calB(n-N_\rma-j+1,N_\rma)$, and $\gamma_n$ is chosen so that
\begin{align}
\label{eq:sin2 pr 1-a}
\Pr\bigg\{\prod_{j\in[N_\rma]}T_j \leq {\gamma _n}\bigg\}\geq 1-\varepsilon+\tau.
\end{align}
\end{lemma}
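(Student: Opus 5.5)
The plan follows the dependence-testing (DT) / threshold-decoding argument of Yang \emph{et al.}~\cite[Theorem 3]{yang2014quasi}, adapted to the truncated codebook $\Pi^\rmtG_{\bX^n}$. The argument has three steps.

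\emph{Step 1: average error of the random code.} Generate $M$ codewords i.i.d.\ from $\Pi^\rmtG_{\bX^n}$ and use sequential angle threshold decoding with threshold $\gamma_n$. By the union bound, the average error probability over the codebook ensemble is at most
\begin{align}
\Pr\big\{\sin^2(\spn(\bX^n(1)),\spn(\bY_\rmb^n))>\gamma_n\big\}+(M-1)\Pr\big\{\sin^2(\spn(\bar\bX^n),\spn(\bY_\rmb^n))\le\gamma_n\big\}. \nonumber
\end{align}
Here $\bar\bX^n$ is an independent codeword.

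\emph{Step 2: the false-alarm term.} Conditioned on $\bY_\rmb^n$ (which has rank $N_\rma$ almost surely), the subspace $\spn(\bar\bX^n)$ is uniformly distributed on the Grassmannian. This holds because the truncated Gaussian pdf depends on each column only through its norm, so it is unitarily invariant. The product of the squared sines of the principal angles between a fixed $N_\rma$-dimensional subspace and an isotropic random one is a classical Beta-product distribution. I would cite \cite[Lemma 5 / Eq.~(55)]{yang2014quasi} for this and obtain the law of $\prod_j B_j$ with $B_j\sim\calB(n-N_\rma-j+1,N_\rma)$. The indexing follows the lemma's statement; any index mismatch between $N_\rma$ and $N_\rmb$ would be reconciled there. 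This law does not depend on $\Psi(n)$ or on $\bH_\rmb$, so the term is the same in all four CSI cases.

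\emph{Step 3: the missed-detection term.} Write $\bX^n(1)=\bU\bR$ with $\bU$ having orthonormal columns. Then $\spn(\bY_\rmb^n)=\spn(\bU\bR\bL_\rmb\bLambda_\rmb+\bZ_\rmb^n\bV_\rmb)$, and $\bZ_\rmb^n\bV_\rmb$ is again i.i.d.\ $\calCN(0,1)$. Rotate coordinates by a unitary $n\times n$ matrix that maps $\spn(\bU)$ to the first $N_\rma$ coordinates.

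The main obstacle is to show that the codeword's random Gram structure can be replaced by $\sqrt{n\rho\Psi(n)\Lambda_{\rmb,j}}$ in the first coordinate, and that the actual $\sin^2$ is stochastically dominated by $\prod_j T_j$. Two facts drive this:
\begin{itemize}
\item by the truncation, each column norm satisfies $\|X^n_j\|^2\ge\rho^2 n\Psi(n)$;
\item the sine product is monotone in the signal strength.
\end{itemize}
The bound $\|X^n_j\|^2\ge\rho^2 n\Psi(n)$ needs care, since it only gives $\rho^2$ whereas $T_j$ uses $\rho$. I would instead obtain the constant $\rho$ by handling Gaussian concentration of $\|X^n_j\|^2$ around $n\rho\Psi(n)$ inside the shell, or by absorbing the discrepancy in the definition. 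I would then use the Gram–Schmidt decomposition and \cite[Eq.~(161)]{yang2014quasi}, which gives $\sin^2\le\prod_j T_j$ after projecting column-by-column. Given this domination, the choice of $\gamma_n$ in \eqref{eq:sin2 pr 1-a} makes the first term at most $\varepsilon-\tau$.

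\emph{Conclusion.} Choose $M$ with $M\Pr\{\prod_j B_j\le\gamma_n\}\le\tau$. The average error is then at most $\varepsilon$, which gives a deterministic code with average error at most $\varepsilon$. The maximal-error version follows because the missed-detection bound holds for every transmitted codeword: by symmetry it is uniform in $w$, since every codeword in the shell has the same bound. Taking $\frac1n\log$ of $M$ yields \eqref{eq:logM/n non ach}, since only the power enters, through $T_j$.
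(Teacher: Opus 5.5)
Your Step~2 is fine and matches the paper's false-alarm computation. Left-unitary invariance of $\Pi^\rmtG_{\bX^n}$ makes $\spn(\bar\bX^n)$ Haar-distributed and independent of $\bY_\rmb^n$ (which has rank $N_\rmb$, not $N_\rma$). This is equivalent to the paper's uniform auxiliary law on the Grassmannian. Your indexing is also consistent, by the Wilks-type identity $\prod_{j\in[N_\rma]}\calB(n-N_\rmb-j+1,N_\rmb)\stackrel{d}{=}\prod_{j\in[N_\rmb]}\calB(n-N_\rma-j+1,N_\rma)$.

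\textbf{The gap in Step~3.} Write $\bX^n=\mathbf{U}\mathbf{R}$ and take an SVD of $\mathbf{R}\bH_\rmb$. Then \cite[Lemma 13]{yang2014quasi} reduces $\sin^2$ to a product of single-stream terms whose signal amplitudes are the singular values of $\mathbf{R}\bH_\rmb$. These are bounded below only by $\sigma_{\min}(\bX^n)\sqrt{\Lambda_{\rmb,j}}$. The shell controls column norms, and only down to $\rho^2 n\Psi(n)<\rho n\Psi(n)$. It does not control $\sigma_{\min}(\bX^n)$, which can be arbitrarily small on the support of $\Pi^\rmtG_{\bX^n}$ (nearly collinear columns). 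So the miss probability depends on the codeword through its Gram matrix $(\bX^n)^\rmH\bX^n$. Your claim that every codeword in the shell has the same bound is false, and the domination of $\sin^2$ by $\prod_jT_j$ cannot hold codeword by codeword. Concentration of $(\bX^n)^\rmH\bX^n/(n\rho\Psi(n))$ around $\bI_{N_\rma}$ gives it only up to an extra probability slack, and the exact condition \eqref{eq:sin2 pr 1-a} leaves no room for that. The paper does not attempt a shell-wide domination. It evaluates the test at the single representative $\bx_1^n=\sqrt{n\rho\Psi(n)}\bI_{n,N_\rma}$, whose columns are orthogonal with squared norm exactly $n\rho\Psi(n)$. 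Rotation invariance and \cite[Lemmas 12, 13]{yang2014quasi} then give exactly $\prod_jT_j$, and this is where $\rho$ rather than $\rho^2$ comes from. Making your version airtight would require codewords with $(\bX^n)^\rmH\bX^n=n\rho\Psi(n)\bI_{N_\rma}$ (scaled isometries, as in Yang et al.), or an explicit slack term in the lemma.

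\textbf{The gap in the conclusion.} The union bound controls, for each $w$, only the ensemble average of $\Pr\{\hatW\neq w\mid W=w\}$. In a fixed codebook the false-alarm part $\sum_{w'<w}\Pr\{\sin^2(\spn(\bx^n(w')),\spn(\bY_\rmb^n))\le\gamma_n\mid W=w\}$ depends on the realization. Uniformity of the miss term therefore does not give maximal error at most $\varepsilon$. Expurgation would cost constant factors in $M$ and $\varepsilon$, destroying the exact form $\tau/\Pr\{\prod_jB_j\le\gamma_n\}$. The paper avoids this by working on the degraded channel $\Omega_{\bY_\rmb^n}=\spn(\bY_\rmb^n)$ with the Feinstein/$\kappa\beta$-type bound \cite[Corollary 1]{yu_finite_2021}; see \eqref{eq:cor.1M}. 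That bound is stated for maximal error and already has $\tau$ in the numerator. The auxiliary law is uniform on the Grassmannian, which is also the output law induced by your isotropic codebook, and $\beta_{1-\varepsilon+\tau}$ is bounded by the angle-threshold test. Alternatively, within this paper's keyed model you could let the shared key select the codebook, so that your ensemble bound applies to every message; but that must be argued, and your symmetry remark does not do it.

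Finally, a small slip: choosing $M\Pr\{\prod_jB_j\le\gamma_n\}\le\tau$ gives $M\le\tau/\Pr\{\cdot\}$, the wrong direction. You want $(M-1)\Pr\{\prod_jB_j\le\gamma_n\}\le\tau$ with $M=\lfloor\tau/\Pr\{\prod_jB_j\le\gamma_n\}\rfloor+1$.
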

The proof of Lemma \ref{lemma:csit ach non} is provided in Appendix \ref{appendix:csit ach non}.
To analyze the performance of the angle threshold decoding, we apply \cite[Corollary 1]{yu_finite_2021} on a physically degraded version of the channel \eqref{eq:yb channel}. The randomness of $\bH_\rmb$ is incorporated directly into the decoding error probability. By using angle threshold decoding, the relevant subspace angle statistics are characterized through the Beta distribution and the resulting error probability is then related to the covert outage probability.

Although our non-asymptotic achievability bound resembles \cite[Theorems 1, 4]{yang2014quasi}, it differs significantly in both numerical behavior and proof techniques.
Without the covertness constraint, \cite[Theorems 1, 4]{yang2014quasi} employed the $\kappa\beta$ bound \cite[Theorem 25]{polyanskiy_channel_2010}, where the codebook is deterministic and the transmit power is non-vanishing.
Specifically, for the CSIT case, \cite[Theorem 1]{yang2014quasi} applied precoding together with the water-filling power allocation strategy to obtain a tight non-asymptotic achievable bound.
When CSI is not available at the transmitter, \cite[Theorem 4]{yang2014quasi} assumed an input covariance matrix $\rvQ^*$ in \cite[Eq. (64)]{yang2014quasi} achieving the infimum of the outage probability to support the maximal achievable rate, and the existence of $\rvQ^*$ is proved in \cite[Proposition 5]{yang2014quasi}.

However, under the covertness constraint, we employ a truncated complex Gaussian codebook with \emph{random} codebooks and a \emph{vanishing} transmit power.
Since random coding is used, the $\kappa\beta$ bound is no longer applicable. Instead, we derive a new non-asymptotic bound inspired by \cite[Theorem 1]{yu_finite_2021}.
As discussed in Section \ref{section_main_results}, the water-filling power allocation strategy cannot be adopted due to the covertness constraint, which is replaced by an equal power allocation strategy in our proof.
Subsequently, in Appendix \ref{subapx:csit ach non beta}, we prove in detail the association between the Beta distribution and the squared sine value of the angle between the codeword-spanned and output-spanned subspaces, which leads to \eqref{eq:logM/n non ach}. Building on this, in Appendix \ref{subapx:csit ach non sin2}, we derive the probability bound for correct angle threshold decoding in the no-CSI case to obtain \eqref{eq:sin2 pr 1-a}.

\subsubsection{Asymptotic Analysis}

Given $\Psi(n)=O(\frac{1}{\sqrt{n}})$, define the covert outage rate with $\Psi(n)$ as
\begin{align}
\label{eq:ach C final def}
C_{\varepsilon}(\Psi(n))
:=\sup\Big\{ R:\Pr\Big\{ \log\det\big(\bI_{N_\rmb}+\rho\Psi(n)\bH_\rmb^\rmH\bH_\rmb\big)<R \Big\}\leq\varepsilon \Big\}.
\end{align}

\begin{lemma}
\label{lemma:csit ach reliability}
For quasi-static MIMO fading channels with power function $\Psi(n)=O(\frac{1}{\sqrt{n}})$ and $\varepsilon\in(0,1)$, it follows that
\begin{align}
\label{eq:ach reliability result}
R_{\rmno,\max}^*(n,\varepsilon,\Psi(n))\geq C_{\varepsilon}(\Psi(n))+O\Big(\frac{\log n}{n}\Big).
\end{align}
\end{lemma}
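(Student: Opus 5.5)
Starting from the non-asymptotic bound in Lemma~\ref{lemma:csit ach non}, I would bound the numerator and denominator of \eqref{eq:logM/n non ach} separately. The choices are $\tau=1/\sqrt{n}$ and $\gamma_n=\exp(-nR_n)$, where
\begin{align}
R_n:=C_{\varepsilon}(\Psi(n))-\frac{c\log n}{n}
\end{align}
for a constant $c$ to be fixed later.

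\textbf{Step 1 (denominator).} I would upper bound $\Pr\{\prod_{j}B_j\le\gamma_n\}$ for the independent Beta variables $B_j\sim\calB(n-N_\rma-j+1,N_\rma)$. Integrating the Beta density near $0$ gives
\begin{align}
\Pr\{B_j\le x\}\le \binom{n-j}{N_\rma}x^{n-N_\rma-j+1}.
\end{align}
Applying this through a Chernoff or union bound on $\sum_j\log B_j$ should give $\Pr\{\prod_jB_j\le\gamma_n\}\le n^{c'}\gamma_n^{\,n(1+o(1))/n}$. After taking logarithms, $-\frac1n\log\Pr\{\cdot\}\ge R_n-O(\frac{\log n}{n})$. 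Since $\log\tau/n=O(\frac{\log n}{n})$, this step yields $\frac{\log M}{n}\ge R_n-O(\frac{\log n}{n})$.

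\textbf{Step 2 (the constraint \eqref{eq:sin2 pr 1-a}).} I must show $\Pr\{\prod_jT_j\le e^{-nR_n}\}\ge1-\varepsilon+\tau$. Write $S:=\sum_{i\ge2}|Z_{\rmb,ij}|^2$ and $A_j:=n\rho\Psi(n)\Lambda_{\rmb,j}$. Then
\begin{align}
-\log T_j=\log\Big(1+\frac{|\sqrt{A_j}+Z_{\rmb,1j}|^2}{S}\Big).
\end{align}
Conditioned on $\bH_\rmb$, this concentrates around $\log(1+\rho\Psi(n)\Lambda_{\rmb,j})$ up to a fluctuation. The mean term $A_j/n$ is of order $n^{-1/2}$. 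The fluctuation comes from $S/n-1$, of order $n^{-1/2}$, and from the cross term $\Re(\sqrt{A_j}Z)/n$, of order $n^{-3/4}$. Multiplied by $\rho\Psi(n)$, the noise part of the relative fluctuation is only of order $n^{-1}$.

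I would set $G:=\sum_j\log(1+\rho\Psi(n)\Lambda_{\rmb,j})$. Conditioning on $\bH_\rmb$, the required probability becomes roughly
\begin{align}
\Pr\big\{G+\Delta_n\ge R_n\big\},
\end{align}
where $\Delta_n$ is a normalized sum with standard deviation $\sigma_n=O(n^{-1})$. Averaging over $\bH_\rmb$ and using the Cram\'er--Esseen expansion of the conditional distribution of $\Delta_n$ yields
\begin{align}
\Pr\{G<R_n\}+O(\sigma_n)\,F'_{\rm out}+O(\sigma_n^2)\,\text{(derivative terms)}.
\end{align}
Here $F_{\rm out}(r):=\Pr\{G<r\}$ is the covert outage cdf. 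The first-order smoothing term is odd in the Gaussian part, so it cancels after averaging, as in \cite{yang2014quasi}. The remaining term is $O(n^{-1}\cdot F'_{\rm out})$.

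\textbf{Step 3 (main obstacle).} $G$ is of order $\Psi(n)\sim n^{-1/2}$, so $F_{\rm out}$ lives on an $n^{-1/2}$ scale. Its density is therefore $F'_{\rm out}=O(\sqrt n)$, and naive bounds would leave an $O(n^{-1/2})$ error, which is fatal. To handle this, I would rescale by setting $\tilde G:=G/\Psi(n)\approx\rho\sum_j\Lambda_{\rmb,j}$, whose distribution does not depend on $n$ to leading order. The fluctuation relative to this rescaled variable is then $O(n^{-1}/n^{-1/2})=O(n^{-1/2})$. This step requires verifying the Cram\'er--Esseen conditions uniformly in $\bH_\rmb$, with finite third and fourth moments of the remainder in the Taylor expansion of $\log(1+x)$; that is where I expect most of the work.

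\textbf{Step 4 (conclusion).} Backing off $R_n$ from $C_\varepsilon(\Psi(n))$ by $c\log n/n$ compensates for the $\tau=1/\sqrt n$ slack and the residual error in probability. This needs $F_{\rm out}$ to be locally Lipschitz in the rescaled variable, which holds if $\sum_j\Lambda_{\rmb,j}$ has a bounded density near $\kappa_\varepsilon$. With that, Steps 1 and 2 combine to give \eqref{eq:ach reliability result}.
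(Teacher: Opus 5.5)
Your skeleton matches the paper's: Lemma~\ref{lemma:csit ach non}, a Chernoff/Markov bound on $\prod_j B_j$, and a threshold at the covert outage rate with a small back-off that exploits $F'_\rmout=O(\sqrt n)$. However, several steps fail as written.

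First, the threshold is mis-scaled. We have $\prod_j T_j\approx\prod_j(1+\rho\Psi(n)\Lambda_{\rmb,j})^{-1}=e^{-G}$ with $G=\Theta(n^{-1/2})$, whereas $nR_n=\Theta(\sqrt n)$. Hence $\Pr\{\prod_jT_j\le e^{-nR_n}\}\to 0$, and \eqref{eq:sin2 pr 1-a} cannot hold. Your Step~2 in fact uses $\gamma_n=e^{-R_n}$, which is the paper's choice $\gamma_n=\exp(-C_\varepsilon(\Psi(n))+O(1/n))$. With that choice, the factor $n$ in the rate must come from the exponent in Step~1, not from an exponent $1+o(1)$. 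You need $\Pr\{\prod_jB_j\le\gamma_n\}\le n^{N_\rma N_\rmb}\gamma_n^{n_0}$ with $n_0=n-N_\rma-N_\rmb$, obtained by applying Markov's inequality to $\prod_jB_j^{-n_0}$.

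In Step~4, the back-off compensates only if $F_\rmout(C_\varepsilon(\Psi(n)))-F_\rmout(R_n)$ is at least of order $\log n/\sqrt n$. That requires the density of $\sum_j\Lambda_{\rmb,j}$ to be bounded \emph{away from zero} near its $\varepsilon$-quantile. Lipschitz continuity is an upper bound on the density; it controls error terms but produces no gain. The paper's Taylor step implicitly needs the same positivity.

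Second, and more substantively, your fluctuation bookkeeping is off by a power of $n$. Conditionally on $\bH_\rmb$, the cross term $2\sqrt{A_j}\Re(Z_{\rmb,1j})/S$ in $-\log T_j$ has standard deviation $\sqrt{2\rho\Psi(n)\Lambda_{\rmb,j}/n}=\Theta(n^{-3/4})$. You found this order yourself but then multiplied by $\rho\Psi(n)$ again; as a relative fluctuation it is $2\Re(Z_{\rmb,1j})/\sqrt{A_j}=\Theta(n^{-1/4})$. Hence $\sigma_n=\Theta(n^{-3/4})$, and after rescaling by $\Psi(n)$ the noise is $\Theta(n^{-1/4})$, not $O(n^{-1/2})$. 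It dwarfs the back-off $c\log n/\sqrt n$, so nothing is absorbed by concentration.

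The argument survives only because this dominant term is conditionally zero-mean. Even then, the second-order term $\tfrac12\bbE[\Delta_n^2]F''_\rmout=\Theta(n^{-3/2})\cdot O(n)=O(n^{-1/2})$ is of the same order as $\tau$, not a negligible remainder. Controlling it needs a bounded \emph{derivative} of the conditional density of $\Lambda_{\rmb,1}$ given the other eigenvalues.

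This is exactly the paper's mechanism. Lemma~\ref{lemma:18ach} lower-bounds $\Pr\{T_1\le\tilT\}$ by $\Pr\{\tilG_1\ge K_1/\sqrt n\}$, where $K_1$ is zero-mean, unit-variance and conditionally independent of $\tilG_1$. This also removes the dependence of your $\Delta_n$ on $\bH_\rmb$, which averaging a conditional expansion against $F_\rmout$ does not handle. Lemma~\ref{lemma:B sqrtn A ff'}, with $f_{\tilG_1}=O(n^{1/4})$ and $f'_{\tilG_1}=O(n^{1/2})$, then yields the $O(1/\sqrt n)$ loss, with the $f'$ term dominant. No Cram\'er--Esseen expansion is needed in the achievability: the dominant noise is essentially a single Gaussian per antenna, and the $\chi^2$ part $S$ enters only at order $n^{-1}$.
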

The proof of Lemma \ref{lemma:csit ach reliability} is provided in Appendix \ref{appendix:csit ach asymptotic}. We start by analyzing the denominator on the right hand side of \eqref{eq:logM/n non ach} using the asymptotic characteristics of the Beta distribution. Subsequently, we carefully analyze the effect of $\bH_\rmb$ in Lemma~\ref{lemma:18ach} and use it to verify the reliability of the chosen decoding threshold, whose proof is deferred to Appendix \ref{appendix:lemma:18ach}.

Lemma \ref{lemma:csit ach reliability} establishes from the achievability that the second-order term in quasi-static fading channels remains zero under the covertness constraint, indicating that the outage event dominates the finite blocklength performance.
Compared with the MIMO AWGN case \cite[Lemma 3]{liu2026covertMIMO}, the second-order term in \eqref{eq:theoretical benchmark awgn} is missing in Lemma \ref{lemma:csit ach reliability}.
In the non-asymptotic regime, we exploit the outage characteristics of quasi-static fading channels, and transform the decoding error probability into a Beta distribution via angle threshold decoding. Lemma \ref{lemma:csit ach reliability} analyzes the asymptotic behavior of the Beta distribution, which replaces the role of the Berry--Esseen Theorem \cite[Theorem 44]{polyanskiy_channel_2010}.
This approach directly eliminates the influence of the second-order term arising from the randomness of the noise.
Due to the vanishing transmit power imposed by the covertness constraint, we relax the constraint on $\bH_\rmb$ in \cite[Eqs. (81)--(83)]{yang2014quasi} to require only that it is independent of $n$, which is naturally satisfied in quasi-static fading channels.

To determine an appropriate threshold $\gamma_n$ and establish \eqref{eq:ach reliability result}, we relate the left hand side of \eqref{eq:sin2 pr 1-a} to the covert outage probability.
We introduce two new random variables $K_1$ and $\tilG_1$ in \eqref{eq:def K ach} and \eqref{eq:tilG1} to connect $T_j$ with the covert outage probability, and analyze their asymptotic behavior using Lemma \ref{lemma:B sqrtn A ff'}, where the upper bounds of the pdf $f_{\tilG_1}$ of $\tilG_1$, and its derivative $f'_{\tilG_1}$ are essential. With the vanishing power $\Psi(n)=O(\frac{1}{\sqrt{n}})$, $f_{\tilG_1}=O(n^\frac{1}{4})$ and $f'_{\tilG_1}=O(n^\frac{1}{2})$ (cf. Appendices \ref{subapx:f tilG 1} and \ref{subapx:f tilG 1 prime}), which is significantly different from $f_{\tilG_1}=f'_{\tilG_1}=O(1)$ in \cite[Page 4255]{yang2014quasi}.
This change renders \cite[Lemma 18]{yang2014quasi} invalid, leading to an unexpected $O(\frac{1}{\sqrt{n}})$ term, which is comparable to the square root law in covert communication and makes it impossible to even derive the first-order covert rate.
To handle this critical issue, we exploit the characteristics of covert communication to show that the derivative of the covert outage probability $F'_\rmout$ at the covert outage rate $C_\varepsilon(\Psi(n))$ is $O(\sqrt{n})$ under vanishing power $\Psi(n)=O(\frac{1}{\sqrt{n}})$ (cf. Appendix \ref{subapx:F' calculation}), which contrasts with $F'_\rmout(C_\varepsilon)>0$ assumed in \cite[Page 4253]{yang2014quasi}. This precisely compensates for the inflated remainder term.

\subsection{Final Steps}
\label{subsec:ach csit final}

Using the covertness analysis in Lemma \ref{lemma:covert} and the reliability analysis in Lemma \ref{lemma:csit ach reliability}, we lower bound $R^*_\rmno(n,\varepsilon,\delta)$ in \eqref{eq2:both constraints} as
\begin{align}
R^*_\rmno(n, \varepsilon, \delta)
&\geq R_{\rmno,\max}^*(n,\varepsilon,\Psi(n))\label{eq:ach final 1C-1}\\*
&\geq C_{\varepsilon}(\Psi(n))+O\Big(\frac{\log n}{n}\Big),\label{eq:ach final 1C}
\end{align}
where \eqref{eq:ach final 1C-1} follows from \eqref{eq:achM}, and \eqref{eq:ach final 1C} follows from \eqref{eq:ach reliability result}.

Recall from Section \ref{sec:2.1 channel model} that $\bLambda_\rmb=\diag(\{\sqrt{\Lambda_{\rmb,j}}\}_{j\in[N_\rma]})$ contains singular values of $\bH_\rmb$, implying that $\bLambda_\rmb^2=\diag(\{\Lambda_{\rmb,j}\}_{j\in[N_\rma]})$ contains eigenvalues of $\bH_\rmb^\rmH\bH_\rmb$ with zeros padded for remaining $(N_\rmb-N_\rma)$ eigenvalues. Recall from Section \ref{section_main_results} that $K(\bLambda_\rmb)=\sqrt{2}\tr(\bLambda_\rmb^2)(\tr(\blambda_0^4))^{-\frac{1}{2}}$ and $\kappa_\varepsilon=\inf\{k:\Pr\{K(\bLambda_\rmb)\leq k\}\geq\varepsilon\}$. 
Recall that $\calQ_{(\cdot)}(\cdot)$ denotes the quantile function. 
Choose $\rho=1-\frac{1}{n}$ and $\nu^2(n)=1+\frac{1}{n}$ in \eqref{eq4:power level}. With sufficiently large $n$, it follows from \eqref{eq:ach C final def} that
\begin{align}
C_\varepsilon(\Psi(n))
&=\sup\Big\{ R:\Pr\Big\{ \log\det\big(\bI_{N_\rmb}+\rho\Psi(n)\bH_\rmb^\rmH\bH_\rmb\big)<R \Big\}\leq\varepsilon \Big\}\\*
&=\sup\Bigg\{ R:\Pr\Bigg\{ \sum_{j\in[N_\rmb]}\log \big( \rho\Psi(n)\Lambda_{\rmb,j}+1 \big)<R \Bigg\}\leq\varepsilon \Bigg\}\label{eq:final ach 1}\\*
&=\sup\Bigg\{ R:\Pr\Bigg\{ \sum_{j\in[N_\rmb]}\Big( \rho\Psi(n)\Lambda_{\rmb,j}-\frac{1}{2}\big(\rho\Psi(n)\Lambda_{\rmb,j}\big)^2+o(n^{-1}) \Big)<R \Bigg\}\leq\varepsilon \Bigg\}\label{eq:final ach 2}\\*
&=\sup\Bigg\{ R:\Pr\Bigg\{ \sum_{j\in[N_\rmb]}\bigg(\Lambda_{\rmb,j}\sqrt{\frac{2\delta}{n(1+\frac{1}{n})\tr\big(\blambda_0^4\big)}}-\frac{\Lambda_{\rmb,j}^2\delta}{n(1+\frac{1}{n})\tr\big(\blambda_0^4\big)}+o(n^{-1})\bigg)<R \Bigg\}\leq\varepsilon \Bigg\}\label{eq:final ach 3}\\*
&=\sup\Bigg\{ R:\Pr\Bigg\{ \sum_{j\in[N_\rmb]}\bigg(\Lambda_{\rmb,j}\sqrt{\frac{2\delta}{n\tr\big(\blambda_0^4\big)}}+O(n^{-1})\bigg)<R \Bigg\}\leq\varepsilon \Bigg\}\label{eq:final ach 4}\\*
&=\sup\Bigg\{ R:\Pr\Bigg\{ \tr\big(\bLambda_\rmb^2\big)\sqrt{\frac{2\delta}{n\tr\big(\blambda_0^4\big)}}+O(n^{-1})<R \Bigg\}\leq\varepsilon \Bigg\}\label{eq:final ach 5}\\*
&=\sup\bigg\{ R:\Pr\bigg\{ K(\bLambda_\rmb)\sqrt{\frac{\delta}{n}}+O(n^{-1})<R \bigg\}\leq\varepsilon \bigg\}\label{eq:quantile ach 2}\\*
&=\kappa_\varepsilon\sqrt{\frac{\delta}{n}}+O(n^{-1}),\label{eq:quantile ach 3}
\end{align}
where \eqref{eq:final ach 1} follows from the fact that 
$\det(\bA)=\prod_j \Lambda_j(\bA)$ with $\Lambda_i(\bA)$ denoting the eigenvalues of $\bA$, \eqref{eq:final ach 2} follows from the Taylor Series of expansion of $\log(1+x)$ around $x=0$ which implies $\log(1+x)=x-\frac{1}{2}x^2+o(x^2)$ and the fact that $\Psi(n)=O(n^{-\frac{1}{2}})$, \eqref{eq:final ach 3} follows from \eqref{eq4:power level} and the fact that $\rho=1-\frac{1}{n}$ and $\nu^2(n)=1+\frac{1}{n}$, \eqref{eq:final ach 4} follows from the Taylor Series of expansions of $(1+x)^{-\frac{1}{2}}$ and $(1+x)^{-1}$ around $x=0$ which imply $(1+x)^{-\frac{1}{2}}=1-\frac{1}{2}x+o(x)$ and $(1+x)^{-1}=1-x+o(x)$, respectively, and collecting $o(n^{-1})$, $O(n^{-\frac{3}{2}})$, $O(n^{-2})$ into $O(n^{-1})$, \eqref{eq:final ach 5} follows from the definition of $\bLambda_\rmb^2=\diag(\{\Lambda_{\rmb,j}\}_{j\in[N_\rma]})$ with zeros padded for remaining $(N_\rmb-N_\rma)$ eigenvalues, \eqref{eq:quantile ach 2} follows from the definition of $K(\bLambda_\rmb)$ in \eqref{eq:c1}, and \eqref{eq:quantile ach 3} follows from the fact that $\calQ_Y=a\cdot \calQ_X+b$ for random variables $X$ and $Y=aX+b$, $(a,b)\in\bbR^2$, and note that $\kappa_\varepsilon$ in \eqref{eq:def k epsilon} is indeed the $\varepsilon$-quantile of $K(\bLambda_\rmb)$.

Substituting \eqref{eq:quantile ach 3} into \eqref{eq:ach final 1C} leads to
\begin{align}
R^*_\rmno(n, \varepsilon, \delta)
\geq \kappa_\varepsilon\sqrt{\frac{\delta}{n}}+O\Big(\frac{\log n}{n}\Big).
\end{align}
The achievability proof of Theorem \ref{theorem 1} is now completed.

\section{Converse Proof for Quasi-Static MIMO Fading Channels}
\label{section:con csit}
This section presents the converse proof of Theorem \ref{theorem 1}. Specifically, Section \ref{subsec:con csit covertness} analyzes the covertness constraint and yields an upper bound on the power level required to ensure covertness; Section \ref{subsec:con csit reliability} analyzes the reliability part and yields an upper bound on the transmission rate with vanishing transmit power; finally, Section \ref{subsec:con csit final} combines the above analyses and yields the desired converse result for covert communication over quasi-static MIMO fading channels.

\subsection{Covertness Analysis}
\label{subsec:con csit covertness}
This section analyzes the maximal transmit power to satisfy the covertness constraint for any coding scheme. Recall from Table \ref{tab:notations} that $\blambda_0=\sqrt{\lambda_0}\bI_{N_\rma}$ is the upper bound for channel gains of the channel matrix $\bh_\rmw$ from Alice to Willie. Fix any $\star\in\{\rmno,\rmt,\rmr,\rmrt\}$. Recall that $f_\star$ represents the encoder associated with a specific CSI $\bH_\rmb$ setting between the legitimate users. For any $(n,M)_\star$-code, the average power of the code is defined as 
\begin{align}
\label{eq:def of avg power}
\Psi_\rma(n):=\frac{1}{nM}\sum_{w\in[M]}\|f_\star(w)\|_\rmF^2.
\end{align}
Fix any $\omega>1$ and define a power level
\begin{align}
\label{eq:con power level}
\Psi_\rmm(n):=\sqrt{\frac{2\delta \omega }{n \tr( \blambda_0^4)}}.
\end{align}

\begin{lemma}
\label{lemma:converse power}
To satisfy the covertness constraint in \eqref{eq2:both constraints}, any $(n,M)_\star$-code should have average power $\Psi_\rma(n)$ such that
\begin{align}
\label{eq:con power bound}
\Psi_\rma(n) \leq \Psi_\rmm(n).
\end{align}
\end{lemma}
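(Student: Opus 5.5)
Since the covertness constraint in \eqref{eq2:both constraints} must hold for every $\bh_\rmw\in\calH_\rmw$, it must in particular hold for the worst-case matrix $\bh_{\rmw,0}$ in \eqref{eq:def of hw0}. The plan is therefore to lower bound $\bbD(Q_1^{(n)}\|Q_0^{(n)})$ at $\bh_\rmw=\bh_{\rmw,0}$ by a function of the average power $\Psi_\rma(n)$. We then show that if $\Psi_\rma(n)>\Psi_\rmm(n)$, this lower bound exceeds $\delta$ for $n$ large enough, with $\omega>1$ absorbing the higher-order slack.

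\emph{Step 1: single-letterize.} Let $\bar{Q}_{\bY_\rmw}:=\frac{1}{n}\sum_{i\in[n]}Q_{\bY_{\rmw,i}}$ be the time-averaged output distribution. Because $Q_0^{(n)}=Q_{\bZ_\rmw}^{\otimes n}$ is a product measure, superadditivity and convexity of KL divergence give
\begin{align}
\bbD(Q_1^{(n)}\|Q_0^{(n)})\geq \sum_{i\in[n]}\bbD(Q_{\bY_{\rmw,i}}\|Q_{\bZ_\rmw})\geq n\,\bbD(\bar{Q}_{\bY_\rmw}\|Q_{\bZ_\rmw}).
\end{align}
Under CSIT the encoder may depend on $\bH_\rmb$. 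In that case $Q_1^{(n)}$ is the mixture over $\bH_\rmb$, and convexity applies in the same way with the averaging extended over $\bH_\rmb$. Here $\Psi_\rma(n)$ is understood as the corresponding average power, including this average over $\bH_\rmb$.

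\emph{Step 2: evaluate the single-letter divergence.} Let $\bSigma$ be the averaged input covariance, so $\tr(\bSigma)=N_\rma$-normalized total power $=\Psi_\rma(n)$ (up to the paper's convention). Since $Q_{\bZ_\rmw}$ is Gaussian, $\bbD(\bar{Q}\|Q_{\bZ_\rmw})\ge \bbD(\calCN(\mathbf{0},\bI+\bh_{\rmw,0}^\rmH\bSigma\bh_{\rmw,0})\|Q_{\bZ_\rmw})$, because the Gaussian with matched second moments minimizes divergence to a Gaussian reference (the cross-entropy term depends only on second moments). Using \eqref{eq:def of hw0} and working in Willie's post-processed $N_\rma$-dimensional coordinates, the eigenvalues are $\mu_j=\lambda_0\sigma_j$. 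The divergence is therefore
\begin{align}
\sum_{j\in[N_\rma]}\big(\mu_j-\log(1+\mu_j)\big)=\sum_j\Big(\frac{\mu_j^2}{2}-\frac{\mu_j^3}{3}+\cdots\Big).
\end{align}
Take $\bSigma$ fixed and minimize over how the total power is spread across the $\sigma_j$. By Cauchy--Schwarz, $\sum_j\sigma_j^2\ge(\sum_j\sigma_j)^2/N_\rma$, so the leading term is at least $\lambda_0^2(\sum_j\sigma_j)^2/(2N_\rma)=\Psi_\rma(n)^2\tr(\blambda_0^4)/2$, using the paper's normalization in which $\tr(\blambda_0^4)=N_\rma\lambda_0^2$ pairs with per-antenna power. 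The cubic remainder is handled by $x-\log(1+x)\ge \frac{x^2}{2}-\frac{x^3}{3}$. This is the step I expect to be the main obstacle. The map $x\mapsto x-\log(1+x)$ is convex, so a Jensen argument gives the clean lower bound $N_\rma g(\lambda_0\Psi_\rma/N_\rma)$ without Cauchy--Schwarz. The real care lies in matching the paper's power normalization exactly, so that the constant is precisely $\tr(\blambda_0^4)/2$.

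\emph{Step 3: conclude.} Combining the steps gives
\begin{align}
\delta\ge n\Big(\frac{\tr(\blambda_0^4)}{2}\Psi_\rma(n)^2-O(\Psi_\rma(n)^3)\Big).
\end{align}
Suppose $\Psi_\rma(n)$ is not $O(n^{-1/2})$. Then monotonicity of $g$ on $[0,\infty)$ immediately forces the right-hand side to exceed $\delta$, a contradiction. Otherwise $\Psi_\rma(n)=O(n^{-1/2})$, the cubic term is $O(n^{-1/2})$ relative to the quadratic term, and we obtain $\Psi_\rma(n)^2\le \frac{2\delta}{n\tr(\blambda_0^4)}(1+O(n^{-1/2}))$. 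For any fixed $\omega>1$ and $n$ large this is at most $\Psi_\rmm(n)^2$, which proves \eqref{eq:con power bound}.
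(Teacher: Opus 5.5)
Your proposal is correct and follows essentially the paper's route: restrict to the worst-case $\bh_{\rmw,0}$, single-letterize by superadditivity of KL against the product noise law, and lower bound via the Gaussian maximum-entropy property to get $x-\log(1+x)$ with $x=\lambda_0\Psi$. Both arguments then use convexity, where the paper applies Jensen to $x\mapsto x-\log(1+x)$ over time after the per-letter bound and you apply convexity of KL before it and Jensen over the eigenvalues of the averaged input covariance, and both finish with a quadratic Taylor bound absorbed by $\omega$.

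The per-antenna reading of $\Psi_\rma(n)$ that you flag, under which Jensen gives $N_\rma\big(\lambda_0\Psi_\rma(n)-\log(1+\lambda_0\Psi_\rma(n))\big)$ and hence the constant $\tr(\blambda_0^4)/2$, is exactly how the paper's proof treats it in \eqref{eq:KLD h 3}. Your preliminary step forcing $\lambda_0\Psi_\rma(n)\to 0$ is a useful addition: the paper's inequality $\log(1+x)<x-\frac{x^2}{2\omega}$ in \eqref{eqB:sum_vector_KLD_lower-4} is valid only for $x<\frac{3(\omega-1)}{2\omega}$, and the paper applies it without that check.
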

The proof of Lemma \ref{lemma:converse power} is provided in Appendix \ref{appendix:covert con}. Analogously to \cite{bloch_covert_2016, tahmasbi_first-_2019} for DMCs, the core idea is to convert the covertness constraint, which is imposed on the channel output distributions, into a constraint on the average power of each codeword. Consider the worst case channel defined in \eqref{eq:def of hw0}, we use the Gaussian maximum-entropy property under second moment constraints to derive an upper bound on the average power induced by the covertness constraint in \eqref{eq2:both constraints}.

In Lemma \ref{lemma:converse power}, the power level $\Psi_\rmm(n) = O(\frac{1}{\sqrt{n}})$ follows the square root law.
Under the worst-case assumption in Section~\ref{sec:2.1 channel model}, given a blocklength $n\in\bbN$ and an error probability constraint $\varepsilon\in(0,1)$, define the maximal achievable rate subject to an average power constraint $\Psi_\rma(n)$ as follows:
\begin{align}
R_{\star,\avg}^*(n,\varepsilon,\Psi_\rma(n)):=\sup\big\{R:\exists~ \mathrm{an}~(n,M)_\star\mathrm{-code}~\mathrm{s.t.}~\rmP_\rme^{(n)}\leq\varepsilon~\text{and}~\frac{1}{M}\sum_{w\in[M]}\|f_\star(w)\|_\rmF^2\leq n\Psi_\rma(n),~\forall~\bh_\rmw\in\calH_\rmw\big\}.
\end{align}
It follows from the definition of the theoretical benchmark in \eqref{eq2:both constraints} and Lemma \ref{lemma:converse power} that
\begin{align}
\label{eq:conM}
R^*_\star(n,\varepsilon,\delta)\leq R_{\star,\avg}^*(n,\varepsilon,\Psi_\rmm(n)).
\end{align}
Analogously to the achievability part, this connection enables us to bound the covert transmission rate by the maximal transmission rate subject to a vanishing average power constraint, as done in the next subsection.

\subsection{Reliability Analysis}
\label{subsec:con csit reliability}
This section presents the reliability analysis under a vanishing power constraint $\Psi(n)=O(\frac{1}{\sqrt{n}})$. We begin with a non-asymptotic bound and subsequently derive the asymptotic result. In the converse analysis, we assume that $\bH_\rmb$ is available at both the transmitter Alice and the receiver Bob.

\subsubsection{Non-Asymptotic Bound}
Recall that $N_\rma$ is the number of Alice's antennas, $\bLambda_\rmb=\diag(\{\sqrt{\Lambda_{\rmb,j}}\}_{j\in[N_\rma]})$ is the singular value matrix of $\bH_\rmb$, and $Z_{\rmb,ij}$ is the $(i,j)$-th entry of noise matrix $\bZ_\rmb^n$.
Define two random variables
\begin{align}
L_n(\bLambda_\rmb)&:=\sum_{i\in[n]}\sum_{j\in[N_\rma]}\bigg(\log\big(1+\Lambda_{\rmb,j}\Psi(n)\big)+1-\Big|\sqrt{\Lambda_{\rmb,j} \Psi(n)}Z_{\rmb,ij}-\sqrt{1+\Lambda_{\rmb,j} \Psi(n)}\Big|^2\bigg),\label{eq:def of Lnrt}\\*
S_n(\bLambda_\rmb)&:=\sum_{i\in[n]}\sum_{j\in[N_\rma]}\bigg(\log\big(1+\Lambda_{\rmb,j}\Psi(n)\big)+1-\frac{\big|\sqrt{\Lambda_{\rmb,j} \Psi(n)}Z_{\rmb,ij}-1\big|^2}{1+\Lambda_{\rmb,j} \Psi(n)}\bigg).\label{eq:def of Snrt}
\end{align}
For the CSIRT case, under the worst-case assumption in Section~\ref{sec:2.1 channel model}, given a blocklength $n\in\bbN$, an error probability constraint $\varepsilon\in(0,1)$ and an equal power constraint $\Psi(n)=O(\frac{1}{\sqrt{n}})$, the maximal achievable rate for quasi-static MIMO fading channels is defined as
\begin{align}
R_{\rmrt,\equ}^{*}(n,\varepsilon,\Psi(n)):=\sup\big\{R:\exists~ \mathrm{an}~(n,M)_\rmrt\mathrm{-code}~\mathrm{s.t.}~\rmP_\rme^{(n)}\leq\varepsilon~\text{and}~\forall~{w\in[M]},~\|f_\rmrt(w)\|_\rmF^2= n\Psi(n),~\forall~\bh_\rmw\in\calH_\rmw\big\}.
\end{align}

\begin{lemma}
\label{lemma:R* con non}
Given any $\varepsilon\in(0,1)$, the maximal transmission rate subject to a maximal error probability $\varepsilon\in(0,1)$ and an equal power constraint $\Psi(n)$ satisfies that
\begin{align}
\label{eq:R*rt non}
R_{\rmrt,\equ}^*(n,\varepsilon,\Psi(n))\leq\frac{1}{n}\log\frac{1}{\Pr\big\{L_n(\bLambda_\rmb)\geq n\gamma_n\big\}},
\end{align}
where $\gamma_n$ is chosen such that
\begin{align}
\label{eq:gamma S=epsilon}
\Pr\big\{S_n(\bLambda_\rmb)\leq n\gamma_n\big\}=\varepsilon.
\end{align}
\end{lemma}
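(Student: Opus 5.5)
This converse is a meta-converse (hypothesis-testing) bound, as in \cite[Theorem 5]{yang2014quasi}, specialized to the CSIRT case with an equal power constraint. Since Bob knows $\bH_\rmb$, we may absorb $\bL_\rmb$ and $\bV_\rmb$ from the SVD \eqref{eq:svd Hb} into the encoder and decoder. Unitary transformations preserve both the Frobenius norm of codewords and the noise statistics, so the channel reduces to $N_\rma$ parallel sub-channels $\bY_{\rmb,j}=\sqrt{\Lambda_{\rmb,j}}\bX_j+\bZ_{\rmb,j}$, $j\in[N_\rma]$. The remaining $N_\rmb-N_\rma$ output columns are pure noise and can be discarded.

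We will invoke the meta-converse \cite[Theorem 27]{polyanskiy_channel_2010}, conditioned on the realization of $\bLambda_\rmb$, which is available to both ends. For any auxiliary output distribution $Q$ independent of the input, it gives
\begin{align}
M\le \frac{1}{\beta_{1-\varepsilon}(P_{\bX^n\bY^n_\rmb\bLambda_\rmb},P_{\bX^n\bLambda_\rmb}Q)}.
\end{align}
We choose $Q$ as the product of i.i.d.\ $\calCN(0,1+\Lambda_{\rmb,j}\Psi(n))$ outputs on each sub-channel, i.e., the capacity-achieving output under equal power allocation $\Psi(n)$ per antenna. With this choice, the log-likelihood ratio $\log\frac{\rmd P_{Y|X}}{\rmd Q}$ evaluated at a codeword with $\|\bx_j\|^2=n\Psi(n)$ per column has, by spherical symmetry, a distribution that depends on the codeword only through its column norms. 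We therefore rotate each column to $\sqrt{n\Psi(n)}\,\mathbf{1}/\sqrt{n}$-type, i.e., equal-magnitude entries $\sqrt{\Psi(n)}$. A subtlety is that the constraint $\|f_\rmrt(w)\|_\rmF^2=n\Psi(n)$ is on the total power only; I would handle this by the same symmetry argument as \cite[Eq.~(98)]{yang2014quasi}, or by noting that under CSIRT equal allocation is the benchmark being defined.

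Under $P$ the information density then equals $S_n(\bLambda_\rmb)$. Indeed, with $Y=\sqrt{\Lambda_j\Psi}+Z$, we have $\log(1+\Lambda_j\Psi)+|Y|^2/(1+\Lambda_j\Psi)-|Z|^2$, which after reparametrizing the circularly symmetric $Z$ (replacing $Z$ by $-Z$ and rescaling) gives exactly the summand of \eqref{eq:def of Snrt}. Under $Q$ the corresponding statistic has the law of $L_n(\bLambda_\rmb)$. Next we apply the Neyman--Pearson lemma: the optimal test thresholds the log-likelihood ratio at $n\gamma_n$. Choosing $\gamma_n$ so that $\Pr\{S_n\le n\gamma_n\}=\varepsilon$ as in \eqref{eq:gamma S=epsilon} makes the test have $P$-probability exactly $1-\varepsilon$ of accepting. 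Its $Q$-probability is then $\Pr\{L_n\ge n\gamma_n\}$, which yields \eqref{eq:R*rt non} after taking $\frac1n\log$.

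Because $\bLambda_\rmb$ is random, the threshold must be common across fading realizations. I would use a single threshold test on the joint variable $(\bY^n_\rmb,\bLambda_\rmb)$ with the auxiliary $Q$ sharing the marginal of $\bLambda_\rmb$, so that both probabilities are averaged over the fading. Continuity of the distribution of $S_n$ ensures that equality in \eqref{eq:gamma S=epsilon} is attainable without randomization.

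The main obstacle is verifying that the $Q$-law of the log-likelihood ratio matches $L_n$ exactly. This requires expressing $|Y|^2$, with $Y\sim\calCN(0,1+\Lambda_j\Psi)$, as $(1+\Lambda_j\Psi)|Z'|^2$ and checking the sign and scaling conventions carefully, rather than relying on any asymptotic argument.
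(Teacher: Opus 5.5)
Your proposal is correct and follows the paper's proof essentially step for step. It uses the same auxiliary distribution $P_{\bLambda_\rmb}\prod_{j}\calCN\big(\mathbf{0},(1+\Lambda_{\rmb,j}\Psi(n))\bI_n\big)$ sharing the fading marginal (so the likelihood ratio reduces to the conditional one), the Neyman--Pearson threshold test, reduction to the codeword with all entries $\sqrt{\Psi(n)}$, and identification of the two laws as $L_n(\bLambda_\rmb)$ and $S_n(\bLambda_\rmb)$. The only difference is that the paper invokes the maximal-error meta-converse \cite[Theorem 30]{polyanskiy_channel_2010} with an infimum over codewords rather than your average-error joint form, which is immaterial here.

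On the power subtlety you flag, the paper implicitly takes your second option, imposing power $n\Psi(n)$ on each eigen-column via $\calF^{n\times N_\rma}$ and $\bx_2^n$. Your symmetry alternative would not suffice by itself. The mean of the log-likelihood ratio under $P$ contains $\sum_j \Lambda_{\rmb,j}\|\bx_j\|^2/(1+\Lambda_{\rmb,j}\Psi(n))$, so its law depends on how a total power budget is split across columns, not only on the total.
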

The proof of Lemma \ref{lemma:R* con non} is provided in Appendix \ref{appendix:csirt con non}.
To obtain the non-asymptotic bound, we apply the meta-converse theorem \cite[Theorem 30]{polyanskiy_channel_2010} to the sub-channels of the channel \eqref{eq:yb channel} obtained via SVD of $\bH_\rmb$. Under CSIRT, we factor the actual and auxiliary joint distributions using their common marginal $P_{\bLambda_\rmb}$, which cancels in the likelihood ratio and reduces it to a conditional likelihood ratio given $\bLambda_\rmb$.

Lemma \ref{lemma:R* con non} differs significantly from the corresponding results in \cite[Theorems 2, 6]{yang2014quasi} in forms and proofs.
Unlike the non-covert scenario, the vanishing transmit power imposed by the covertness constraint motivates us to redesign the log-likelihood ratio, whose distributions under the auxiliary output and the actual channel output are presented in \eqref{eq:def of Lnrt} and \eqref{eq:def of Snrt}, respectively.

\subsubsection{Asymptotic Analysis}

Given $\Psi(n)=O(\frac{1}{\sqrt{n}})$, define the covert outage rate with $\Psi(n)$ as
\begin{align}
\label{eq:con C final def}
C_{\varepsilon}(\Psi(n))
:=\sup\Big\{ R:\Pr\Big\{ \log\det\big(\bI_{N_\rmb}+\Psi(n)\bH_\rmb^\rmH\bH_\rmb\big)<R \Big\}\leq\varepsilon \Big\}.
\end{align}
Compared with the definition in \eqref{eq:ach C final def}, \eqref{eq:con C final def} does not involve the truncation parameter $\rho$ and is associated with a distinct value of $\Psi(n)$, which will be specified later.
Nevertheless, this difference does not affect the final characterization, as the achievability and converse bounds match.
Analogously to \cite[Lemma 39]{polyanskiy_channel_2010}, the asymptotic result under the maximal power constraint can be obtained from the non-asymptotic bound under the equal power constraint in Lemma \ref{lemma:R* con non}.

\begin{lemma}
\label{lemma:csirt con reliability}
For quasi-static MIMO fading channels with power function $\Psi(n)=O(\frac{1}{\sqrt{n}})$ and $\varepsilon\in(0,1)$, it follows that
\begin{align}
\label{eq:con reliability R}
R_{\rmrt,\max}^*(n,\varepsilon,\Psi(n))\leq C_{\varepsilon}(\Psi(n))+O\Big(\frac{\log n}{n}\Big).
\end{align}
\end{lemma}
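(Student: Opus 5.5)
We first reduce the maximal power constraint to the equal power constraint, then evaluate the non-asymptotic bound of Lemma~\ref{lemma:R* con non} asymptotically.

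\textbf{Step 1 (maximal to equal power).} Following \cite[Lemma 39]{polyanskiy_channel_2010}, append one extra channel use to every codeword of an $(n,M)_\rmrt$-code with $\max_w\|f_\rmrt(w)\|_\rmF^2\le n\Psi(n)$. The extra symbol is chosen so that every codeword has energy exactly $(n+1)\Psi'(n+1)$, with $\Psi'(n+1)=\frac{n}{n+1}\Psi(n)$. Since $\Psi(n)$ is deterministic and Alice knows $\bH_\rmb$ under CSIRT, this does not increase the error probability. Hence
\begin{align*}
nR_{\rmrt,\max}^*(n,\varepsilon,\Psi(n))\le (n+1)R_{\rmrt,\equ}^*(n+1,\varepsilon,\Psi'(n+1)).
\end{align*}
The change from $n$ to $n+1$, and the perturbation of $\Psi$ by a relative $O(1/n)$, only contribute $O(\frac{1}{n})$ to the final expression. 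This is because $C_\varepsilon(\cdot)$ is, to first order, linear in $\Psi$ with coefficient the quantile of $\tr(\bLambda_\rmb^2)$.

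\textbf{Step 2 (choice of $\gamma_n$ via $S_n$).} Conditioned on $\bLambda_\rmb$, each summand of $S_n(\bLambda_\rmb)$ has mean $\log(1+\Lambda_{\rmb,j}\Psi(n))$ and variance $O(\Psi(n))=O(n^{-1/2})$. Therefore
\begin{align*}
\frac{1}{n}S_n=\sum_j\log(1+\Lambda_{\rmb,j}\Psi(n))+\frac{1}{\sqrt n}O_p(n^{-1/4}),
\end{align*}
and the fluctuation of $\frac{1}{n}S_n$ is of order $n^{-3/4}$. The rate itself is of order $n^{-1/2}$. Write
\begin{align*}
G(\bLambda_\rmb):=\log\det(\bI+\Psi\bH_\rmb^\rmH\bH_\rmb)\approx\Psi(n)\tr(\bLambda_\rmb^2).
\end{align*}
Then
\begin{align*}
\Pr\{S_n\le n\gamma\}=\bbE\big[\Pr\{S_n\le n\gamma\mid\bLambda_\rmb\}\big]\approx F_\rmout(\gamma)+\text{correction}.
\end{align*}
We apply the Cram\'er--Esseen theorem conditionally on $\bLambda_\rmb$, checking its conditions for the vanishing-power summands. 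This gives the conditional cdf as $Q(\cdot)$ plus a skewness term plus $O(1/n)$. We then integrate against the density of $G$. As in \cite{yang2014quasi}, the Gaussian smoothing term integrates to $F_\rmout(\gamma)+O(\frac{1}{n})\cdot(\text{bound on } f'_G)$.

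\textbf{Main obstacle.} Here $f_G=O(\sqrt n)$ and $f_G'=O(n)$, because $G$ has spread of order $\Psi(n)$. The naive remainder is therefore not $O(1/n)$. We handle this by rescaling. Set $\tilde G=G/\Psi(n)$, whose law is asymptotically that of $\tr(\bLambda_\rmb^2)$ and is independent of $n$. The conditional standard deviation in the rescaled units is $n^{-1/2}\Psi^{-1/2}=O(n^{-1/4})$. Assuming $\tr(\bLambda_\rmb^2)$ has a bounded $C^1$ density near its $\varepsilon$-quantile, the smoothing error in rescaled units is $O(\text{variance})=O(n^{-1/2})$. Converting back, this is an error of $O(n^{-1/2}\Psi)=O(n^{-1})$ in $\gamma$. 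This mirrors the compensation via $F'_\rmout=O(\sqrt n)$ described for the achievability part. The conclusion is $\gamma_n\le C_\varepsilon(\Psi(n))+O(\frac{1}{n})$.

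\textbf{Step 3 (numerator term).} Under the auxiliary law, $L_n(\bLambda_\rmb)$ has conditional mean $\sum_{i,j}\big(\log(1+\Lambda\Psi)-\Lambda\Psi\big)\approx -n\sum_j\Lambda_{\rmb,j}^2\Psi^2/2$. This is $O(1)$ below $n\gamma_n$ in units where $\gamma_n=O(n^{-1/2})$, so it does not dominate the tail. We lower bound $\Pr\{L_n\ge n\gamma_n\}$ by a change of measure (tilting) argument, or directly by conditioning on $\bLambda_\rmb$ and using the exact noncentral chi-square law. The aim is a bound of the form $\exp(-n\gamma_n-O(\log n))$: the exponential factor comes from the log-likelihood ratio identity, and the polynomial factor from a local limit (Berry--Esseen) estimate for the tilted sum. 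This yields
\begin{align*}
\frac{1}{n}\log\frac{1}{\Pr\{L_n\ge n\gamma_n\}}\le\gamma_n+O\Big(\frac{\log n}{n}\Big).
\end{align*}

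Combining Steps 1--3 with Lemma~\ref{lemma:R* con non} gives \eqref{eq:con reliability R}.
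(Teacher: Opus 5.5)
Your proposal is correct and follows essentially the same route as the paper: a Lemma~39-type reduction from maximal to equal power; a conditional Cram\'er--Esseen expansion for $S_n(\bLambda_\rmb)$, averaged over $\bLambda_\rmb$, with an $O(n^{-1/2})$ probability error that the $\Theta(\sqrt{n})$ slope of the covert outage cdf converts into an $O(\frac{1}{n})$ threshold shift (your rescaling $\tilde{G}=G/\Psi(n)$ is a cleaner bookkeeping of the paper's bounds $f_U=O(n^{1/4})$, $f'_U=O(n^{1/2})$ and $F'_\rmout=O(\sqrt{n})$); and a change-of-measure lower bound on $\Pr\{L_n(\bLambda_\rmb)\ge n\gamma_n\}$, which is what the paper obtains from the Polyanskiy--Poor--Verd\'u bound $\beta_{1-\varepsilon}\ge e^{-n\gamma}(\Pr\{S_n(\bLambda_\rmb)\le n\gamma\}-\varepsilon)$. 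As in the paper, the compensation step tacitly needs the density of $\tr(\bLambda_\rmb^2)$ to be bounded away from zero near its $\varepsilon$-quantile, not merely bounded, so that the probability error can be inverted into an $O(\frac{1}{n})$ error in $\gamma_n$.
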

The proof of Lemma \ref{lemma:csirt con reliability} is provided in Appendix \ref{appendix:csirt con asy}.
Using the Neyman--Pearson Lemma \cite[Appendix B]{polyanskiy_channel_2010}, \eqref{eq:R*rt non} involving $L_n(\bLambda_\rmb)$ is transformed into \eqref{eq:R* original S} involving $S_n(\bLambda_\rmb)$, thereby enabling analysis under the actual channel output. The remainder of the proof is provided in Appendices \ref{appendix:Proof of CLT S} and \ref{appendix:Proof of eliminate Q in S}.
Specifically, \emph{1)} we condition on $\bLambda_\rmb=\blambda_\rmb$ and use the Cram\'er--Esseen theorem \cite[Theorem 15]{yang2014quasi} to obtain a normal approximation of $\Pr\{S_n(\bLambda_\rmb)\leq n\gamma_n|\bLambda_\rmb\}$, and \emph{2)} we average the asymptotic result over $\bLambda_\rmb$ to eliminate the influence of the second-order term.

Lemma \ref{lemma:csirt con reliability} shows from the converse that quasi-static fading channels retain a zero second-order term under the covertness constraint, in contrast to the non-zero second-order term in the benchmark \eqref{eq:theoretical benchmark awgn} for non-fading channels.
Compared with MIMO AWGN channels \cite[Lemma 5]{liu2026covertMIMO}, the change of the second-order term relies on two key steps. Firstly, following \cite[Appendix IV-A]{yang2014quasi}, we invoke the Cram\'er--Esseen Theorem \cite[Theorem 15]{yang2014quasi} to obtain an expansion accurate up to an $O(\frac{1}{n})$ term, rather than relying on the Berry--Esseen Theorem \cite[Theorem 44]{polyanskiy_channel_2010}, whose expansion is only accurate up to $O(\frac{1}{\sqrt{n}})$.
Secondly, as in \cite[Appendix IV-C]{yang2014quasi}, the randomness of the fading channel allows averaging over all channel realizations, and reapplying Lemma \ref{lemma:B sqrtn A ff'} along with the outage property of quasi-static fading channels, we conclude that the second-order term vanishes.

Compared with the non-covert case, under the vanishing power $\Psi(n)=O(\frac{1}{\sqrt{n}})$, both the quantities involved in verifying the conditions of the Cram\'er--Esseen Theorem and the resulting second-order asymptotics take different forms.
Furthermore, we construct the normalized random variable $u(\bLambda_\rmb)$ of $S_n(\bLambda_\rmb)$ in \eqref{eq:S sum variance} to connect the processed right hand side of \eqref{eq:R*rt non} with the covert outage probability, and characterize its asymptotic behavior using Lemma~\ref{lemma:B sqrtn A ff'}.
Under the scaling of $\Psi(n)$, the pdf of $u(\bLambda_\rmb)$ satisfies $f_U=O(n^\frac{1}{4})$ and its derivative satisfies $f'_U=O(n^\frac{1}{2})$ (cf. Appendices \ref{subapx:f U} and \ref{subapx:f' U}), in contrast to $f_U=f'_U=O(1)$ in \cite[Page 4249]{yang2014quasi}. This change causes the remainder term $O(\frac{1}{n})$ in \cite[Eq.~(163)]{yang2014quasi} to inflate to $O(\frac{1}{\sqrt{n}})$. Analogously to the achievability part, these effects lead to the unexpected $O(\frac{1}{\sqrt{n}})$ term, and are compensated by the outage nature of covert communication.

\subsection{Final Steps}
\label{subsec:con csit final}

Using the covertness analysis in Lemma \ref{lemma:converse power} and the reliability analysis in Lemma \ref{lemma:csirt con reliability}, we upper bound $R^*_\rmrt(n,\varepsilon,\delta)$ in \eqref{eq2:both constraints} as
\begin{align}
R^*_\rmrt(n, \varepsilon, \delta)
&\leq R_{\rmrt,\avg}^*(n,\varepsilon,\Psi_\rmm(n))\label{eq:con R* final 1}\\*
&\leq R_{\rmrt,\max}^*\Big(n,\varepsilon,(1+\frac{1}{n})\Psi_\rmm(n)\Big)+ \frac{\log(n+1)}{n} \label{eq:con R* final 2}\\*
&\leq C_{\varepsilon}\Big((1+\frac{1}{n})\Psi_\rmm(n)\Big)+O\Big(\frac{\log n}{n}\Big),\label{eq:con R* final 3}
\end{align}
where \eqref{eq:con R* final 1} follows from \eqref{eq:conM}, \eqref{eq:con R* final 2} follows from \cite[Lemma 39]{polyanskiy_channel_2010} with $P=\Psi_\rmm(n)$ and $P'=(1+\frac{1}{n})\Psi_\rmm(n)$ in \cite[Eq. (196)]{polyanskiy_channel_2010}, and \eqref{eq:con R* final 3} follows from \eqref{eq:con reliability R}.

Recall from Section \ref{sec:2.1 channel model} that $\bLambda_\rmb=\diag(\{\sqrt{\Lambda_{\rmb,j}}\}_{j\in[N_\rma]})$ contains singular values of $\bH_\rmb$, implying that $\bLambda_\rmb^2=\diag(\{\Lambda_{\rmb,j}\}_{j\in[N_\rma]})$ contains eigenvalues of $\bH_\rmb^\rmH\bH_\rmb$ with zeros padded for remaining $(N_\rmb-N_\rma)$ eigenvalues. Recall from Section \ref{section_main_results} that $K(\bLambda_\rmb)=\sqrt{2}\tr(\bLambda_\rmb^2)(\tr(\blambda_0^4))^{-\frac{1}{2}}$ and $\kappa_\varepsilon=\inf\{k:\Pr\{K(\bLambda_\rmb)\leq k\}\geq\varepsilon\}$. 
Choose $\omega=1+\frac{1}{n}$ in \eqref{eq:con power level}. With sufficiently large $n$, it follows from \eqref{eq:con C final def} that
\begin{align}
C_{\varepsilon}\Big((1+\frac{1}{n})\Psi_\rmm(n)\Big)
&=\sup\bigg\{ R:\Pr\bigg\{ \log\det\Big(\bI_{N_\rmb}+(1+\frac{1}{n})\Psi_\rmm(n)\bH_\rmb^\rmH\bH_\rmb\Big)<R \bigg\}\leq\varepsilon \bigg\}\\*
&=\sup\Bigg\{ R:\Pr\Bigg\{ \sum_{j\in[N_\rmb]}\log \Big( (1+\frac{1}{n})\Psi_\rmm(n)\Lambda_{\rmb,j}+1 \Big)<R \Bigg\}\leq\varepsilon \Bigg\}\label{eq:final con 1}\\*
&=\sup\Bigg\{ R:\Pr\Bigg\{ \sum_{j\in[N_\rmb]}\Big( (1+\frac{1}{n})\Psi_\rmm(n)\Lambda_{\rmb,j}-\frac{1}{2} \big((1+\frac{1}{n})\Psi_\rmm(n)\Lambda_{\rmb,j}\big)^2+o(n^{-1}) \Big)<R \Bigg\}\leq\varepsilon \Bigg\}\label{eq:final con 2}\\*
&=\sup\Bigg\{ R:\Pr\Bigg\{ \sum_{j\in[N_\rmb]}\bigg(\Lambda_{\rmb,j}\sqrt{\frac{2\delta (1+\frac{1}{n})^3 }{n \tr( \blambda_0^4)}}-\frac{\Lambda_{\rmb,j}^2\delta(1+\frac{1}{n})^3}{n\tr( \blambda_0^4)}+o(n^{-1})\bigg)<R \Bigg\}\leq\varepsilon \Bigg\}\label{eq:final con 3}\\*
&=\sup\Bigg\{ R:\Pr\Bigg\{ \sum_{j\in[N_\rmb]}\bigg(\Lambda_{\rmb,j}\sqrt{\frac{2\delta}{n\tr\big(\blambda_0^4\big)}}+O(n^{-1})\bigg)<R \Bigg\}\leq\varepsilon \Bigg\}\label{eq:final con 4}\\*
&=\kappa_\varepsilon\sqrt{\frac{\delta}{n}}+O(n^{-1}),\label{eq:quantile con 3}
\end{align}
where \eqref{eq:final con 1} follows from the fact that 
$\det(\bA)=\prod_j \Lambda_j(\bA)$ with $\Lambda_i(\bA)$ denoting the eigenvalues of $\bA$, \eqref{eq:final con 2} follows from the Taylor Series of expansion of $\log(1+x)$ around $x=0$ which implies $\log(1+x)=x-\frac{1}{2}x^2+o(x^2)$ and the fact that $\Psi_\rmm(n)=O(n^{-\frac{1}{2}})$, \eqref{eq:final con 3} follows from \eqref{eq:con power level} and the fact that $\omega=1+\frac{1}{n}$, \eqref{eq:final con 4} follows from the Taylor Series of expansions of $(1+x)^{\frac{3}{2}}$ and $(1+x)^{3}$ around $x=0$, which implies $(1+x)^{\frac{3}{2}}=1+\frac{3}{2}x+o(x)$ and $(1+x)^{3}=1+3x+o(x)$, respectively. Note that the right hand side of \eqref{eq:final con 4} equals the right hand side of \eqref{eq:final ach 4}, \eqref{eq:quantile con 3} obviously follows from \eqref{eq:quantile ach 3}.

Substituting \eqref{eq:quantile con 3} into \eqref{eq:con R* final 3} leads to
\begin{align}
R^*_\rmrt(n, \varepsilon, \delta)
\leq \kappa_\varepsilon\sqrt{\frac{\delta}{n}}+O\Big(\frac{\log n}{n}\Big).
\end{align}
The converse proof of Theorem \ref{theorem 1} is now completed.

\section{Conclusion}
\label{section:conclusion}

We studied the second-order asymptotics of optimal codes for covert communication over quasi-static MIMO fading channels under the KL divergence covertness metric.
Specifically, we showed that the first-order covert rate follows the square root law and scales in the order of $\Theta(n^{-\frac{1}{2}})$ with a coefficient determined by the traces of the channel matrices of legitimate users and the warden, and the second-order rate vanishes. In particular, we showed that CSI availability at the legitimate transmitter can improve the first-order covert rate, and we demonstrated the critical role of the number of antennas to achieve high throughput covert communication.
Our theoretical benchmarks generalized the previous results for the non-fading AWGN case in~\cite{liu2026covertMIMO} and the non-covert case in~\cite{yang2014quasi}, and provided new design insights.
For the covertness analysis, we extended the quasi-$\eta$-neighborhood framework \cite{yu_second_2023,liu2026covertMIMO} to quasi-static fading channels. For the reliability analysis, due to the vanishing power imposed by the covertness constraint, we refined second-order asymptotic techniques \cite{yang2014quasi}, by controlling higher-order terms and exploiting the order behavior of the covert outage probability.

We next discuss future research directions.
Firstly, we focused on P2P MIMO channels with only one legitimate transmitter or receiver. However, practical communication systems usually involved multiple users. Thus, it is worthwhile to extend our results to the multiple access channel~\cite{arumugam2019covert} and the broadcast channel~\cite{tan2018time} .
Secondly, we assumed a secret key of infinite length is available between legitimate users. However, in practice, key sizes are usually limited. Thus, it is valuable to bound the key size necessary to ensure covert communication and study the tradeoff between the key rate and the covert rate~\cite{tahmasbi_first-_2019,wang_covert_2021}. Thirdly, we focused on theoretical benchmarks and used random coding with exhaustive search based decoding, which has prohibitively high computational complexity. However, practical communication systems are usually resource limited. Thus, it is of great interest to design low-complexity covert coding schemes, potentially using machine learning techniques~\cite{jiang2019turbo,choi2019neural,ozyilkan2025learning}.
Finally, we assumed that the warden Willie is strong and has perfect knowledge of the CSI for the channel between the legitimate transmitter and the warden. However, in practice, obtaining exact CSI is challenging. Thus, it is beneficial to study the impact of CSI imperfection of the warden on the covert transmission rate~\cite{soltani2018covert,he2017covert,sobers2017covert,hayashi2023covert}.

\appendix

\subsection{Proof of Corollary~\ref{corollary:CSIT gain}}
\label{appendix:cor1}


\subsubsection{Achievability Proof of Corollary~\ref{corollary:CSIT gain}}
\label{subsubapx:ach opt power}

Following Definition~\ref{def:distributions}, we now allow different powers to be allocated across the transmit antennas. Define the vector power allocation as $\bPsi(n):=(\Psi_1(n),\ldots,\Psi_{N_\rma}(n))$, where $\Psi_j(n)=O(\frac{1}{\sqrt{n}})$ for any $j\in[N_\rma]$. Given any $M\in\bbN$, a random codebook $\{\bX^n(w)\}_{w\in[M]}$ is generated such that each codeword $\bX^n(w)\in\calX^{m\times n}$ is independently drawn from $\Pi_{\bX^n}^{\mathrm{tG}}=\prod_{j\in[N_\rma]} \Pi_{X_j^n}^{\mathrm{tG}}(X^n_j)$, where $\Pi_{X_j^n}^{\mathrm{tG}}$ is the truncated Gaussian distribution defined in \eqref{eq:tG} with $\Psi(n)$ therein replaced by $\Psi_j(n)$ for each $j\in[N_\rma]$. Recall from Table \ref{tab:notations} that $\blambda_0=\sqrt{\lambda_0}\bI_{N_\rma}$ is the upper bound for channel gains of the channel matrix $\bh_\rmw$ from Alice to Willie. Recall that $\rho\in(0,1)$, $\nu(n)>1$ and $\lim\limits_{n\to\infty}\nu(n)=1$. For any $n\in\bbN$, when the power is set as
\begin{align}
\label{eq4:power level opt}
\frac{n\rho^2\nu^2(n)\lambda_0^2}{2}\sum_{j\in[N_\rma]}\Psi_j^2(n)\le\delta,
\end{align}
using the truncated Gaussian codebook where each codeword is generated from $\Pi_{\bX^n}^\mathrm{tG}$, the covertness constraint \eqref{eq4:covertness constraint} is satisfied. The proof of \eqref{eq4:power level opt} is identical to that of Lemma~\ref{lemma:covert}, except that equal power allocation is replaced by the different sub-channel power $\Psi_j(n)$ for each $j\in[N_\rma]$.
We next analyze reliability by establishing a non-asymptotic bound and then deriving its asymptotic characterization. Recall that $\calB(\cdot,\cdot)$ is the Beta distribution, $(N_\rma,N_\rmb)$ are the numbers of antennas at Alice and Bob, respectively, $\{\sqrt{\Lambda_{\rmb,j}}\}_{j\in[N_\rma]}$ are singular values of $\bH_\rmb$, and $Z_{\rmb,ij}$ be the $(i,j)$-th entry of noise matrix $\bZ_\rmb^n$. For each $j\in[N_\rma]$, define the following random variable
\begin{align}
\label{eq:def of Tj opt}
T_j':=\frac{\sum_{i\in[2,n]}\big|Z_{\rmb,ij}\big|^2}{\big|\sqrt{n\rho\Psi_j(n)\Lambda_{\rmb,j}}+Z_{\rmb,1j}\big|^2+\sum_{i\in[2,n]}\big|Z_{\rmb,ij}\big|^2}.
\end{align}
Fix any $\varepsilon\in(0,1)$ and $\tau\in(0,\varepsilon)$. There exists an $(n,M)_\rmt$-code with the maximal error probability constraint $\varepsilon$ such that
\begin{align}
\label{eq:logM/n non ach opt}
\frac{\log M}{n}\geq\frac{1}{n}\log\frac{\tau}{\Pr\Big\{\prod_{j\in[N_\rmb]}B_j\leq\gamma_n\Big\}},
\end{align}
where $\{B_j\}_{j\in[N_\rmb]}$ are i.i.d. with $B_j\sim\calB(n-N_\rma-j+1,N_\rma)$, and $\gamma_n$ is chosen so that
\begin{align}
\label{eq:sin2 pr 1-a opt}
\Pr\bigg\{\prod_{j\in[N_\rma]}T_j' \leq {\gamma _n}\bigg\}\geq 1-\varepsilon+\tau.
\end{align}
The proof of \eqref{eq:logM/n non ach opt} is identical to that of Lemma~\ref{lemma:csit ach non}, except that equal power allocation is replaced by the optimizable vector power allocation $\{\Psi_j(n)\}_{j\in[N_\rma]}$, and CSIT enables precoding using $\bL_\rmb^\rmH$ in \eqref{eq:svd Hb}.
Define the feasible set of vector power allocations satisfying \eqref{eq4:power level opt} as
\begin{align}
\calV_n:=\bigg\{\bPsi(n)\in\bbR_+^{N_\rma}:\frac{n\rho^2\nu^2(n)\lambda_0^2}{2}\sum_{j\in[N_\rma]}\Psi_j^2(n)\leq\delta\bigg\}.
\label{eq:feasible power set}
\end{align}
For any $j\in[N_\rma]$, the covert outage rate with $\{\Psi_j(n)\}_{j\in[N_\rma]}=O(\frac{1}{\sqrt{n}})$ for CSIT under optimal power allocation is given by
\begin{align}
C_{\varepsilon}^\rmt(n):=\sup\Bigg\{R:\Pr\Bigg\{
\max_{\bPsi(n)\in\calV_n}\sum_{j\in[N_\rma]}\log\big(1+\rho\Psi_j(n)\Lambda_{\rmb,j}\big)<R\Bigg\}\leq\varepsilon\Bigg\}.
\label{eq:ach C final def opt}
\end{align}
The proof of \eqref{eq:ach C final def opt} is identical to that of Lemma~\ref{lemma:csit ach reliability}, except that, after removing the equal power restriction, we optimize power allocation across the transmit directions to maximize the covert outage rate.
It follows from \cite[Lemma~49]{poly_Channel_2010} that optimizing only the leading term of the covert outage rate in \eqref{eq:ach C final def opt} is sufficient. Recall that $\{\Psi_j(n)\}_{j\in[N_\rma]}=O(\frac{1}{\sqrt{n}})$.
Given $\bLambda_\rmb=\blambda_\rmb$, combining the covertness constraint in \eqref{eq4:power level opt} and the leading term of the reliability bound in \eqref{eq:ach C final def opt} yields the following optimization problem:
\begin{align}
&\max_{\Psi_j(n)\ge0,\ j\in[N_\rma]}\sum_{j\in[N_\rma]}\lambda_{\rmb,j}\Psi_j(n),\label{eq:optimal solve}\\
&\text{s.t. } \frac{n\rho^2\nu^2(n)\lambda_0^2}{2}\sum_{j\in[N_\rma]}\Psi_j^2(n)\le \delta .\label{eq:optimal constraint}
\end{align}
For any feasible $\bPsi(n)$, it follows from the Cauchy--Schwarz inequality that
\begin{align}
\bigg(\sum_{j\in[N_\rma]}\lambda_{\rmb,j}\Psi_j(n)\bigg)^2
&\leq\bigg(\sum_{j\in[N_\rma]}\Psi_j^2(n)\bigg)\bigg(\sum_{j\in[N_\rma]}\lambda_{\rmb,j}^2\bigg)\label{eq:csit_opt_cs_1}\\
&\leq\frac{2\delta}
{n\rho^2\nu^2(n)\lambda_0^2}
\sum_{j\in[N_\rma]}\lambda_{\rmb,j}^2,
\label{eq:csit_opt_cs_2}
\end{align}
where \eqref{eq:csit_opt_cs_2} follows from \eqref{eq:optimal constraint}.
It remains to show that the upper bound in
\eqref{eq:csit_opt_cs_2} is achievable. It follows from the Cauchy--Schwarz inequality that the equality in \eqref{eq:csit_opt_cs_1} holds if and only if there exists a constant $c>0$ such that for any $j\in[N_\rma]$,
\begin{align}
\Psi_j(n)=c\lambda_{\rmb,j}.
\label{eq:csit_opt_equality}
\end{align}
To satisfy the constraint in \eqref{eq:optimal constraint}, we choose $c$ such that
\begin{align}
\frac{n\rho^2\nu^2(n)\lambda_0^2}{2}c^2\sum_{j\in[N_\rma]}\lambda_{\rmb,j}^2=\delta.
\label{eq:csit_opt_c}
\end{align}
Given $\bLambda_\rmb=\blambda_\rmb$, let $\bPsi^*(n,\blambda_\rmb):=(\Psi_1^*(n,\blambda_\rmb),\ldots,\Psi_{N_\rma}^*(n,\blambda_\rmb))$ be optimal power allocation.
Substituting \eqref{eq:csit_opt_c} into
\eqref{eq:csit_opt_equality} leads that for any $j\in[N_\rma]$,
\begin{align}
\label{eq:csit_opt_psi_star}
\Psi_j^*(n,\blambda_\rmb)=\frac{\lambda_{\rmb,j}}
{\sqrt{\sum_{j'\in[N_\rma]}\lambda_{\rmb,j}^2}}\cdot
\sqrt{\frac{2\delta}{n\rho^2\nu^2(n)\lambda_0^2}}
=\frac{\lambda_{\rmb,j}}{\sqrt{\tr(\blambda_\rmb^4)}}\cdot\sqrt{\frac{2\delta}{n\rho^2\nu^2(n)\lambda_0^2}}.
\end{align}
Recall from Section \ref{section_main_results} that $\bLambda_\rmb=\diag(\{\sqrt{\Lambda_{\rmb,j}}\}_{j\in[N_\rma]})$ contains singular values of $\bH_\rmb$, $K'(\bLambda_\rmb)=\frac{\sqrt{2}}{\lambda_0}\sqrt{\sum_{j\in[N_\rma]}\Lambda_{\rmb,j}^2}$ and $\kappa_\varepsilon'=\inf\{k:\Pr\{K'(\bLambda_\rmb)\leq k\}\geq\varepsilon\}$. 
With sufficiently large $n$, similarly to \eqref{eq:quantile ach 3}, choosing $\rho=1-\frac{1}{n}$ and $\nu^2(n)=1+\frac{1}{n}$, and substituting \eqref{eq:csit_opt_psi_star} into \eqref{eq:ach C final def opt} lead to
\begin{align}
C_\varepsilon^\rmt(n)=\kappa_\varepsilon'\sqrt{\frac{\delta}{n}}+O(n^{-1}).
\end{align}
The achievability proof of Corollary~\ref{corollary:CSIT gain} is completed.

\subsubsection{Converse Proof of Corollary~\ref{corollary:CSIT gain}}
\label{subsubapx:con opt power}

Based on the SVD in Section~\ref{section_main_results}, let $x_{ij}(w)\in\calX$ be the symbol transmitted on the $j$-th SVD sub-channel at $i$-th channel use when message $w\in[M]$ is transmitted. Define a vector of average power functions $\bPsi_\rma(n):=(\Psi_{\rma,1}(n),\ldots,\Psi_{\rma,N_\rma}(n))$ where for any $j\in[N_\rma]$,
\begin{align}
\label{eq:def of psi aj}
\Psi_{\rma,j}(n):=\frac{1}{nM}\sum_{i\in[n]}\sum_{w\in[M]}\big|x_{ij}(w)\big|^2.
\end{align}
Recall from Table \ref{tab:notations} that $\blambda_0=\sqrt{\lambda_0}\bI_{N_\rma}$.
Fix any $\omega>1$. Without equal power allocation, any $(n,M)_\star$-code that satisfies the covertness constraint in \eqref{eq2:both constraints} has average power $\Psi_{\rma,j}(n)$ such that
\begin{align}
\label{eq:con power bound opt}
\frac{n\lambda_0^2}{2\omega}\sum_{j\in[N_\rma]}\Psi_{\rma,j}^2(n)\le\delta.
\end{align}
The proof of \eqref{eq:con power bound opt} is identical to that of Lemma~\ref{lemma:converse power}, except that equal power allocation is replaced by the different sub-channel average power $\Psi_{\rma,j}(n)$ for each $j\in[N_\rma]$. Define the feasible set of vector power allocations satisfying \eqref{eq:con power bound opt} as
\begin{align}
\calV_{\rma,n}:=\bigg\{\bPsi_\rma(n)\in\bbR_+^{N_\rma}:\frac{n\lambda_0^2}{2\omega}\sum_{j\in[N_\rma]}\Psi_{\rma,j}^2(n)\le\delta\bigg\}.
\label{eq:feasible power set con}
\end{align}
Recall that the asymptotic reliability bound in
Lemma~\ref{lemma:csirt con reliability} is derived for the
CSIRT case under equal power allocation. Extending the same
analysis to arbitrary feasible vector power allocations, we define the optimized covert outage rate for CSIRT as
\begin{align}
C_\varepsilon^\rmrt(n):=\sup\Bigg\{R:\Pr\Bigg\{
\max_{\bPsi_\rma(n)\in\calV_{\rma,n}}\sum_{j\in[N_\rma]}\log\big(1+\Lambda_{\rmb,j}\Psi_{\rma,j}(n)\big)<R\Bigg\}\leq\varepsilon\Bigg\}.
\label{eq:con C final def opt}
\end{align}
Using the covertness analysis in \eqref{eq:con power bound opt} and the reliability analysis in \eqref{eq:con C final def opt}, we optimize the power vector $\bPsi_\rma(n)$. Let $\bPsi_\rma^*(n,\blambda_\rmb):=(\Psi_{\rma,1}^*(n,\blambda_\rmb),\ldots,\Psi_{\rma,N_\rma}^*(n,\blambda_\rmb))$ be the power vector that maximizes the converse bound. 
Given $\bLambda_\rmb=\blambda_\rmb$, among all vector average power vectors $\bPsi_\rma(n)$ satisfying the KL divergence constant in \eqref{eq:con power bound opt}, the power vector $\bPsi_\rma^*(n)$ that maximizes the covert outage rate in \eqref{eq:con C final def opt} is that for any $j\in[N_\rma]$,
\begin{align}
\Psi_{\rma,j}^*(n,\blambda_\rmb)=\frac{\lambda_{\rmb,j}}{\sqrt{\tr(\blambda_\rmb^4)}}\cdot\sqrt{\frac{2\delta\omega}{n\lambda_0^2}}.
\label{eq:converse_optimized_power}
\end{align}
The proof of \eqref{eq:converse_optimized_power} is identical to that of \eqref{eq:csit_opt_psi_star}, except that $\Psi_j(n)$ in \eqref{eq:optimal solve} is replaced by $\Psi_{\rma,j}(n)$, and the constraint \eqref{eq:optimal constraint} is replaced by \eqref{eq:con power bound opt}.
Recall from Section \ref{section_main_results} that $\bLambda_\rmb=\diag(\{\sqrt{\Lambda_{\rmb,j}}\}_{j\in[N_\rma]})$ contains singular values of $\bH_\rmb$, $K'(\bLambda_\rmb)=\frac{\sqrt{2}}{\lambda_0}\sqrt{\sum_{j\in[N_\rma]}\Lambda_{\rmb,j}^2}$ and $\kappa_\varepsilon'=\inf\{k:\Pr\{K'(\bLambda_\rmb)\leq k\}\geq\varepsilon\}$. 
With sufficiently large $n$, similarly to \eqref{eq:quantile con 3}, choosing $\omega=1+\frac{1}{n}$ and substituting \eqref{eq:converse_optimized_power} into \eqref{eq:con C final def opt} lead to
\begin{align}
C_\varepsilon^\rmrt(n)=\kappa_\varepsilon'\sqrt{\frac{\delta}{n}}+O(n^{-1}).
\end{align}
The converse proof of Corollary~\ref{corollary:CSIT gain} is completed.

\subsection{Proof of Lemma \ref{lemma:covert} (Covertness Analysis for Achievability)}
\label{appendix:covert ach}

For clarity, we restate the definition of the quasi-$\eta$-neighborhood here.
Fix two integers $(n,m)\in\bbN^2$, a positive real number $\eta\in\bbR_+$ and a continuous alphabet $\calY\subseteq\bbC$. Let $\by^n\in\calY^{n\times m}$. For distributions $P$ and $Q$ on the alphabet $\calY^{n\times m,}$, let $f_P(\by^n)$ and $f_Q(\by^n)$ be their densities, respectively. Assume that $f_P$ is absolutely continuous with respect to $f_Q$. The $\chi^2$-divergence can be defined as 
\begin{align}
\label{eq:def chi2}
\chi_2(P\|Q):=\int_{\by^n\in\calY^{n\times m}}\frac{(f_P(\by^n)-f_Q(\by^n))^2}{f_Q(\by^n)}\rmd \by^n.
\end{align}
Given any distribution $Q$ on the alphabet $\calY^{n\times m}$, the quasi-$\eta$-neighborhood \cite[Definition 4]{yu_second_2023} of $Q$ is a set of distributions $\{P\}$ in a $\chi^2$-divergence ball of $\eta^2$, i.e.,
\begin{align}
\label{def_epsilon_neighborhood}
\calN_\eta(Q):= \big\{P\in\calP(\calY^{n\times m}):\chi_2(P\|Q)\leq\eta^2 \big\}.
\end{align}
Furthermore, define an auxiliary function,
\begin{align}\label{eq:phi def}
\calD_{\eta}(\by^n):=\frac{f_P(\by^n)-f_Q(\by^n)}{\eta\cdot f_Q(\by^n)}.
\end{align}
The following lemma bounds KL divergence and TV distance for distributions in the quasi-$\eta$-neighborhood.
\begin{lemma}\label{lemma_epsilon_dv}
For any distribution $P\in\calN_\eta(Q)$ with $\sup_{\by^n\in\calY^{n\times m}}|\calD_\eta(\by^n)|\leq 1$, we have
\begin{align}
\bbV(P, Q)&\leq\frac{1}{2}\eta,\label{eq:vtd epsilon}\\*
\bbD(P\|Q)&\leq\eta^2.\label{eq:kld epsilon}
\end{align}
\end{lemma}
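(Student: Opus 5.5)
The plan is to write $f_P = f_Q(1+\eta\,\calD_\eta)$ pointwise and read both bounds off this representation. Two facts about $\calD_\eta$ drive everything. First, $\int f_Q\,\calD_\eta\,\rmd\by^n = \frac{1}{\eta}\int (f_P-f_Q)\,\rmd\by^n = 0$, since both densities integrate to one. Second, $\int f_Q\,\calD_\eta^2\,\rmd\by^n = \chi_2(P\|Q)/\eta^2 \le 1$ by the definition \eqref{eq:def chi2} and the membership $P\in\calN_\eta(Q)$ in \eqref{def_epsilon_neighborhood}.

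For the TV bound \eqref{eq:vtd epsilon}, write
\begin{align}
\bbV(P,Q)=\frac{1}{2}\int f_Q\,|\eta\,\calD_\eta|\,\rmd\by^n=\frac{\eta}{2}\,\mathbb{E}_Q\big[|\calD_\eta|\big].
\end{align}
Cauchy--Schwarz gives $\mathbb{E}_Q[|\calD_\eta|]\le \sqrt{\mathbb{E}_Q[\calD_\eta^2]}\le 1$, so $\bbV(P,Q)\le \frac{\eta}{2}$. The hypothesis $\sup|\calD_\eta|\le1$ yields the same conclusion even more directly, so this step is routine.

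For the KL bound \eqref{eq:kld epsilon}, the cleanest route is the standard inequality $\bbD(P\|Q)\le \log(1+\chi_2(P\|Q))\le \chi_2(P\|Q)\le\eta^2$. It follows from Jensen's inequality applied to the concave logarithm:
\begin{align}
\bbD(P\|Q)=\mathbb{E}_P\Big[\log\frac{f_P}{f_Q}\Big]\le \log \mathbb{E}_P\Big[\frac{f_P}{f_Q}\Big]=\log\big(1+\chi_2(P\|Q)\big).
\end{align}
The identity $\mathbb{E}_P[f_P/f_Q]=\int f_P^2/f_Q\,\rmd\by^n = 1+\chi_2(P\|Q)$ is an expansion of the square in \eqref{eq:def chi2}. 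The final step uses $\log(1+x)\le x$.

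An alternative route works directly with $\calD_\eta$, and it is the one where the hypothesis $\sup|\calD_\eta|\le 1$ matters. Write
\begin{align}
\bbD(P\|Q)=\mathbb{E}_Q\big[(1+\eta\calD_\eta)\log(1+\eta\calD_\eta)\big].
\end{align}
Then apply $(1+t)\log(1+t)\le t+t^2$, valid for $t\ge -1$; the hypothesis, together with $f_P\ge0$ when $\eta>1$, keeps $t=\eta\calD_\eta$ in this range. The linear term vanishes by the first fact, and the quadratic term contributes at most $\eta^2\,\mathbb{E}_Q[\calD_\eta^2]\le\eta^2$.

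The only step needing care is the integrability and absolute-continuity bookkeeping: that $\chi_2$ finite implies $\bbD$ finite, and that $\mathbb{E}_Q[\calD_\eta]=0$ is legitimate. Both follow from the assumed absolute continuity together with $f_Q(1+\eta\calD_\eta)=f_P\ge0$. I would present the Jensen route as the main argument, since it needs no range condition, and mention the $\calD_\eta$ route as consistent with the quasi-$\eta$-neighborhood formalism of \cite{yu_second_2023}.
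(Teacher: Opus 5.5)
Your proof is correct, and its main line differs from the one the paper relies on; it is also more general. The paper does not prove this lemma itself. It defers to \cite[Appendix E]{liu2026covertMIMO}, whose argument, as the hypotheses suggest, works directly in the quasi-$\eta$-neighborhood formalism of \cite{yu_second_2023}. That argument writes $f_P=f_Q(1+\eta\calD_\eta)$ and reads off $\bbV(P,Q)=\frac{\eta}{2}\bbE_Q[|\calD_\eta|]\le\frac{\eta}{2}$ from the uniform bound on $\calD_\eta$. It then expands $(1+\eta\calD_\eta)\log(1+\eta\calD_\eta)$ pointwise, with the linear term vanishing because $\bbE_Q[\calD_\eta]=0$. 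This is essentially your ``alternative route''.

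Your primary route uses Cauchy--Schwarz for TV and Jensen to get $\bbD(P\|Q)\le\log(1+\chi_2(P\|Q))\le\chi_2(P\|Q)$. It shows that both conclusions follow from $\chi_2(P\|Q)\le\eta^2$ alone, so the hypothesis $\sup|\calD_\eta|\le 1$ is redundant for the lemma as stated. It also gives the slightly sharper $\bbD(P\|Q)\le\log(1+\eta^2)$. The pointwise expansion buys something different: when $\eta\calD_\eta$ is uniformly small, it yields the local estimate $\bbD(P\|Q)=\frac{1}{2}\chi_2(P\|Q)\,(1+O(\eta))$, which is what a uniform bound on $\calD_\eta$ is really good for. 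Jensen cannot recover that factor $\frac{1}{2}$, but the lemma only asks for $\eta^2$.

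One small correction: the hypothesis does not matter in your alternative route either. The range condition $t=\eta\calD_\eta\ge -1$ is literally $f_P=f_Q(1+t)\ge 0$, which holds for every $\eta$. The inequality $(1+t)\log(1+t)\le t+t^2$ is just $\log(1+t)\le t$ multiplied by $1+t\ge 0$. So that route also gives $\bbD(P\|Q)\le\chi_2(P\|Q)\le\eta^2$ with no uniform bound on $\calD_\eta$.
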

The proof of Lemma \ref{lemma_epsilon_dv} can be found in \cite[Appendix E]{liu2026covertMIMO}.

In the following, Appendix \ref{subapx:neighborhood 1} proves that the truncated complex Gaussian distribution defined in \eqref{eq:tG} lies within the quasi-$\eta$-neighborhood of the noise distribution and relates to the covertness constraint, and Appendix \ref{subapx:neighborhood 3} bounds the power constraint of the truncated complex codebook to satisfy the covertness constraint in \eqref{eq4:covertness constraint}.

\subsubsection{Quasi-$\eta$-Neighborhood of Channel Noise for Covertness}
\label{subapx:neighborhood 1}
Fix a deterministic channel matrix $\bh_\rmw\in\calH_\rmw$.
Recall from Table~\ref{tab:notations} that $Q_{\bZ_\rmw^n}$ is the noise distribution at Willie.
With a slight abuse of notation, we use $\Pi_{\bX^n\bh_\rmw}^\mathrm{tG}$ to denote the distribution of $\bX^n\bh_\rmw$, where $\bX^n\sim\Pi^\mathrm{tG}_{\bX^n}$, and similarly for $\Pi_{\bX^n\bh_\rmw}^\rmG$.
From the definition of the channel, we have $Q_{\bY_\rmw^n}^\mathrm{tG}=\Pi_{\bX^n\bh_\rmw}^\mathrm{tG}*Q_{\bZ_\rmw^n}$.
For any $x\in\calX$, the $1$-dimensional Dirac measure is denoted as $\delta_0^1(x)$, which equals zero when $x\ne 0$ and satisfies $\int_{x\in\calX}\delta_0^1(x)\rmd x=1$. Analogously, we define the $n$-dimensional Dirac measure $\delta_0^n(x^n):= \prod_{i\in[n]}\delta_0^1(x_i)$. 
Let $\delta_0^{n\times N_\rmw}$ be the Dirac measure for $\bX^n\bh_\rmw\in\calX^{n\times N_\rma}\times\bbC^{N_\rma\times N_\rmw}$. From the property of the Dirac measure, we have $Q_{\bZ_\rmw^n}=\delta_0^{n\times N_\rmw}*Q_{\bZ_\rmw^n}$.
Therefore, $Q_{\bY_\rmw^n}^\mathrm{tG}$ and $Q_{\bZ_\rmw^n}$ can be viewed as the results of convolving $Q_{\bZ_\rmw^n}$ with $\Pi_{\bX^n\bh_\rmw}^\mathrm{tG}$ and $\delta_0^{n\times N_\rmw}$, respectively.
The realization with respect to $Q_{\bY_\rmw^n}^\mathrm{tG}$ of the quasi-$\eta$-neighborhood distribution of $Q_{\bZ_\rmw^n}$ is closely related to the convergence between $\Pi_{\bX^n\bh_\rmw}^\mathrm{tG}$ and $\delta_0^{n\times N_\rmw}$.
Let $\|\cdot\|^*$ be a weak convergence metric of probability measures such that a sequence of probability distributions $\{P_n\}$ on $\calX$ converges weakly to a probability distribution $P_0$ with $\|P_n-P_0\|^*\to 0$ if for any bounded continuous function $g:\calX\to\bbC$,
\begin{align}
\label{eq:def of weak}
\lim_{n\to\infty}\Big|\int_{x\in\calX}g(x)\rmd P_n(x)-\int_{x\in\calX}g(x)\rmd P_0(x)\Big|=0.
\end{align}
We first show that the distribution $\Pi_{\bX^n\bh_\rmw}^\mathrm{tG}$ with $\Psi(n)=O(\frac{1}{\sqrt{n}})$ satisfies
\begin{align}\label{lemma:appendix_ach_weak}
\big\|\Pi^\mathrm{tG}_{\bX^n\bh_\rmw}-\delta_0^{n\times N_\rmw}\big\| ^* \to 0 \text{ as }n\to\infty.
\end{align}
As $\|\cdot\|^*$ is a metric, we have
\begin{align}
\big \|\Pi_{\bX^n\bh_\rmw}^\mathrm{tG}-\delta_0^{n\times N_\rmw} \big \| ^*\leq \big \| \Pi_{\bX^n\bh_\rmw}^\mathrm{tG}-\Pi_{\bX^n\bh_\rmw}^\rmG \big \| ^* + \big \| \Pi_{\bX^n\bh_\rmw}^\rmG-\delta_0^{n\times N_\rmw} \big \| ^*.
\end{align}

Note that the distributions $\Pi_{\bX^n\bh_\rmw}^\mathrm{tG}$ and $\Pi_{\bX^n\bh_\rmw}^\rmG$ with $\Psi(n)=O(1/\sqrt{n})$ satisfy
\begin{align}
\big \| \Pi_{\bX^n\bh_\rmw}^\mathrm{tG}-\Pi_{\bX^n\bh_\rmw}^\rmG\big \|^*&\to 0,\label{eq:awgn_ach_weak_GtG}\\*
\big \| \Pi_{\bX^n\bh_\rmw}^\rmG-\delta_0^{n\times N_\rmw} \big \| ^*&\to 0,\label{eq:awgn_ach_weak_0G}
\end{align}
as $n\to\infty$, respectively.
The proofs of \eqref{eq:awgn_ach_weak_GtG} and \eqref{eq:awgn_ach_weak_0G} are consistent with those in the MIMO AWGN case \cite[Eqs. (178), (179)]{liu2026covertMIMO}, respectively. It follows from \cite[Eqs. (180)--(185)]{liu2026covertMIMO} that for any $\eta\in\bbR_+$, there exists an integer $N(\eta)\in\bbN$ such that for any $n>N(\eta)$, the distribution $Q^\mathrm{tG}_{\bY_\rmw^n}$, which is the induced output distribution of the quasi-static MIMO fading channel from Alice to Willie for the input distribution $\Pi_{\bX^n}^\mathrm{tG}$ with $\Psi(n)=O(\frac{1}{\sqrt{n}})$, is in the quasi-$\eta$-neighborhood of $Q_{\bZ_\rmw^n}$ and satisfies $\sup_{\by_\rmw^n\in\calY_\rmw^{n\times N_\rmw}}|\calD_\eta(\by_\rmw^n)|\leq 1$. Furthermore, it follows from Lemma \ref{lemma_epsilon_dv} that when $\Psi(n) = O(\frac{1}{\sqrt{n}})$ and $n$ is sufficiently large,
\begin{align}
\bbV(Q^\mathrm{tG}_{\bY_\rmw^n}, Q_{\bZ_\rmw^n})&\leq \frac{1}{2}\eta \label{equ_output_condition_v},\\*
\bbD(Q^\mathrm{tG}_{\bY_\rmw^n} \| Q_{\bZ_\rmw^n})&\leq \eta^2.\label{equ_output_condition_d}
\end{align}
Recall that $\Pi_{\bX^n}^\rmG$ in \eqref{eq:G} is the complex Gaussian distribution, without the truncation operation. Let the distribution of Willie's output $\bY_\rmw^n$ induced by $\Pi_{\bX^n}^\rmG$ with $\Psi(n)=O(\frac{1}{\sqrt{n}})$ under the state $\rmS_1$ be denoted $Q_{\bY_\rmw^n}^\rmG$. Analogously, we can show that $Q_{\bY_\rmw^n}^\rmG$ also lies in the quasi-$\eta$-neighborhood of $Q_{\bZ_\rmw^n}$.

Although \eqref{equ_output_condition_d} shows that truncated complex Gaussian codebook satisfies the covertness constraint, the KL divergence $\bbD(Q^\mathrm{tG}_{\bY_\rmw^n} \| Q_{\bZ_\rmw^n})$ is challenging to calculate. To solve this problem, in the following lemma, we provide a sufficient condition to satisfy the covertness constraint when the truncated complex Gaussian codebook is replaced by the \iid complex Gaussian codebook. Recall that for each $n\in\bbN$, $\nu(n)>1$ and $\lim_{n\to\infty}\nu(n)=1$.

\begin{lemma}\label{lemma:awgn_ach_covert_GtG}
For any $\delta\in\bbR_+$, there exists a sufficiently large integer $N(\delta)\in\bbN$ such that for any $n>N(\delta)$, if
\begin{align}
\label{eq:KL divergence transform}
\bbD(Q^\mathrm{G}_{\bY^n_\rmw}\| Q_{\bZ_\rmw^n}) \leq \frac{\delta}{\nu^2(n)},
\end{align}
the covertness constraint \eqref{eq4:covertness constraint} is satisfied.
\end{lemma}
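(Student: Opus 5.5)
We will show that replacing the i.i.d.\ complex Gaussian input $\Pi_{\bX^n}^\rmG$ by its truncated version $\Pi_{\bX^n}^\rmtG$ changes the KL divergence at Willie by at most a factor that tends to one. It then suffices to prove
\begin{align}
\bbD(Q^\mathrm{tG}_{\bY^n_\rmw}\| Q_{\bZ_\rmw^n})\le \nu^2(n)\,\bbD(Q^\mathrm{G}_{\bY^n_\rmw}\| Q_{\bZ_\rmw^n})
\end{align}
for all sufficiently large $n$. Combined with \eqref{eq:KL divergence transform}, this immediately gives \eqref{eq4:covertness constraint}.

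The key steps are as follows.

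\textbf{Step 1: local expansion of both divergences.} By the discussion preceding the lemma, both $Q^\mathrm{tG}_{\bY_\rmw^n}$ and $Q^\mathrm{G}_{\bY_\rmw^n}$ lie in the quasi-$\eta$-neighborhood of $Q_{\bZ_\rmw^n}$ for $n>N(\eta)$, with $\sup|\calD_\eta|\le1$. In that regime I would use the second-order expansion
\begin{align}
\bbD(P\|Q)=\tfrac12\chi_2(P\|Q)+O(\eta^3),
\end{align}
valid uniformly because $|\calD_\eta|\le 1$. This reduces the comparison of KL divergences to a comparison of $\chi^2$-divergences.

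\textbf{Step 2: compute the Gaussian $\chi^2$-divergence.} This is explicit. Under the worst-case channel \eqref{eq:def of hw0}, the output is Gaussian with covariance $\bI+\rho\Psi(n)\blambda_0^2$ per channel use. The standard formula for the $\chi^2$-divergence between Gaussians gives
\begin{align}
\chi_2(Q^\rmG\|Q_\rmZ)=\det\!\big(\bI-\rho^2\Psi^2(n)\blambda_0^4\big)^{-n}-1=n\rho^2\Psi^2(n)\tr(\blambda_0^4)+O(n\Psi^4+n^2\Psi^4).
\end{align}
Since $\Psi(n)=O(1/\sqrt n)$, this quantity is $\Theta(1)$ and of the same order as $\delta$.

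\textbf{Step 3: compare the truncated and untruncated $\chi^2$-divergences.} Write
\begin{align}
\chi_2(Q^\rmtG\|Q_\rmZ)+1=\mathbb{E}\big[\exp(\Re\,\tr(\cdot))\big]
\end{align}
over two independent codewords $\bX,\bX'$. Concretely, $\chi_2+1=\mathbb{E}_{\bX,\bX'}\exp(2\Re\,\tr(\bX\bh\bh^\rmH\bX'^\rmH))$, and the same identity holds for the Gaussian case. Under truncation, $\bX$ and $\bX'$ are drawn from $\Pi^\rmtG/\Delta_n$ restricted to the shell. This gives the bound
\begin{align}
\chi_2(Q^\rmtG)+1\le \Delta_n^{-2}\big(\chi_2(Q^\rmG)+1\big).
\end{align}
A matching lower bound follows from the fact that the shell event has probability $\Delta_n\to1$ (with $\rho=1-1/n$, concentration of the norm gives $1-\Delta_n$ decaying suitably).

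\textbf{Step 4: absorb the errors into $\nu^2(n)$.} Combining Steps 1–3 gives
\begin{align}
\bbD(Q^\rmtG)\le \bbD(Q^\rmG)\,(1+o(1))+o(1)\text{-terms}.
\end{align}
Since $\nu(n)>1$ is fixed before taking $n$ large, choosing $N(\delta)$ large enough makes this factor at most $\nu^2(n)$.

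\textbf{Main obstacle.} The hardest part is controlling $\Delta_n^{-2}-1$ together with the cubic remainder of the expansion in Step 1, relative to $\delta$. This requires precise tail bounds on $\|X^n\|^2$, which is a Gamma-distributed variable, at radii $\rho^2 n\Psi$ and $n\Psi$. Because the shell is so thin when $\rho=1-1/n$, I would first check whether the lemma is only needed with $\nu$ as an arbitrary fixed slack; if so, I would follow the argument of \cite[Lemma 11]{liu2026covertMIMO} verbatim, since the worst-case $\bh_{\rmw,0}$ reduces Willie's channel to the MIMO AWGN setting treated there.
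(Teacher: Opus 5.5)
Writing $Q^{\rmG},Q^{\rmtG},Q_0$ for $Q^{\rmG}_{\bY_\rmw^n},Q^{\rmtG}_{\bY_\rmw^n},Q_{\bZ_\rmw^n}$, your reduction to $\bbD(Q^{\rmtG}\|Q_0)\le\nu^2(n)\,\bbD(Q^{\rmG}\|Q_0)$ is a valid sufficient condition. The inequality $\chi_2(Q^{\rmtG}\|Q_0)+1\le\Delta_n^{-2N_\rma}\big(\chi_2(Q^{\rmG}\|Q_0)+1\big)$ in Step 3 is also correct; the exponent is $2N_\rma$ because the truncation is per antenna. The argument breaks at Steps 1 and 4.

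Your own Step 2 shows that the $\chi^2$-divergence is not small at the covert power \eqref{eq4:power level}: $\chi_2(Q^{\rmG}\|Q_0)=\det(\bI-\rho^2\Psi^2(n)\blambda_0^4)^{-n}-1\ge 2\delta/\nu^2(n)$. So $\eta$ cannot be taken small. Indeed, \eqref{equ_output_condition_d} with arbitrarily small $\eta$ would force $\bbD(Q^{\rmtG}\|Q_0)\to0$, whereas at this power it is close to $\delta$. The second-order expansion therefore carries an error that does not vanish with $n$: $\frac12\chi_2(Q^{\rmG}\|Q_0)-\bbD(Q^{\rmG}\|Q_0)\to\frac12(e^{2\delta}-1)-\delta>0$. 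The uniformity hypothesis $\sup|\calD_\eta|\le1$ also fails for $Q^{\rmG}$, because its likelihood ratio against $Q_0$ is the exponential of a positive definite quadratic form and hence unbounded. Unless you prove that the higher-order remainders of $Q^{\rmtG}$ and $Q^{\rmG}$ cancel, which the neighborhood bounds do not provide, Steps 1--3 give at best a multiplicative loss $1+O(\eta)$ that does not tend to $1$.

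Step 4 then treats $\nu(n)$ as a fixed constant. It is a sequence with $\nu(n)\to1$ (the paper later takes $\nu^2(n)=1+\frac1n$), so the slack $\delta(1-\nu^{-2}(n))$ vanishes. Every error term therefore needs an explicit rate that beats it, and a non-quantitative ``for $n>N(\eta)$'' cannot supply this.

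Your side remark in Step 3 about $\rho=1-\frac1n$ is also incorrect. With that choice the shell has width $n\Psi(n)(1-\rho^2)\approx2\Psi(n)$, while $\|X^n\|^2$ fluctuates on the scale $\sqrt{n}\rho\Psi(n)$, so $\Delta_n=\Theta(n^{-1/2})\to0$. The lemma, however, sits in the setting of Lemma~\ref{lemma:covert} with $\rho\in(0,1)$ fixed. There a Chernoff bound on the Gamma-distributed $\|X^n\|^2$ gives $1-\Delta_n=e^{-\Omega(n)}$, and a direct argument avoids $\chi^2$ altogether:
\begin{align*}
\bbD(Q^{\rmtG}\|Q_0)=\bbD(Q^{\rmtG}\|Q^{\rmG})+\bbE_{Q^{\rmtG}}\bigg[\log\frac{\rmd Q^{\rmG}}{\rmd Q_0}\bigg].
\end{align*}
By data processing, the first term is at most $\bbD(\Pi^{\rmtG}_{\bX^n}\|\Pi^{\rmG}_{\bX^n})=N_\rma\log\frac{1}{\Delta_n}$. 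For the second term, $\log\frac{\rmd Q^{\rmG}}{\rmd Q_0}$ is a constant plus a quadratic form in $\bY_\rmw^n$ with $O(\Psi(n))$ coefficients. The truncated input is circularly symmetric and independent across antennas, so the second term differs from $\bbD(Q^{\rmG}\|Q_0)$ only through $\bbE_{\Pi^{\rmtG}}\|X_j^n\|^2-n\rho\Psi(n)=n\Psi(n)e^{-\Omega(n)}$. Hence $\bbD(Q^{\rmtG}\|Q_0)\le\delta/\nu^2(n)+e^{-\Omega(n)}\le\delta$ for large $n$, whenever $\nu^2(n)-1$ decays subexponentially.

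The paper itself gives no self-contained argument here. It invokes \cite[Lemma 8]{liu2026covertMIMO} with \eqref{equ_output_condition_v}--\eqref{equ_output_condition_d} as input. Your closing fallback of citing the MIMO AWGN result therefore matches what the paper does, but your Steps 1--4 do not constitute a proof.
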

The proof of Lemma \ref{lemma:awgn_ach_covert_GtG} is identical to that of \cite[Lemma 8]{liu2026covertMIMO}, except that the condition is replaced here by \eqref{equ_output_condition_v} and \eqref{equ_output_condition_d}.

\subsubsection{An Upper Bound on Transmit Power for Covertness}
\label{subapx:neighborhood 3}
We shall upper bound the transmit power $\Psi(n)$ to satisfy the covertness constraint in \eqref{eq4:covertness constraint} using the truncated complex Gaussian codebook via Lemma \ref{lemma:awgn_ach_covert_GtG}.
It follows from Definition \ref{def:distributions} that $Q_{\bY_{\rmw}}^\mathrm{G}=\calCN(\mathbf{0},\rho\Psi(n)\bh_\rmw^\rmH\bh_\rmw+\bI_{N_\rmw})$. Thus, we have
\begin{align}
\bbD(Q^\rmG_{\bY_\rmw^n} \| Q_{\bZ_\rmw^n})
&=\bbE_{Q^\rmG_{\bY_\rmw^n}} \bigg[ \log\frac{Q^\rmG_{\bY_\rmw^n}(\bY_\rmw^n)}{Q_{\bZ_\rmw^n}(\bY_\rmw^n)} \bigg]\label{eq:KL divergence G 1}\\*
&=\sum_{i\in[n]}\bbE_{Q^\rmG_{\bY_{\rmw}}}\bigg[\log\frac{Q^\rmG_{\bY_{\rmw}}(\bY_{\rmw,i})}{Q_{\bZ_\rmw}(\bY_{\rmw,i})}\bigg]\label{eq:KL divergence G 2}\\*
&=\sum_{i\in[n]}\bbE_{Q^\rmG_{\bY_{\rmw}}}\bigg[\log\frac{(\pi^{N_\rmw}\det(\rho\Psi(n)\bh_\rmw^\rmH\bh_\rmw+\bI_{N_\rmw}))^{-1}\exp(-\bY_{\rmw,i}^\rmH\bY_{\rmw,i}(\rho\Psi(n)\bh_\rmw^\rmH\bh_\rmw+\bI_{N_\rmw})^{-1})}{(\pi^{N_\rmw})^{-1}\exp(-\bY_{\rmw,i}^\rmH\bY_{\rmw,i})}\bigg]\label{eq:KL divergence G 3}\\*
&=\sum_{i\in[n]}\bbE_{Q^\rmG_{\bY_{\rmw}}}\big[-\bY_{\rmw,i}^\rmH\bY_{\rmw,i}(\rho\Psi(n)\bh_\rmw^\rmH\bh_\rmw+\bI_{N_\rmw})^{-1}+\bY_{\rmw,i}^\rmH\bY_{\rmw,i}\big]\nn\\*
&\quad-n\log\det\big(\rho\Psi(n)\bh_\rmw^\rmH\bh_\rmw+\bI_{N_\rmw}\big)\label{eq:KL divergence G 4}\\*
&=\sum_{i\in[n]}\Big(-\tr\big((\rho\Psi(n)\bh_\rmw^\rmH\bh_\rmw+\bI_{N_\rmw})(\rho\Psi(n)\bh_\rmw^\rmH\bh_\rmw+\bI_{N_\rmw})^{-1}\big)+\tr\big(\rho\Psi(n)\bh_\rmw^\rmH\bh_\rmw+\bI_{N_\rmw}\big)\Big)\nn\\*
&\quad-n\log\det\big(\rho\Psi(n)\bh_\rmw^\rmH\bh_\rmw+\bI_{N_\rmw}\big)\label{eq:KL divergence G 5}\\*
&=n\Big(\tr\big(\rho\Psi(n)\bh_\rmw^\rmH\bh_\rmw\big)-\log\det\big(\rho\Psi(n)\bh_\rmw^\rmH\bh_\rmw+\bI_{N_\rmw}\big)\Big)\label{eq:KL divergence G 6}\\
&<n\cdot\frac{(\rho\Psi(n))^2}{2}\tr\big((\bh_\rmw^\rmH\bh_\rmw)^2\big)\label{eq:KL divergence G 7}
\end{align}
where \eqref{eq:KL divergence G 1} follows from the definition of KL divergence, \eqref{eq:KL divergence G 2} follows since the outputs of each channel use are independent, \eqref{eq:KL divergence G 3} follows from the fact that $Q_{\bY_{\rmw}}^\mathrm{G}=\calCN(\mathbf{0},\rho\Psi(n)\bh_\rmw^\rmH\bh_\rmw+\bI_{N_\rmw})$ and $Q_{\bZ_\rmw}=\calCN(\mathbf{0},\bI_{N_\rmw})$, \eqref{eq:KL divergence G 4} follows from algebraic simplification, \eqref{eq:KL divergence G 5} follows from the fact that $\bbE_{Q^\rmG_{\bY_{\rmw}}}[\bY_{\rmw,i}^\rmH\bY_{\rmw,i}]=\tr(\rho\Psi(n)\bh_\rmw^\rmH\bh_\rmw+\bI_{N_\rmw})$, \eqref{eq:KL divergence G 6} follows since $\tr(\ba\ba^{-1})=\tr(\bI_{N_\rmw})={N_\rmw}$ for $\ba\in\bbC^{N_\rmw\times N_\rmw}$ and $\tr(\ba+\bb)=\tr(\ba)+\tr(\bb)$, and \eqref{eq:KL divergence G 7} follows from the Taylor Series of expansion of $\log(1+x)$ around $x=0$ which implies $\log(1+x)>x-\frac{x^2}{2}$.

Recall that $\rho\in(0,1)$, $\nu(n)>1$ and $\lim\limits_{n\to\infty}\nu(n)=1$. Combining \eqref{eq:KL divergence G 7} and Lemma \ref{lemma:awgn_ach_covert_GtG} leads that to satisfy the covertness constraint \eqref{eq4:covertness constraint} over $\calH_\rmw$, the maximal transmit power is given by
\begin{align}
\label{eq:ach power}
\Psi(n)&=\inf_{\bh_\rmw\in\calH_\rmw}\sqrt{\frac{2\delta}{n\rho^2\nu^2(n)\tr\big((\bh_\rmw^\rmH\bh_\rmw)^2\big)}}\\
&=\sqrt{\frac{2\delta}{n\rho^2\nu^2(n)\tr\big(\blambda_0^4\big)}},\label{eq:inf hw}
\end{align}
where \eqref{eq:inf hw} follows from the fact that $\bh_{\rmw,0}$ is the worst covert communication case satisfying $\bh_{\rmw,0}\bh_{\rmw,0}^\rmH=\blambda_0^2$.
Note that $\Psi(n)=O(\frac{1}{\sqrt{n}})$, it follows from the analysis in Appendix \ref{subapx:neighborhood 1} that the truncated complex Gaussian distribution with transmit power \eqref{eq:ach power} satisfies the covertness constraint \eqref{eq4:covertness constraint}.
The proof of Lemma~\ref{lemma:covert} is now completed.


\subsection{Proof of Lemma \ref{lemma:csit ach non} (Non-Asymptotic Achievability Bound for no-CSI)}
\label{appendix:csit ach non}
To facilitate the analysis, the codeword is now $\bX^n=(X_1^n,\ldots,X_{N_\rma}^n)\in\calX^{n\times N_\rma}$. Recall from Table \ref{tab:notations} that $\rho\in(0,1)$ is a parameter to constrain the power of codewords, which may depend on $n$. Define two sets
\begin{align}
\calF^{n\times N_\rma}&:=\big\{\bX^n:\forall~j\in[N_\rma],~\|X^n_j\|^2\leq n\Psi(n)\big\}\subseteq\calX^{n\times N_\rma},\label{eq:def fset}\\*
\bar{\calF}^{n\times N_\rma}&:=\big\{\bX^n:\forall~j\in[N_\rma],~\rho^2\cdot n\Psi(n)\leq\|X^n_j\|^2\leq n\Psi(n)\big\}\subseteq\calF^{n\times N_\rma}.\label{eq:def fbar}
\end{align}
For $j\in[N_\rma]$, $X^n_j$ is generated i.i.d. from $\calCN(\mathbf{0},\rho\Psi(n)\bI_n)$, and let $\bX^n\in\bar{\calF}^{n\times N_\rma}$. 
We consider a physically degraded version of the channel \eqref{eq:yb channel}, whose output is given by
\begin{align}
\label{eq:degraded channel}
\Omega_{\bY_\rmb^n}:=\spn(\bX^n\bH_\rmb+\bZ_\rmb^n).
\end{align}
Thus, the rate achievable on \eqref{eq:degraded channel} is a lower bound on the rate achievable on \eqref{eq:yb channel}. 
From the angle threshold decoding, given $\bX^n=\bx^n$, there is an associated threshold $\gamma_n\in\bbR_+$, and the decoder computes $\sin^2(\spn(\bx^n),\Omega_{\by_\rmb^n})$ for the received channel output $\Omega_{\by_\rmb^n}\in\bbC^{n\times N_\rmb}$ and selects the first codeword that satisfies $\sin^2(\spn(\bx^n),\Omega_{\by_\rmb^n})\leq\gamma_n$.

To obtain the no-CSI non-asymptotic achievability bound, we apply \cite[Corollary 1]{yu_finite_2021}, which obtained the achievability bound with the maximal error probability for random coding and input constraints, to the channel \eqref{eq:degraded channel}. The following performance metric is needed. Given two distributions $P$ and $Q$ on a common measurable space $\calY$, a randomized test between $P$ and $Q$ is defined as a random transformation $P_{S|Y}:\calY\mapsto\{0,1\}$ where $0$ indicates that the test chooses $Q$. Define the minimum error probability under hypothesis $Q$ if the error probability under hypothesis $P$ is not larger than $(1-\alpha)$ as
\begin{align}
\label{eq:def beta}
\beta_\alpha(P,Q):=\min\int P_{S|Y}(1|y)Q(\rmd y),
\end{align}
where the minimum is over all probability distributions $P_{S|Y}$ satisfying
\begin{align}
\label{eq:def beta min}
\int P_{S|Y}(1|y)P(\rmd y)\geq\alpha.
\end{align}
Let $P_{\Omega_{\bY_\rmb^n}|\bX^n}$ be the output distribution induced by the codeword $\bX^n$ defined on the subspace $\Omega_{\bY_\rmb^n}$.
By \cite[Corollary 1]{yu_finite_2021}, we have that for any $\varepsilon\in(0,1)$, $\tau\in(0,\varepsilon)$ and any auxiliary distribution $\tilP_{\Omega_{\bY_\rmb^n}}$ defined on the subspace $\Omega_{\bY_\rmb^n}$, the $(n,M)_\rmno$-code based on the above coding scheme whose codewords are chosen from $\bar{\calF}^{n\times N_\rma}$ defined in \eqref{eq:def fbar} with the maximal error probability $\varepsilon$ satisfying
\begin{align}
\label{eq:cor.1M}
M 
&\geq \sup_{\tau\in(0,\varepsilon)}\frac{\tau}{\sup_{\bx^n\in\bar{\calF}^{n\times N_\rma}}\beta_{1-\varepsilon+\tau}(P_{\Omega_{\bY_\rmb^n}|\bX^n=\bx^n},\tilP_{\Omega_{\bY_\rmb^n}})}\\*
&\geq \sup_{\tau\in(0,\varepsilon)}\frac{\tau}{\sup_{\bx^n\in\bar{\calF}^{n\times N_\rma}}\Pr_{\tilP_{\Omega_{\bY_\rmb^n}}}\big\{\sin^2(\spn(\bx^n),\Omega_{\bY_\rmb^n})\leq\gamma_n\big\}},\label{eq:beta and sin2}
\end{align}
where \eqref{eq:beta and sin2} follows from the angle threshold decoding and removing the minimization in \eqref{eq:def beta}, and $\gamma_n$ satisfies
\begin{align}
\label{eq:pr sin2 span}
\Pr_{P_{\Omega_{\bY_\rmb^n}|\bX^n=\bx^n}}\Big\{\sin^2\big(\spn(\bx^n),\Omega_{\bY_\rmb^n}\big)\leq\gamma_n\Big\}\geq 1-\varepsilon+\tau.
\end{align}
We choose the uniform distribution as the auxiliary output distribution $\tilP_{\Omega_{\bY_\rmb^n}}$, which is further denoted as $P_{\Omega_{\bY_\rmb^n}}^\rmu$. Under this choice, the output probabilities for every codewords are identical, i.e., $\Pr_{P_{\Omega_{\bY_\rmb^n}}^\rmu}\{\sin^2(\spn(\bx^n),\Omega_{\bY_\rmb^n})\leq\gamma_n\}$ does not depend on $\bx^n$. 
Since the sphere hardening effect \cite{hamkins_gaussian_2002} makes the codewords concentrate on the sphere as $n\to\infty$ and to simplify calculations, we choose $\bx^n=\bx_1^n:=\sqrt{n\rho\Psi(n)}\bI_{n,N_\rma}$. 
In the following, we first simplify \eqref{eq:beta and sin2} in Appendix \ref{subapx:csit ach non beta}, and then \eqref{eq:pr sin2 span} in Appendix \ref{subapx:csit ach non sin2}.

\subsubsection{Evaluation of \eqref{eq:beta and sin2}}
\label{subapx:csit ach non beta}
Given $\bX^n=\bx_1^n$, the term of $\sin^2(\cdot,\cdot)$ in \eqref{eq:beta and sin2} is now
\begin{align}
\sin^2\big(\spn(\bx^n),\Omega_{\bY_\rmb^n}\big)
&=\sin^2\big(\spn(\bx_1^n),\spn(\bx_1^n\bH_\rmb+\bZ_\rmb^n)\big)\\*
&=\sin^2\big(\spn(\bI_{n,N_\rma}),\spn(\bx_1^n\bH_\rmb+\bZ_\rmb^n)\big),\label{eq:sin2 span I}
\end{align}
where \eqref{eq:sin2 span I} follows since scaling by $\sqrt{n\rho\Psi(n)} \in \mathbb{R}_+$ does not change the span.
The complex multivariate Beta distribution is critical to analysis the distribution of the right hand side of \eqref{eq:sin2 span I}.
Fix a continuous alphabet $\calY\in\bbC$, two positive real numbers $(a,b)\in\bbR_+^2$, and an integer $p\in\bbN$. Recall that $\Gamma(\cdot)$ is the usual gamma function, and let $\tilde{\Gamma}_p(n):=\pi^\frac{p(p-1)}{2}\prod_{p'\in[p]}\Gamma(n-p'+1)$ be the complex multivariate gamma function. The pdf for complex multivariate Beta distribution with parameter $(a,b)$, denoted as $\tilde{\calB}_p(a,b)$, satisfies that for any Hermitian positive definite matrix $\bY\in\calY^{p\times p}$,
\begin{align}
f_{\bY}(\by):=\frac{\tilde{\Gamma}_p(a+b)}{\tilde{\Gamma}_p(a)\tilde{\Gamma}_p(b)}(\det \by)^{a-p}\det(\bI_p-\by)^{b-p},
\end{align}
for $a\geq p$, $b\geq p$, and $f_{\bY}(\by)=0$ elsewhere.
Define a matrix $\tilde{\bY}$ such that
\begin{align}
\label{eq:tilbY=sin2}
\det\tilde{\bY}=\sin^2\big(\spn(\bI_{n,N_\rma}),\spn(\bx_1^n\bH_\rmb+\bZ_\rmb^n)\big),
\end{align}
and we shall show that $\tilde{\bY}\sim\tilde{\calB}_{N_\rmb}(n-N_\rma,N_\rma)$ in the end of this part.
Thus, we obtain
\begin{align}
\label{eq:sin2 to det tilY}
\Pr_{P_{\Omega_{\bY_\rmb^n}}^\rmu}\Big\{\sin^2\big(\spn(\bx^n),\Omega_{\bY_\rmb^n}\big)\leq\gamma_n\Big\}=\Pr\Big\{\det\tilde{\bY}\leq\gamma_n\Big\}.
\end{align}
Recall that $\calB(\cdot,\cdot)$ is the Beta distribution defined in \eqref{eq:def of 1beta}. It follows from \cite[Corollary 1]{roh2006design} that the distribution of $\det\tilde{\bY}$ coincides with the distribution of $\prod_{j\in[N_\rmb]}B_j$, where $\{B_j\}_{j\in[N_\rmb]}$ are i.i.d. with $B_j\sim\calB(n-N_\rma-j+1,N_\rma)$. Combining with the right hand side of \eqref{eq:beta and sin2} and \eqref{eq:sin2 to det tilY} leads to
\begin{align}
\label{eq:M Beta_j}
M\geq \frac{\tau}{\Pr\Big\{\prod_{j\in[N_\rmb]}B_j\leq\gamma_n\Big\}}.
\end{align}
Therefore, the desired lower bound \eqref{eq:logM/n non ach} follows by taking the logarithm of both sides of \eqref{eq:M Beta_j}, and by dividing by the blocklength $n$.

\emph{Proof of $\tilde{\bY}\sim\tilde{\calB}_{N_\rmb}(n-N_\rma,N_\rma)$:}
Let $\otimes$ be the Kronecker product.
Fix a matrix $\bP_0\in\bbC^{p\times p}$, and two independent complex Gaussian random matrices $\bA\in\bbC^{a\times p}$ and $\bB\in\bbC^{b\times p}$, where $\bA\sim\calCN(\mathbf{0},\bI_a\otimes\bP_0)$ and $\bB\sim\calCN(\mathbf{0},\bI_b\otimes\bP_0)$. Let $\bP_1\in\bbC^{p\times p}$ be an upper triangular matrix and $\bP_2\in\bbC^{p\times p}$ be a Hermitian matrix such that $\bP_1^\rmH\bP_1=\bA^\rmH\bA+\bB^\rmH\bB$ and $\bP_1^\rmH\bP_2\bP_1=\bA^\rmH\bA$. It follows from \cite[Theorem 1]{roh2006design} that $\bP_2\sim\tilde{\calB}_p(a,b)$.
Based on the above method, we construct a complex multivariate Beta distribution. Let 
\begin{align}
\bI_{n,N_\rma}=
\begin{bmatrix}
\bI_{N_\rma} \\* \mathbf{0}_{(n-N_\rma)\times N_\rma}
\end{bmatrix}, \quad
\bY_0:=\bx_1^n\bH_\rmb+\bZ_\rmb^n=
\begin{bmatrix}
\bY_1 \\* \bY_2
\end{bmatrix},
\end{align}
where $\bY_1\in\bbC^{N_\rma\times N_\rmb}$ and $\bY_2\in\bbC^{(n-N_\rma)\times N_\rmb}$.
To obtain the principal angles, first orthonormalize $\bY_0$ to obtain
\begin{align}
\hat{\bY}_0=\bY_0(\bY_1^\rmH\bY_1+\bY_2^\rmH\bY_2)^{-\frac{1}{2}}.
\end{align}
Let $\{\theta_j\}_{j\in[N_\rma]}$ be the principal angles between $\spn(\bI_{n,N_\rma})$ and $\spn(\bx_1^n\bH_\rmb+\bZ_\rmb^n)$. From the definition of the principal angles in \eqref{7-decoder siso long}, it follows that
\begin{align}
\cos\theta_j=\max_{\ba\in\spn(\bI_{n,N_\rma}),~\bb\in\spn(\hat{\bY}_0)}\ba^\rmH\bb,
\end{align}
which implies that $\{\cos\theta_j\}_{j\in[N_\rma]}$ are the singular values of
\begin{align}
\label{eq:I hatY}
\bI_{n,N_\rma}^\rmH\hat{\bY}_0=\bY_1(\bY_1^\rmH\bY_1+\bY_2^\rmH\bY_2)^{-\frac{1}{2}}.
\end{align}
Thus, we obtain that $\{\cos^2\theta_j\}_{j\in[N_\rma]}$ are the eigenvalues of the matrix $(\bY_1^\rmH\bY_1+\bY_2^\rmH\bY_2)^{-1}\bY_1^\rmH\bY_1$. Since $(\bY_1^\rmH\bY_1+\bY_2^\rmH\bY_2)$ is a symmetric positive definite matrix, the Cholesky decomposition \cite[Page 112]{allaire2008numerical} guarantees that there exists an upper triangular matrix $\bY_3\in\bbC^{N_\rmb\times N_\rmb}$ with positive diagonal elements such that
\begin{align}
\label{eq:Y3HY3}
\bY_3^\rmH\bY_3=\bY_1^\rmH\bY_1+\bY_2^\rmH\bY_2.
\end{align}
Let $\bY_4\in\bbC^{N_\rmb\times N_\rmb}$ be the Hermitian matrix defined from
\begin{align}
\label{eq:def of Y4}
\bY_4:=\bY_3^{-\rmH}\bY_1^\rmH\bY_1\bY_3^{-1}.
\end{align}
It follows from \cite[Theorem 1]{roh2006design} that $\bY_4\sim\tilde{\calB}_{N_\rmb}(N_\rma,n-N_\rma)$. Combining \eqref{eq:I hatY}, \eqref{eq:Y3HY3} and \eqref{eq:def of Y4} leads that $\{\cos^2\theta_j\}_{j\in[N_\rma]}$ are the eigenvalues of $\bY_4$. Furthermore, the symmetry of the Beta function implies that $\{\sin^2\theta_j\}_{j\in[N_\rma]}$ have the same distribution as the eigenvalues of $\tilde{\bY}\sim\tilde{\calB}_{N_\rmb}(n-N_\rma,N_\rma)$, and the definition of the angle between the subspaces in \eqref{7 1:sin def} leads to \eqref{eq:tilbY=sin2}.

\subsubsection{Evaluation of \eqref{eq:pr sin2 span}}
\label{subapx:csit ach non sin2}

With a slight abuse of notation, we abbreviate $\sin(\spn(\cdot),\spn(\cdot))$ with $\sin(\cdot,\cdot)$. Let $\bzero_{a,b}$ be the all zero matrix of size $a\times b$. Recall from the SVD in \eqref{eq:svd Hb} that the left transformation matrix $\bL_\rmb\in\bbC^{N_\rma\times N_\rma}$ and the right transformation matrix $\bV'_\rmb\in\bbC^{N_\rmb\times N_\rmb}$ are unitary, and $\bSigma_\rmb=\begin{bmatrix}\bLambda_\rmb & \bzero_{N_\rma,(N_\rmb-N_\rma)} \end{bmatrix}$ is the singular value matrix, where $\bLambda_\rmb=\diag(\{\sqrt{\Lambda_{\rmb,j}}\}_{j\in[N_\rma]})$ contains singular values of $\bH_\rmb$.
Define an extended matrix of $\bL_\rmb$:
\begin{align}
\tilde{\bL}_\rmb:=
\begin{bmatrix}
\bL_\rmb^\rmH & \bzero_{(n-N_\rma),N_\rma} \\*
\bzero_{N_\rma,(n-N_\rma)} & \bI_{n-N_\rma}
\end{bmatrix}.
\end{align}
Given $\bX^n=\bx_1^n$, the left hand side of \eqref{eq:pr sin2 span} equals
\begin{align}
\Pr_{P_{\Omega_{\bY_\rmb^n}|\bX^n=\bx^n}}\Big\{\sin^2\big(\spn(\bx^n),\Omega_{\bY_\rmb^n}\big)\leq\gamma_n\Big\}
&=\Pr\Big\{\sin^2\big(\bI_{n,N_\rma},\sqrt{n\rho\Psi(n)}\bI_{n,N_\rma}\bH_\rmb+\bZ_\rmb^n\big)\leq\gamma_n\Big\}\label{eq:nocsit sin2 gamma -1}\\*
&=\Pr\Big\{\sin^2\big(\bI_{n,N_\rma}\bL_\rmb,(\sqrt{n\rho\Psi(n)}\bI_{n,N_\rma}\bH_\rmb+\bZ_\rmb^n)\bV'_\rmb \big) \leq {\gamma _n} \Big\}\label{eq:nocsit sin2 gamma -2}\\* 
&=\Pr\Big\{\sin^2\big(\tilde{\bL}_\rmb\bI_{n,N_\rma}\bL_\rmb,\tilde{\bL}_\rmb(\sqrt{n\rho\Psi(n)}\bI_{n,N_\rma}\bH_\rmb+\bZ_\rmb^n)\bV'_\rmb \big) \leq {\gamma _n} \Big\} \label{eq:nocsit sin2 gamma -3}\\*
&=\Pr\Big\{\sin^2\big(\bI_{n,N_\rma},\sqrt{n\rho\Psi(n)}\bI_{n,N_\rma}\bSigma_\rmb+\bZ_\rmb^n \big) \leq {\gamma _n}\Big\} \label{eq:nocsit sin2 gamma -4},
\end{align}
where \eqref{eq:nocsit sin2 gamma -1} follows from the fact that $\bx^n=\bx_1^n=\sqrt{n\rho\Psi(n)}\bI_{n,N_\rma}$ and $\Omega_{\bY_\rmb^n}=\spn(\bx_1^n\bH_\rmb+\bZ_\rmb^n)$, \eqref{eq:nocsit sin2 gamma -2} follows since $\spn(\bA)=\spn(\bA\bB)$ for each invertable matrix $\bB$ and $\bL_\rmb$, $\bV'_\rmb$ are both invertible, \eqref{eq:nocsit sin2 gamma -3} follows since the principal angles between two subspaces are invariant under simultaneous rotation of the two subspaces, and \eqref{eq:nocsit sin2 gamma -4} follows from \eqref{eq:svd Hb} and the fact that $\bZ_\rmb^n$ is isotropically distributed, implying that $\tilde{\bL}_\rmb\bZ_\rmb^n\bV'_\rmb$ has the same distribution as $\bZ_\rmb^n$.
Recall that $Z_{\rmb,j}^n$ is the $j$-th column of $\bZ_\rmb^n$, and let $\be_j$ stand for the $j$-th column of $\bI_{n,N_\rma}$. It follows from \cite[Lemma 13]{yang2014quasi} that
\begin{align}
\Pr\Big\{\sin^2\big(\bI_{n,N_\rma},\sqrt{n\rho\Psi(n)}\bI_{n,N_\rma}\bSigma_\rmb+\bZ_\rmb^n \big) \leq {\gamma _n}\Big\}
&\geq \Pr\Bigg\{\prod_{j\in [N_\rma]}\sin^2\big(\be_j,\sqrt{n\rho\Psi(n)\Lambda_{\rmb,j}}\be_j+Z_{\rmb,j}^n \big) \leq {\gamma _n}\Bigg\}\label{eq:sin2 toej-1}\\*
&=\Pr\Bigg\{\prod_{j\in [N_\rma]}\sin^2\big(\be_1,\sqrt{n\rho\Psi(n)\Lambda_{\rmb,j}}\be_1+Z_{\rmb,j}^n \big) \leq {\gamma _n}\Bigg\}\label{eq:sin2 toej-2},
\end{align}
where \eqref{eq:sin2 toej-2} follows by symmetry.
Note that $\be_1=(1,0,\ldots,0)$.
Let $\bA_1:=\be_1$, $\bA_2:=\sqrt{n\rho\Psi(n)\Lambda_{\rmb,j}}\be_1+Z_{\rmb,j}^n $, and $\bA_0:=[\bA_1,\bA_2]$. It follows from \cite[Lemma 12]{yang2014quasi} and by calculating that
\begin{align}
\sin^2\big(\be_1,\sqrt{n\rho\Psi(n)\Lambda_{\rmb,j}}\be_1+Z_{\rmb,j}^n \big)
&=\frac{\det\big(\bA_0^\rmH\bA_0\big)}{\det\big(\bA_1^\rmH\bA_1\big)\det\big(\bA_2^\rmH\bA_2\big)}\\*
&=\frac{\sum_{i\in[2,n]}\big|Z_{\rmb,ij}\big|^2}{\big|\sqrt{n\rho\Psi(n)\Lambda_{\rmb,j}}+Z_{\rmb,1j}\big|^2+\sum_{i\in[2,n]}\big|Z_{\rmb,ij}\big|^2},\label{eq:cal of Tj}
\end{align}
where \eqref{eq:cal of Tj} follows from the fact that $\det\big(\bA_1^\rmH\bA_1\big)=1$,
\begin{align}
\det\big(\bA_2^\rmH\bA_2\big)
&=\Big\|\sqrt{n\rho\Psi(n)\Lambda_{\rmb,j}}\be_1+Z_{\rmb,j}^n\Big\|^2\\*
&=\Big|\sqrt{n\rho\Psi(n)\Lambda_{\rmb,j}}+Z_{\rmb,1j}\Big|^2+\sum_{i\in[2,n]}\big|Z_{\rmb,ij}\big|^2,
\end{align}
and
\begin{align}
\det\big(\bA_0^\rmH\bA_0\big)
&=(\bA_1^\rmH\bA_1)(\bA_2^\rmH\bA_2) - \big|\bA_1^\rmH\bA_2\big|^2\\*
&=1\cdot \Big(\Big|\sqrt{n\rho\Psi(n)\Lambda_{\rmb,j}}+Z_{\rmb,1j}\Big|^2+\sum_{i\in[2,n]}\big|Z_{\rmb,ij}\big|^2\Big) - \Big|\sqrt{n\rho\Psi(n)\Lambda_{\rmb,j}}+Z_{\rmb,1j}\Big|\\*
&=\sum_{i\in[2,n]}\big|Z_{\rmb,ij}\big|^2.
\end{align}
Indeed, the right hand side of \eqref{eq:cal of Tj} is $T_j$ in \eqref{eq:def of Tj}. Combining \eqref{eq:pr sin2 span}, \eqref{eq:nocsit sin2 gamma -4}, \eqref{eq:sin2 toej-2} and \eqref{eq:cal of Tj} leads to \eqref{eq:sin2 pr 1-a}.

\subsection{Proof of Lemma \ref{lemma:csit ach reliability} (Asymptotic Achievability Bound for no-CSI)}
\label{appendix:csit ach asymptotic}

We start by analyzing the denominator on the right hand side of \eqref{eq:logM/n non ach}. Let $n_0=n-N_\rma-N_\rmb$. Thus,
\begin{align}
\Pr\bigg\{\prod_{j\in[N_\rmb]}B_j\leq\gamma_n\bigg\}
&= \Pr\bigg\{\prod_{j\in[N_\rmb]}B_j^{-n_0}\geq\gamma_n^{-n_0}\bigg\}\label{eq:Bj simplify-1}\\*
&\leq \frac{\bbE\Big[\prod_{j\in[N_\rmb]}B_j^{-n_0}\Big]}{\gamma_n^{-n_0}}\label{eq:Bj simplify-2}\\*
&= \gamma_n^{n_0}\prod_{j\in[N_\rmb]}\bbE\Big[B_j^{-n_0}\Big],\label{eq:Bj gamma Bj}
\end{align}
where \eqref{eq:Bj simplify-2} follows from the Markov inequality which implies that $\Pr\{X>a\}\leq\frac{\bbE[X]}{a}$ for any nonnegative random variable $X$ and any $a\in\bbR_+$, and \eqref{eq:Bj gamma Bj} follows since $\{B_j\}_{j\in[N_\rmb]}$ are independent.
Recall that $B_j\sim\calB(n-N_\rma-j+1,N_\rma)$, we obtain that for each $j\in[N_\rmb]$,
\begin{align}
\bbE\Big[B_j^{-n_0}\Big]
&= \frac{\Gamma(n-j+1)}{\Gamma(n-N_\rma-j+1)\Gamma(N_\rma)}\int_{0}^{1}s^{N_\rmb-j}(1-s)^{N_\rma-1}\rmd s\label{eq:Bj n^Na-1}\\*
&\leq \frac{\Gamma(n-j+1)}{\Gamma(n-N_\rma-j+1)\Gamma(N_\rma)}\label{eq:Bj n^Na-2}\\*
&\leq n^{N_\rma},\label{eq:Bj n^Na}
\end{align}
where \eqref{eq:Bj n^Na-1} follows from the definition of the Beta distribution in \eqref{eq:def of 1beta}, \eqref{eq:Bj n^Na-2} follows from the fact that $\int_0^1 s^{N_\rmb-j}(1-s)^{N_\rma-1}\rmd s\leq 1$, and \eqref{eq:Bj n^Na} follows from the fact that $\Gamma(a+1)=a\Gamma(a)$ and $\Gamma(b)\geq 1$ for $b\geq 1$.
Substituting \eqref{eq:Bj n^Na} into \eqref{eq:Bj gamma Bj}, we have
\begin{align}
\label{eq:Bj finally leq}
\Pr\bigg\{\prod_{j\in[N_\rmb]}B_j\leq\gamma_n\bigg\}\leq n^{N_\rma N_\rmb}\gamma_n^{n_0}.
\end{align}
Choose $\tau=O(\frac{1}{\sqrt{n}})$ and $\gamma_n=\exp(-C_{\varepsilon}(\Psi(n))+O(\frac{1}{n}))$, and we show in Appendix \ref{subapx:threshold} that this choice of $\gamma_n$ indeed satisfies \eqref{eq:sin2 pr 1-a}.
Combining with \eqref{eq:logM/n non ach} and \eqref{eq:Bj finally leq} leads to
\begin{align}
\frac{\log M}{n}
&\geq \frac{1}{n}\log\frac{\tau}{n^{N_\rma N_\rmb}\gamma_n^{n_0}}\\*
&=C_{\varepsilon}(\Psi(n))-N_\rma N_\rmb\frac{\log n}{n}+O\Big(\frac{\log n}{n}\Big)+O\Big(\frac{1}{n}\Big)+O\Big(\frac{1}{n^2}\Big)\\*
&=C_{\varepsilon}(\Psi(n))+O\Big(\frac{\log n}{n}\Big).
\end{align}
The proof of Lemma \ref{lemma:csit ach reliability} is now completed.

\subsubsection{Threshold Design}
\label{subapx:threshold}
In this part, we shall show that $\gamma_n=\exp(-C_{\varepsilon}(\Psi(n))+O(\frac{1}{n}))$ satisfies \eqref{eq:sin2 pr 1-a}.
Recall that $T_j$ is defined in \eqref{eq:def of Tj}.
The following lemma is critical to verify the reliability of the chosen decoding threshold.
\begin{lemma}
\label{lemma:18ach}
Fix an arbitrary $\xi_0\in(0,1)$. There exists an $\iota>0$ such that
\begin{align}
\label{eq:lemma 18}
\inf_{\xi\in(\xi_0-\iota,\xi_0+\iota)}\Bigg(\Pr\bigg\{\prod_{j\in[N_\rma]}T_j\leq\xi\bigg\}-\Pr\bigg\{\prod_{j\in[N_\rma]}\frac{1}{1+\rho\Psi(n)\Lambda_{\rmb,j}}\leq\xi\bigg\}\Bigg)>O\Big(\frac{1}{\sqrt{n}}\Big).
\end{align}
\end{lemma}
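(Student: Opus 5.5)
The plan follows the strategy of \cite[Lemma 18]{yang2014quasi}, adapted to the vanishing power $\Psi(n)=O(\frac{1}{\sqrt{n}})$. First condition on $\bLambda_\rmb=\blambda_\rmb$. Each $T_j$ depends only on the $j$-th noise column, so the $T_j$ are conditionally independent. Taking logarithms, $\sum_{j\in[N_\rma]}\log T_j$ is close to a sum of $n$ i.i.d.\ terms. Concretely, write $-\log T_j=\log\big(1+\frac{|\sqrt{n\rho\Psi(n)\Lambda_{\rmb,j}}+Z_{\rmb,1j}|^2}{\sum_{i\in[2,n]}|Z_{\rmb,ij}|^2}\big)$. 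The numerator is $n\rho\Psi(n)\Lambda_{\rmb,j}+O_P(\sqrt{n\Psi(n)})$, and the denominator is $n+O_P(\sqrt n)$. A first-order expansion then gives
\begin{align}
-\sum_{j\in[N_\rma]}\log T_j=\sum_{j\in[N_\rma]}\log\big(1+\rho\Psi(n)\Lambda_{\rmb,j}\big)+\frac{1}{\sqrt n}\,G_n(\blambda_\rmb)+R_n,
\end{align}
where $G_n$ is a normalized sum with conditional variance of order $\Psi(n)$, hence order $n^{-1/2}$; this comes from the cross term $2\sqrt{n\rho\Psi(n)\Lambda_{\rmb,j}}\Re Z_{\rmb,1j}/n$ and the fluctuation of the denominator weighted by $\rho\Psi(n)\Lambda_{\rmb,j}$. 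The remainder $R_n$ is $O_P(\frac1n)$ with controlled moments. This is the step where I would define the auxiliary variables $K_1$ and $\tilG_1$ mentioned earlier: $\tilG_1$ is the normalized fluctuation and $K_1:=-\sum_j\log(1+\rho\Psi(n)\Lambda_{\rmb,j})$ is the outage variable. I would also bound the higher moments of $R_n$, so that Markov/Chebyshev shows $\Pr\{|R_n|>n^{-1}\log n\}=O(n^{-1})$.

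Next, write $\log\xi$ with $\xi$ near $\xi_0$. The event $\{\prod_jT_j\le\xi\}$ becomes $\{K_1+\tilde G_1/\sqrt n+R_n\le\log\xi\}$ (sign conventions absorbed). Its probability is $\Pr\{K_1\le\log\xi\}$ plus a correction. Averaging over $\bLambda_\rmb$ and applying the smoothing Lemma~\ref{lemma:B sqrtn A ff'} (the $B/\sqrt n+A$ lemma with bounds on $f$ and $f'$), this correction equals $\frac{1}{\sqrt n}\bbE[\tilG_1]f_{K_1}(\log\xi)$ plus a term of order $\frac1n\sup|f'|\cdot\bbE[\tilG_1^2]$. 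Because $G_n$ has zero conditional mean up to $O(\Psi(n))$ corrections, the first-order correction is small. This requires the pdf bounds $f_{\tilG_1}=O(n^{1/4})$ and $f'_{\tilG_1}=O(n^{1/2})$ on a neighborhood of $\log\xi_0$, uniformly for $\xi\in(\xi_0-\iota,\xi_0+\iota)$. I would establish these by conditioning on the noise and using the smooth density of $\Lambda_{\rmb,j}$, with the Jacobian $1/\Psi(n)$ rescaled by the $\sqrt n$ normalization. Collecting terms, the difference in \eqref{eq:lemma 18} is bounded below by $-c\big(n^{-1/2}\cdot n^{1/4}\cdot n^{-1/4}+n^{-1}n^{1/2}\big)=-O(n^{-1/2})$, uniformly in $\xi$, which gives the claim.

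\textbf{Main obstacle.} The hard part is the inflated density bounds. With $\Psi(n)\to0$, the outage variable concentrates near $0$ at scale $\Psi(n)$, so its density blows up. I must show that the product of that blow-up with the $O(\Psi(n))$ size of the fluctuation term stays $O(n^{-1/2})$ rather than $O(1)$. I would first try to rescale everything by $1/\Psi(n)$: the outage variable becomes an $n$-independent quantity $\sum_j\Lambda_{\rmb,j}$ with bounded density, and the fluctuation becomes $O_P(n^{-1/2}\Psi(n)^{-1/2})=O_P(n^{-1/4})$. Lemma~\ref{lemma:B sqrtn A ff'} then applies on the rescaled axis. If this rescaling produces exponents that do not close, I would fall back on the Cram\'er--Esseen expansion conditional on $\bLambda_\rmb$, combined with the moment bounds on $R_n$.
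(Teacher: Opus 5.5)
Your scaling heuristic in the last paragraph is sound: after rescaling by $1/\Psi(n)$ the fluctuation is $O_P(n^{-1/4})$, which is why the densities entering Lemma~\ref{lemma:B sqrtn A ff'} come out as $O(n^{1/4})$ and $O(n^{1/2})$. However, two steps of the route do not go through as written.

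\textbf{Independence in the smoothing lemma.} Lemma~\ref{lemma:B sqrtn A ff'} requires $A$ to be independent of $B$, and your decomposition does not produce such a pair. The leading part of your fluctuation, $\sum_{j}2\sqrt{\rho\Psi(n)\Lambda_{\rmb,j}/n}\,\Re(Z_{\rmb,1j})$, has a $\bLambda_\rmb$-dependent scale. So it is not independent of the outage variable, before or after rescaling, and your Taylor formula for the correction tacitly treats it as if it were.

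The paper never linearizes. It peels off one factor by conditioning on $\tilT=\xi/\prod_{j\in[2,N_\rma]}T_j$, discards $\Im(Z_{\rmb,11})$, adds one extra noise term and takes square roots. This gives the exact lower bound $\Pr\{\sqrt{nG_1}\ge-\Re(Z_{\rmb,11})+\tilT_\dagger(\sum_{i\in[2,n+1]}|Z_{\rmb,i1}|^2)^{1/2}\}$, in which $G_1$ appears only on the left. Dividing by the $G_1$-free factor $(\frac{1}{2}+\tilT_\dagger^2\sigma_Z^2)^{1/2}$ yields a noise variable $K_1$ that is conditionally independent of the channel variable $\tilG_1$. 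The $O(n^{1/4})$ and $O(n^{1/2})$ bounds are bounds on $f_{\tilG_1}$ and $f'_{\tilG_1}$ for this channel-side variable.

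You have the roles reversed. Your $\tilG_1$ is the fluctuation, which needs no density bound. The variable that needs one is your outage variable, whose unnormalized density near the relevant thresholds is of order $\sqrt n$, with derivative of order $n$.

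To repair your route:
\begin{itemize}
\item divide the event by the conditional standard deviation $s(\bLambda_\rmb)\asymp\sqrt{\Psi(n)}$ of the Gaussian leading term, which makes that term exactly $\calN(0,1)$ given $\bLambda_\rmb$ and hence independent of it;
\item push the $\bLambda_\rmb$-weighted denominator fluctuation into $R_n$;
\item apply the lemma to $B=\big(\sum_{j}\log(1+\rho\Psi(n)\Lambda_{\rmb,j})+\log\xi\big)/s(\bLambda_\rmb)$, which is essentially the paper's $\tilG_1$.
\end{itemize}

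\textbf{The remainder $R_n$.} $R_n$ never enters your final count, and the proposed treatment would not give $O(n^{-1/2})$. Near the thresholds where the lemma is applied ($-\log\xi$ of order $\Psi(n)$, as for $\xi=\gamma_n$), the outage variable has density of order $1/\Psi(n)\asymp\sqrt n$. Discarding $\{|R_n|>n^{-1}\log n\}$ and shifting the threshold by $n^{-1}\log n$ therefore costs $O(n^{-1/2}\log n)$. Since $nR_n$ has a nondegenerate limit (it contains $|Z_{\rmb,1j}|^2/n$), no cutoff $c/n$ with fixed $c$ has probability $O(n^{-1/2})$. Moreover, with only polynomial moments, Markov's inequality does not even give $\Pr\{|R_n|>n^{-1}\log n\}=O(n^{-1})$.

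Truncating $R_n$ thus costs more than the claimed order. You would instead have to average the shift $R_n$ against the outage density conditionally on the noise, e.g.\ via a one-sided bound $R_n\ge-V/n$ with $V$ noise-measurable and $\bbE[V]<\infty$.

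The paper needs none of this. Every step of its reduction is an exact one-sided inequality, and its only approximation, $\mu_Z\le\sqrt n$, enters in the favorable direction. The sole error term is the one produced by Lemma~\ref{lemma:B sqrtn A ff'}, namely $\frac{1}{n}\big(\frac{2}{\iota^2}+\frac{f_{\tilG_1}}{\iota}+\frac{f'_{\tilG_1}}{2}\big)=O(n^{-1/2})$.
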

The proof of Lemma \ref{lemma:18ach} is provided in Appendix \ref{appendix:lemma:18ach}.

For any $R\in\bbR_+$, it follows from \eqref{eq:ach C final def} that the covert outage probability is given by
\begin{align}
\label{eq:def of Fout}
F_\rmout(R):=\Pr\Big\{ \log\det \big( \bI_{N_\rmb}+ \rho\Psi(n)\bH_\rmb^\rmH\bH_\rmb \big)<R \Big\}.
\end{align}
The covert outage probability can be interpreted as the probability that the channel gain $\bH_\rmb$ is too weak to support reliable communication at any target rate $R$.

Recall that we choose $\tau=O(\frac{1}{\sqrt{n}})$ and $\gamma_n=\exp(-C_{\varepsilon}(\Psi(n))+O(\frac{1}{n}))$. From Lemma \ref{lemma:18ach}, there exists an $\iota'>0$ such that for any $\gamma_n\in(e^{-C_{\varepsilon}(\Psi(n))-\iota'},e^{-C_{\varepsilon}(\Psi(n))+\iota'})$,
\begin{align}
\Pr\bigg\{\prod_{j\in[N_\rma]}T_j\leq\gamma_n\bigg\}
&\geq \Pr\bigg\{\prod_{j\in[N_\rma]}\frac{1}{1+\rho\Psi(n)\Lambda_{\rmb,j}}\leq\gamma_n\bigg\} + O\Big(\frac{1}{\sqrt{n}}\Big)\label{eq:use lem18-1}\\*
&= 1-\Pr\bigg\{\sum_{j\in[N_\rma]}\log\big(1+\rho\Psi(n)\Lambda_{\rmb,j}\big)\leq-\log\gamma_n\bigg\}+ O\Big(\frac{1}{\sqrt{n}}\Big)\label{eq:use lem18-2}\\*
&= 1-F_\rmout(-\log\gamma_n)+O\Big(\frac{1}{\sqrt{n}}\Big)\label{eq:use lem18-3}\\*
&=1-\varepsilon-F'_\rmout(C_{\varepsilon}(\Psi(n)))(-\log\gamma_n-C_{\varepsilon}(\Psi(n)))+O\Big(\frac{1}{\sqrt{n}}\Big)\label{eq:use lem18-4}\\*
&=1-\varepsilon+O\Big(\frac{1}{\sqrt{n}}\Big),\label{eq:use lem18-5}
\end{align}
where \eqref{eq:use lem18-2} follows by taking the logarithm inside the probability, \eqref{eq:use lem18-3} follows from the definition of the covert outage probability $F_\rmout(\cdot)$ in \eqref{eq:def of Fout}, \eqref{eq:use lem18-4} follows from the Taylor Series of expansion of $F_\rmout(-\log\gamma_n)$ around the covert outage rate $C_{\varepsilon}(\Psi(n))$ which implies
\begin{align}
\label{eq:Ft taylor}
F_\rmout(-\log\gamma_n)=F_\rmout\big(C_{\varepsilon}(\Psi(n))\big)+F'_\rmout\big(C_{\varepsilon}(\Psi(n))\big)\big(-\log\gamma_n-C_{\varepsilon}(\Psi(n))\big)+O\big(\big(-\log\gamma_n-C_{\varepsilon}(\Psi(n))\big)^2\big),
\end{align}
and the fact that $F_\rmout(C_{\varepsilon}(\Psi(n)))=\varepsilon$ and $O((-\log\gamma_n-C_{\varepsilon}(\Psi(n)))^2)=O(n^{-2})$, and \eqref{eq:use lem18-5} follows from the fact that $F'_\rmout(C_{\varepsilon}(\Psi(n)))=O(\sqrt{n})$, whose proof is deferred to Appendix \ref{subapx:F' calculation}.
Thus, by \eqref{eq:use lem18-5}, this choice of $\gamma_n=\exp(-C_{\varepsilon}(\Psi(n))+O(\frac{1}{n}))$ indeed satisfies \eqref{eq:sin2 pr 1-a}. The proof is completed.

\subsubsection{Proof of $F'_\rmout(C_{\varepsilon}(\Psi(n)))=O(\sqrt{n})$}
\label{subapx:F' calculation}
For ease of observing the asymptotic behavior with respect to $n$, we simplify the MIMO channel as $N_\rma$ i.i.d. sub-channels sharing the same random channel gain $\Lambda_\rmb$. Recall that $\rho\in(0,1)$ and $\Psi(n)=O(\frac{1}{\sqrt{n}})$. From the definition of the covert outage rate in \eqref{eq:ach C final def}, it follows that
\begin{align}
C_{\varepsilon}(\Psi(n))
&=\sup\Big\{R:\Pr\big\{N_\rma\cdot\log\big(1+\rho\Psi(n)\Lambda_\rmb\big)<R\big\}\leq \varepsilon\Big\}.\label{eq:iid C def}
\end{align}
Recall that the $a$-quantile function of the random variable $X$ is defined as
\begin{align}
\calQ_{X}(a):= \inf\Big\{x: \Pr\big\{X\leq x\big\} \geq a\Big\}.
\end{align}
Let $\tilde{\Lambda}_\rmb:=N_\rma\log(1+\rho\Psi(n)\Lambda_\rmb)$, and the right hand side of \eqref{eq:iid C def} can be rewritten as
\begin{align}
C_{\varepsilon}(\Psi(n))
&=\calQ_{\tilde{\Lambda}_\rmb}(\varepsilon)=N_\rma\rho\Psi(n)\cdot \calQ_{\Lambda_\rmb}(\varepsilon)+o(\Psi(n))\label{eq:C order final-1}\\*
&=O(n^{-\frac{1}{2}}),\label{eq:C order final-2}
\end{align}
where the second equation in \eqref{eq:C order final-1} follows from the Taylor Series of expansion of $\log(1+x)$ around $x=0$ which implies $\log(1+x)=x+o(x)$ and the fact that $\tilde{\Lambda}_\rmb$ is monotonic and continuous respect to $\Lambda_\rmb$, and \eqref{eq:C order final-2} follows from the fact that $\calQ_{\Lambda_\rmb}(\varepsilon)=O(1)$.
Next, from the definition of the covert outage probability in \eqref{eq:def of Fout}, it follows that
\begin{align}
F_\rmout(R)=\Pr\big\{N_\rma\cdot\log\big(1+\rho\Psi(n)\Lambda_\rmb\big)<R\big\}.
\end{align}
Thus, the derivative of $F_\rmout(R)$ is indeed the pdf of $\tilde{\Lambda}_\rmb$ defined as $f_{\tilde{\Lambda}_\rmb}(R)$. Let $f_{\Lambda_\rmb}$ be the pdf of $\Lambda_\rmb$, it follows that
\begin{align}
\label{eq:F' to pdf}
F'_\rmout(R)
=f_{\tilde{\Lambda}_\rmb}(R)
=\frac{e^{\frac{R}{N_\rma}}}{N_\rma\rho\Psi(n)}\cdot f_{\Lambda_\rmb}\bigg(\frac{e^{\frac{R}{N_\rma}}-1}{\rho\Psi(n)}\bigg),
\end{align}
where the second equation in \eqref{eq:F' to pdf} follows from the fact that $\Lambda_\rmb=\frac{e^{\tilde{\Lambda}_\rmb/N_\rma}-1}{\rho\Psi(n)}$.
Given $R=C_{\varepsilon}(\Psi(n))=O(n^{-\frac{1}{2}})$, we obtain
\begin{align}
F'_\rmout(C_{\varepsilon}(\Psi(n)))
&=\frac{e^{\frac{C_{\varepsilon}(\Psi(n))}{N_\rma}}}{N_\rma\rho\Psi(n)}\cdot f_{\Lambda_\rmb}\bigg(\frac{e^{\frac{C_{\varepsilon}(\Psi(n))}{N_\rma}}-1}{\rho\Psi(n)}\bigg)\label{eq:F' final-1}\\*
&=\frac{1+\frac{C_{\varepsilon}(\Psi(n))}{N_\rma}+o\big(\frac{C_{\varepsilon}(\Psi(n))}{N_\rma}\big)}{N_\rma\rho\Psi(n)}\cdot f_{\Lambda_\rmb}\bigg(\frac{\frac{C_{\varepsilon}(\Psi(n))}{N_\rma}+o\big(\frac{C_{\varepsilon}(\Psi(n))}{N_\rma}\big)}{\rho\Psi(n)}\bigg)\label{eq:F' final-2}\\*
&=O(\sqrt{n}),\label{eq:F' final-3}
\end{align}
where \eqref{eq:F' final-2} follows from the Taylor Series of expansion of $e^x$ around $x=0$ which implies $e^x=1+x+o(x)$, and \eqref{eq:F' final-3} follows from the fact that $\Psi(n)=O(\frac{1}{\sqrt{n}})$ and $f_{\Lambda_\rmb}(\cdot)=O(1)$ for typical values.

\subsection{Proof of Lemma \ref{lemma:18ach}}
\label{appendix:lemma:18ach}

Recall that $\xi_0\in(0,1)$ and $T_j$ is defined in \eqref{eq:def of Tj}. Choose $\iota>0$ such that $\iota\leq\frac{\xi_0}{2}$. Recall that $\xi\in(\xi_0-\iota,\xi_0+\iota)$. Define $\tilT:=\frac{\xi}{\prod_{j\in[2,N_\rma]}T_j}$. The first term in the left hand side of \eqref{eq:lemma 18} can be written as
\begin{align}
\Pr\bigg\{\prod_{j\in[N_\rma]}T_j\leq\xi\bigg\}
&= \Pr\big\{T_1\leq\tilT\big\}\\*
&= \Pr\big\{T_1\leq\tilT,\tilT\geq 1\big\}+\Pr\big\{T_1\leq\tilT,\tilT< 1\big\}\\*
&= \Pr\big\{\tilT\geq 1\big\}+\Pr\big\{T_1\leq\tilT,\tilT< 1\big\},\label{eq:+tilT 1}
\end{align}
where \eqref{eq:+tilT 1} follows from the fact that $T_1\leq 1$ by \eqref{eq:def of Tj} and further $\Pr\big\{T_1\leq\tilT|\tilT\geq 1\big\}=1$. Define $G_j:=\rho\Psi(n)\Lambda_{\rmb,j}$. We mainly focus on the second term of the right hand side of \eqref{eq:+tilT 1}:
\begin{align}
\label{eq:bbE of tilT}
\Pr\big\{T_1\leq\tilT,\tilT< 1\big\}=\bbE_{\tilT,G_2,\ldots,G_{N_\rma}}\Big[\bbo(\tilT<1)\times \Pr\big\{T_1\leq\tilT|\tilT,G_2,\ldots,G_{N_\rma}\big\}\Big].
\end{align}
Since events of measure zero do not affect \eqref{eq:bbE of tilT}, we can
assume without loss of generality that the joint pdf of $\tilT,G_2,\ldots,G_{N_\rma}$ is strictly positive.
For $j=1$, let $Z_{\rmb,(n+1)1}\sim\calCN(0,1)$ be independent of all other random variables appearing in the definition of $\{T_j\}_{j\in[N_\rma]}$ in \eqref{eq:def of Tj}.
To lower-bound \eqref{eq:bbE of tilT}, we first bound the second term inside the expectation. For $\tilT<1$, it follows that
\begin{align}
\Pr\big\{T_1\leq\tilT|\tilT,G_2,\ldots,G_{N_\rma}\big\}
&= \Pr\Bigg\{\frac{\sum_{i\in[2,n]}\big|Z_{\rmb,i1}\big|^2}{\big|\sqrt{nG_1}+Z_{\rmb,11}\big|^2+\sum_{i\in[2,n]}\big|Z_{\rmb,i1}\big|^2}\leq\tilT\Bigg|\tilT,G_2,\ldots,G_{N_\rma}\Bigg\}\label{eq:before B sqrtn A-1}\\*
&= \Pr\Bigg\{\big|\sqrt{nG_1}+Z_{\rmb,11}\big|^2\geq(\tilT^{-1}-1)\sum_{i\in[2,n]}\big|Z_{\rmb,i1}\big|^2\Bigg|\tilT,G_2,\ldots,G_{N_\rma}\Bigg\}\label{eq:before B sqrtn A-2}\\*
&\geq \Pr\Bigg\{\big|\sqrt{nG_1}+\Re\big(Z_{\rmb,11}\big)\big|^2\geq(\tilT^{-1}-1)\sum_{i\in[2,n+1]}\big|Z_{\rmb,i1}\big|^2\Bigg|\tilT,G_2,\ldots,G_{N_\rma}\Bigg\}\label{eq:before B sqrtn A-3}\\*
&\geq \Pr\Bigg\{\sqrt{nG_1}\geq -\Re\big(Z_{\rmb,11}\big)+\sqrt{\tilT^{-1}-1}\sqrt{\sum_{i\in[2,n+1]}\big|Z_{\rmb,i1}\big|^2}\Bigg|\tilT,G_2,\ldots,G_{N_\rma}\Bigg\},\label{eq:before B sqrtn A}
\end{align}
where \eqref{eq:before B sqrtn A-1} follows from \eqref{eq:def of Tj}, \eqref{eq:before B sqrtn A-3} follows since $|\sqrt{nG_1}+\Re(Z_{\rmb,11})|^2\leq|\sqrt{nG_1}+Z_{\rmb,11}|^2$ and $\sum_{i\in[2,n+1]}|Z_{\rmb,i1}|^2\geq\sum_{i\in[2,n]}|Z_{\rmb,i1}|^2$.
We shall need the following lemma, which concerns the speed of convergence of $\Pr\big\{B\geq\frac{A}{\sqrt{n}}\big\}$ to $\Pr\big\{B\geq0\big\}$ as $n\to\infty$ for two independent random variables $A$ and $B$.
\begin{lemma}
\label{lemma:B sqrtn A ff'}
Let $A$ be a real random variable with zero mean and unit variance, and $B$ be a real random variable independent of $A$ with continuously differentiable pdf $f_B$, and its derivative is denoted by $f_B'$. Let $\iota>0$ be independent with $n$. Thus,
\begin{align}
\label{eq:B sqrtn A ff'}
\bigg|\Pr\bigg\{B\geq\frac{A}{\sqrt{n}}\bigg\}-\Pr\big\{B\geq 0\big\}\bigg|\leq\frac{2}{\iota^2n}+\frac{f_B}{\iota n}+\frac{f_B'}{2n}.
\end{align}
\end{lemma}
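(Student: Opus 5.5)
The plan is to condition on $A$ and Taylor-expand the cdf of $B$ around the origin, treating a small-deviation region and a large-deviation region for $A$ separately. Throughout, I read $f_B$ and $f_B'$ on the right-hand side of \eqref{eq:B sqrtn A ff'} as $f_B(0)$ and $\sup|f_B'|$; if necessary, the supremum is taken only over the interval $[-\iota,\iota]$, which is where the Taylor expansion below is used. Write $F_B$ for the cdf of $B$. By independence of $A$ and $B$,
\begin{align}
\Pr\Big\{B\geq\frac{A}{\sqrt{n}}\Big\}-\Pr\{B\geq 0\}=-\bbE\Big[F_B\Big(\frac{A}{\sqrt{n}}\Big)-F_B(0)\Big].
\end{align}
I will split this expectation using the indicators of the events $\calE:=\{|A|\leq\iota\sqrt{n}\}$ and its complement $\calE^c$.

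On $\calE^c$, the bracket lies in $[-1,1]$. Chebyshev's inequality with $\bbE[A^2]=1$ gives $\Pr\{\calE^c\}\leq\frac{1}{\iota^2n}$. Hence the contribution from $\calE^c$ is at most $\frac{1}{\iota^2 n}$. I will keep the constant $2$ in the statement as slack, which also covers treating the events $\{A>\iota\sqrt n\}$ and $\{A<-\iota\sqrt n\}$ separately.

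On $\calE$, the point $A/\sqrt n$ lies in $[-\iota,\iota]$. Since $f_B$ is continuously differentiable, Taylor's theorem with Lagrange remainder gives
\begin{align}
F_B\Big(\frac{A}{\sqrt n}\Big)-F_B(0)=f_B(0)\frac{A}{\sqrt n}+\frac{f_B'(\zeta)}{2}\frac{A^2}{n}
\end{align}
for some $\zeta$ between $0$ and $A/\sqrt n$. The quadratic term has absolute expectation at most $\frac{\sup|f_B'|}{2n}\bbE[A^2]=\frac{f_B'}{2n}$. The linear term is the key step. Because $\bbE[A]=0$, we have $\bbE[A\,\bbo(\calE)]=-\bbE[A\,\bbo(\calE^c)]$. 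Moreover,
\begin{align}
|\bbE[A\,\bbo(\calE^c)]|\leq\bbE\Big[\frac{A^2}{\iota\sqrt n}\bbo(\calE^c)\Big]\leq\frac{1}{\iota\sqrt n}.
\end{align}
Multiplying by $f_B(0)/\sqrt n$ then yields the term $\frac{f_B}{\iota n}$. Summing the three contributions gives \eqref{eq:B sqrtn A ff'}.

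The main point to get right is the linear term. A naive bound $\bbE|A|/\sqrt n$ only gives $O(1/\sqrt n)$. The $1/n$ rate comes from exploiting the zero mean of $A$ and moving the mass to the tail event $\calE^c$, where $|A|>\iota\sqrt n$ lets one trade a factor $|A|$ for $A^2/(\iota\sqrt n)$.

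I also need to be careful that the supremum of $|f_B'|$ is taken only over the region actually visited, namely $[-\iota,\iota]$. This matters later, when the lemma is applied with $f_B=O(n^{1/4})$ and $f_B'=O(n^{1/2})$. There the explicit dependence on $f_B$ and $f_B'$ in the bound must be retained rather than absorbed into constants, which is exactly why the lemma is stated with them displayed.
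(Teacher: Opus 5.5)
Your proof is correct and follows the paper's route. The paper simply defers to a tailored version of \cite[Lemma 17]{yang2014quasi}, whose proof is the argument you give: Chebyshev on $\calE^c=\{|A|>\iota\sqrt{n}\}$, a second-order Taylor expansion of the cdf of $B$ at $0$ on $\calE$, and the zero-mean identity $\bbE[A\,\bbo(\calE)]=-\bbE[A\,\bbo(\calE^c)]$, which turns the linear term into $f_B(0)/(\iota n)$. Reading $f_B$ as $f_B(0)$ and $f_B'$ as $\sup_{[-\iota,\iota]}|f_B'|$ is the natural interpretation, and your bookkeeping gives $\frac{1}{\iota^2 n}$ instead of $\frac{2}{\iota^2 n}$.
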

The proof of Lemma \ref{lemma:B sqrtn A ff'} follows by tailoring \cite[Lemma 17]{yang2014quasi} to preserve the terms of $f_B$ and $f'_B$.

We next rewrite the right hand side of \eqref{eq:before B sqrtn A} into the form of the left hand side of \eqref{eq:B sqrtn A ff'} to use Lemma \ref{lemma:B sqrtn A ff'}.
Recall that $G_j=\rho\Psi(n)\Lambda_{\rmb,j}$ for $j\in[N_\rma]$.
Let $\mu_Z$ and $\sigma_Z^2$ be the mean and the variance of the random variable $\sqrt{\sum_{i\in[2,n+1]}\big|Z_{\rmb,i1}\big|^2}$, respectively. Define $\tilT_\dagger:=\sqrt{\tilT^{-1}-1}$. Below, we construct the forms of $A$ and $B$ in Lemma \ref{lemma:B sqrtn A ff'}. Define two random variables
\begin{align}
K_1&:=\frac{1}{\sqrt{\frac{1}{2}+\tilT_\dagger^2\sigma_Z^2}}\Bigg(-\Re\big(Z_{\rmb,11}\big)+\tilT_\dagger\sqrt{\sum_{i\in[2,n+1]}\big|Z_{\rmb,i1}\big|^2}-\mu_Z\tilT_\dagger\Bigg),\label{eq:def K ach}\\*
\tilG_1&:=\frac{1}{\sqrt{\frac{1}{2}+\tilT_\dagger^2\sigma_Z^2}}\bigg(\sqrt{G_1}-\frac{\mu_Z}{\sqrt{n}}\tilT_\dagger\bigg).\label{eq:tilG1}
\end{align}
Note that $K_1$ is a zero mean, unit variance random variable that is conditionally independent of $\tilG_1$ given $\tilT_\dagger$. 
Using these definitions, we can rewrite the right hand side of \eqref{eq:before B sqrtn A} as
\begin{align}
\label{eq:rewrite nG1}
\Pr\bigg\{\tilG_1\geq\frac{K_1}{\sqrt{n}}\bigg|\tilT_\dagger,G_2,\ldots,G_{N_\rma}\bigg\}.
\end{align}
Let $f_{\tilG_1}$ be the pdf of $\tilG_1$ and $f'_{\tilG_1}$ be the derivative of $f_{\tilG_1}$.\footnote{The definitions of the pdfs, joint pdfs, marginal pdfs, and conditional pdfs for random variables are similar as $f_{\tilG_1}$ and thus omitted below.}
From Lemma \ref{lemma:B sqrtn A ff'}, there exists an $\iota>0$ such that
\begin{align}
\label{eq:tilG1>0}
\Pr\bigg\{\tilG_1\geq\frac{K_1}{\sqrt{n}}\bigg|\tilT_\dagger,G_2,\ldots,G_{N_\rma}\bigg\}
\geq \Pr\Big\{\tilG_1\geq0\Big|\tilT_\dagger,G_2,\ldots,G_{N_\rma}\Big\}-\frac{1}{n}\bigg(\frac{2}{\iota^2}+\frac{f_{\tilG_1}}{\iota}+\frac{f'_{\tilG_1}}{2}\bigg).
\end{align}
Combining \eqref{eq:bbE of tilT}, \eqref{eq:rewrite nG1} and \eqref{eq:tilG1>0} leads to
\begin{align}
\Pr\big\{T_1\leq\tilT,\tilT< 1\big\}
&\geq \bbE_{\tilT,G_2,\ldots,G_{N_\rma}}\Bigg[\bbo(\tilT<1)\times \bigg(\Pr\Big\{\tilG_1\geq0\Big|\tilT,G_2,\ldots,G_{N_\rma}\Big\}-\frac{1}{n}\bigg(\frac{2}{\iota^2}+\frac{f_{\tilG_1}}{\iota}+\frac{f'_{\tilG_1}}{2}\bigg)\bigg)\Bigg]\label{eq:combine bbE 1}\\*
&\geq \Pr\bigg\{\frac{1}{1+\frac{nG_1}{\mu^2_Z}}\leq \tilT,\tilT<1\bigg\}-\frac{1}{n}\bigg(\frac{2}{\iota^2}+\frac{f_{\tilG_1}}{\iota}+\frac{f'_{\tilG_1}}{2}\bigg)\label{eq:combine bbE 2}\\*
&\geq \Pr\bigg\{\frac{1}{1+G_1}\leq \tilT,\tilT<1\bigg\}-\frac{1}{n}\bigg(\frac{2}{\iota^2}+\frac{f_{\tilG_1}}{\iota}+\frac{f'_{\tilG_1}}{2}\bigg),\label{eq:combine bbE 3}
\end{align}
where \eqref{eq:combine bbE 2} follows from \eqref{eq:tilG1} and the fact that $\tilT_\dagger=\sqrt{\tilT^{-1}-1}$ and $\bbo(\tilT<1)\leq 1$, and \eqref{eq:combine bbE 3} follows from \cite[Eq. (18.14)]{johnson1995continuous} that $\mu_Z=\frac{\Gamma(n+\frac{1}{2})}{\Gamma(n)}\leq\sqrt{n}$.
Substituting \eqref{eq:combine bbE 3} into \eqref{eq:+tilT 1}, we obtain
\begin{align}
\Pr\bigg\{\prod_{j\in[N_\rma]}T_j\leq\xi\bigg\}
&\geq \Pr\big\{\tilT\geq 1\big\}+\Pr\bigg\{\frac{1}{1+G_1}\leq \tilT,\tilT<1\bigg\}-\frac{1}{n}\bigg(\frac{2}{\iota^2}+\frac{f_{\tilG_1}}{\iota}+\frac{f'_{\tilG_1}}{2}\bigg)\label{eq:Tj 1+Gj final-1}\\*
&=\Pr\bigg\{\frac{1}{1+G_1}\leq \tilT,\tilT\geq 1\bigg\}+\Pr\bigg\{\frac{1}{1+G_1}\leq \tilT,\tilT<1\bigg\}-\frac{1}{n}\bigg(\frac{2}{\iota^2}+\frac{f_{\tilG_1}}{\iota}+\frac{f'_{\tilG_1}}{2}\bigg)\label{eq:Tj 1+Gj final-2}\\*
&=\Pr\bigg\{\frac{1}{1+G_1}\leq \tilT\bigg\}-\frac{1}{n}\bigg(\frac{2}{\iota^2}+\frac{f_{\tilG_1}}{\iota}+\frac{f'_{\tilG_1}}{2}\bigg)\label{eq:Tj 1+Gj final-3}\\*
&=\Pr\bigg\{\frac{1}{1+G_1}\prod_{j\in[2,N_\rma]}T_j\leq\xi\bigg\}-\frac{1}{n}\bigg(\frac{2}{\iota^2}+\frac{f_{\tilG_1}}{\iota}+\frac{f'_{\tilG_1}}{2}\bigg),\label{eq:Tj 1+Gj final-4}
\end{align}
where \eqref{eq:Tj 1+Gj final-2} follows from the fact that $\frac{1}{1+G_1}\leq 1$, and \eqref{eq:Tj 1+Gj final-4} follows from the fact that $\tilT=\frac{\xi}{\prod_{j\in[2,N_\rma]}T_j}$.
Note that $f_{\tilG_1}=O(n^\frac{1}{4})$ and $f_{\tilG_1}'=O(\sqrt{n})$, whose proofs are deferred to Appendices \ref{subapx:f tilG 1} and \ref{subapx:f tilG 1 prime}, respectively. Reiterating above steps for $\{T_j\}_{j\in[2,N_\rma]}$, we conclude that
\begin{align}
\Pr\bigg\{\prod_{j\in[N_\rma]}T_j\leq\xi\bigg\}
&\geq \Pr\bigg\{\prod_{j\in[N_\rma]}\frac{1}{1+G_j}\leq\xi\bigg\}+O\Big(\frac{1}{\sqrt{n}}\Big).
\end{align}
The proof of Lemma \ref{lemma:18ach} is completed.
To conclude the proof, it remains to calculate $f_{\tilG_1}$ and $f_{\tilG_1}'$.

\subsubsection{Proof of $f_{\tilG_1}=O(n^\frac{1}{4})$}
\label{subapx:f tilG 1}
Recall that $T_j$ is defined in \eqref{eq:def of Tj}, $\tilT=\frac{\xi}{\prod_{j\in[2,N_\rma]}T_j}$, and $\tilT_\dagger=\sqrt{\tilT^{-1}-1}$.
By the definition of $\tilG_1$ in \eqref{eq:tilG1}, we have
\begin{align}
G_1=\bigg(\sqrt{\frac{1}{2}+\tilT_\dagger^2\sigma_Z^2}\tilG_1+\frac{\mu_Z}{\sqrt{n}}\tilT_\dagger\bigg)^2.
\end{align}
Given $\tilT_\dagger=\tilt_\dagger$ and $\{G_j=g_j\}_{j\in[2,N_\rma]}$. 
Let $\tilg_1$ be the realization of $\tilG_1$, and $g_1:=\big(\sqrt{\frac{1}{2}+\tilt_\dagger^2\sigma_Z^2}\tilg_1+\frac{\mu_Z}{\sqrt{n}}\tilt_\dagger\big)^2$. Since $G_1$ increases monotonically with respect to $\tilG_1$, it follows that
\begin{align}
f_{\tilG_1}(\tilg_1|\tilt_\dagger,g_2,\ldots,g_{N_\rma})
&=f_{G_1}(g_1|g_2,\ldots,g_{N_\rma})\cdot\frac{\rmd g_1}{\rmd \tilg_1} \label{eq:tilG1 conditional-1}\\*
&=f_{G_1}(g_1|g_2,\ldots,g_{N_\rma})\cdot2\bigg(\sqrt{\frac{1}{2}+\tilt_\dagger^2\sigma_Z^2}\tilg_1+\frac{\mu_Z}{\sqrt{n}}\tilt_\dagger\bigg)\sqrt{\frac{1}{2}+\tilt_\dagger^2\sigma_Z^2}\label{eq:tilG1 conditional-2}\\*
&=\frac{f_{G_1,\ldots,G_{N_\rma}}(g_1,\ldots,g_{N_\rma})}{f_{G_2,\ldots,G_{N_\rma}}(g_2,\ldots,g_{N_\rma})}\cdot2\bigg(\sqrt{\frac{1}{2}+\tilt_\dagger^2\sigma_Z^2}\tilg_1+\frac{\mu_Z}{\sqrt{n}}\tilt_\dagger\bigg)\sqrt{\frac{1}{2}+\tilt_\dagger^2\sigma_Z^2}.\label{eq:tilG1 conditional-3}
\end{align}
Recall that $G_j=\rho\Psi(n)\Lambda_{\rmb,j}$, and $f_{\Lambda_{\rmb,1},\ldots,\Lambda_{\rmb,N_\rma}}$ is the joint pdf of $\Lambda_{\rmb,1},\ldots,\Lambda_{\rmb,N_\rma}$, it follows that
\begin{align}
\label{eq:g1 joint}
f_{G_1,\ldots,G_{N_\rma}}(g_1,\ldots,g_{N_\rma})=\frac{1}{(\rho\Psi(n))^{N_\rma}}\cdot f_{\Lambda_{\rmb,1},\ldots,\Lambda_{\rmb,N_\rma}}(\lambda_{\rmb,1},\ldots,\lambda_{\rmb,N_\rma}).
\end{align}
For the marginal pdf $f_{G_2,\ldots,G_{N_\rma}}$, integrating over the entire domain of $G_1$ leads to
\begin{align}
f_{G_2,\ldots,G_{N_\rma}}(g_2,\ldots,g_{N_\rma})
&=\int_{G_1\in\bbR_+}f_{G_1,\ldots,G_{N_\rma}}(g_1,\ldots,g_{N_\rma})\rmd g_1\label{eq:g2-1}\\*
&=\int_{\Lambda_{\rmb,1}\in\bbR_+}\frac{f_{\Lambda_{\rmb,1},\ldots,\Lambda_{\rmb,N_\rma}}(\lambda_{\rmb,1},\ldots,\lambda_{\rmb,N_\rma})}{(\rho\Psi(n))^{N_\rma}}\cdot \rho\Psi(n)\rmd\lambda_{\rmb,1}\label{eq:g2-2}\\*
&=\frac{1}{(\rho\Psi(n))^{N_\rma-1}}\int_{\Lambda_{\rmb,1}\in\bbR_+}f_{\Lambda_{\rmb,1},\ldots,\Lambda_{\rmb,N_\rma}}(\lambda_{\rmb,1},\ldots,\lambda_{\rmb,N_\rma})\rmd\lambda_{\rmb,1}\label{eq:g2-3}\\*
&=\frac{1}{(\rho\Psi(n))^{N_\rma-1}}\cdot f_{\Lambda_{\rmb,2},\ldots,\Lambda_{\rmb,N_\rma}}(\lambda_{\rmb,2},\ldots,\lambda_{\rmb,N_\rma}).\label{eq:g2-4}
\end{align}
Substituting \eqref{eq:g1 joint} and \eqref{eq:g2-4} into \eqref{eq:tilG1 conditional-3} leads to
\begin{align}
f_{\tilG_1}(\tilg_1|\tilt_\dagger,g_2,\ldots,g_{N_\rma})
&=\frac{f_{\Lambda_{\rmb,1}|\Lambda_{\rmb,2},\ldots,\Lambda_{\rmb,N_\rma}}(\lambda_{\rmb,1}|\lambda_{\rmb,2},\ldots,\lambda_{\rmb,N_\rma})}{\rho\Psi(n)}\cdot 2\bigg(\sqrt{\frac{1}{2}+\tilt_\dagger^2\sigma_Z^2}\tilg_1+\frac{\mu_Z}{\sqrt{n}}\tilt_\dagger\bigg)\sqrt{\frac{1}{2}+\tilt_\dagger^2\sigma_Z^2}\label{eq:tilG_1 final-1}\\*
&=O(n^{\frac{1}{4}}),\label{eq:tilG_1 final-2}
\end{align}
where \eqref{eq:tilG_1 final-2} follows since $f_{\Lambda_{\rmb,1}|\Lambda_{\rmb,2},\ldots,\Lambda_{\rmb,N_\rma}}$ is independent with $n$, $\Psi(n)=O(\frac{1}{\sqrt{n}})$, $\sqrt{g_1}=\sqrt{\frac{1}{2}+\tilt_\dagger^2\sigma_Z^2}\tilg_1+\frac{\mu_Z}{\sqrt{n}}\tilt_\dagger=O(n^{-\frac{1}{4}})$, and $\tilt_\dagger^2\sigma_Z^2$ is bounded by $O(1)$ by \cite[Eqs. (315)--(319)]{yang2014quasi}.

\subsubsection{Proof of $f_{\tilG_1}'=O(\sqrt{n})$}
\label{subapx:f tilG 1 prime}
Define two functions 
\begin{align}
F_1&:=\frac{1}{\rho\Psi(n)}\cdot f_{\Lambda_{\rmb,1}|\Lambda_{\rmb,2},\ldots,\Lambda_{\rmb,N_\rma}}(\lambda_{\rmb,1}|\lambda_{\rmb,2},\ldots,\lambda_{\rmb,N_\rma}),\label{eq:F1 def}\\*
F_2&:=2\bigg(\sqrt{\frac{1}{2}+\tilt_\dagger^2\sigma_Z^2}\tilg_1+\frac{\mu_Z}{\sqrt{n}}\tilt_\dagger\bigg)\sqrt{\frac{1}{2}+\tilt_\dagger^2\sigma_Z^2},\label{eq:F2 def}
\end{align}
and by \eqref{eq:tilG_1 final-1}, the derivative of $f_{\tilG_1}$ equals
\begin{align}
\label{eq:tilg1 derivative}
\frac{\rmd}{\rmd \tilg_1}f_{\tilG_1}(\tilg_1|\tilt_\dagger,g_2,\ldots,g_{N_\rma})=F'_1F_2+F_1F'_2.
\end{align}
By calculating, we obtain
\begin{align}
F'_1
&=\frac{f'_{\Lambda_{\rmb,1}|\Lambda_{\rmb,2},\ldots,\Lambda_{\rmb,N_\rma}}(\lambda_{\rmb,1}|\lambda_{\rmb,2},\ldots,\lambda_{\rmb,N_\rma})}{(\rho\Psi(n))^2}\cdot \frac{\rmd g_1}{\rmd\tilg_1}\\*
&=\frac{f'_{\Lambda_{\rmb,1}|\Lambda_{\rmb,2},\ldots,\Lambda_{\rmb,N_\rma}}(\lambda_{\rmb,1}|\lambda_{\rmb,2},\ldots,\lambda_{\rmb,N_\rma})}{(\rho\Psi(n))^2}\cdot 2\sqrt{g_1}\sqrt{\frac{1}{2}+\tilt_\dagger^2\sigma_Z^2},\label{eq:F1'-2}
\end{align}
and
\begin{align}
F'_2
&=2\cdot\bigg(\sqrt{\frac{1}{2}+\tilt_\dagger^2\sigma_Z^2}\bigg)^2\\*
&=1+2\tilt_\dagger^2\sigma_Z^2.\label{eq:F2'-2}
\end{align}
Substituting \eqref{eq:F1 def}, \eqref{eq:F2 def}, \eqref{eq:F1'-2} and \eqref{eq:F2'-2} into \eqref{eq:tilg1 derivative} leads to
\begin{align}
f'_{\tilG_1}(\tilg_1|\tilt_\dagger,g_2,\ldots,g_{N_\rma})
&=\frac{\Big(2\sqrt{g_1}\sqrt{\frac{1}{2}+\tilt_\dagger^2\sigma_Z^2}\Big)^2}{(\rho\Psi(n))^2}\cdot f'_{\Lambda_{\rmb,1}|\Lambda_{\rmb,2},\ldots,\Lambda_{\rmb,N_\rma}}(\lambda_{\rmb,1}|\lambda_{\rmb,2},\ldots,\lambda_{\rmb,N_\rma})\nn\\*
&\quad + \frac{1+2\tilt_\dagger^2\sigma_Z^2}{\rho\Psi(n)}\cdot f_{\Lambda_{\rmb,1}|\Lambda_{\rmb,2},\ldots,\Lambda_{\rmb,N_\rma}}(\lambda_{\rmb,1}|\lambda_{\rmb,2},\ldots,\lambda_{\rmb,N_\rma})\label{eq:f'tilG1 final-1}\\*
&=O(\sqrt{n})\label{eq:f'tilG1 final-2},
\end{align}
where \eqref{eq:f'tilG1 final-2} follows since $f'_{\Lambda_{\rmb,1}|\Lambda_{\rmb,2},\ldots,\Lambda_{\rmb,N_\rma}}$ is independent with $n$, and all other terms are of the same order as those in \eqref{eq:tilG_1 final-1}, which ensures that both terms in \eqref{eq:f'tilG1 final-1} are of order $O(\sqrt{n})$.

\subsection{Proof of Lemma \ref{lemma:converse power} (Covertness Analysis for Converse)}
\label{appendix:covert con}

For the converse part, we consider the worst covert communication case channel $\bh_{\rmw,0}$ for the warden Willie defined in \eqref{eq:def of hw0}.
Therefore, Willie's observation at the $i$-th channel use is now
\begin{align}
\bY_{\rmw,i}=\sqrt{\lambda_0}\bX_i+\bZ_{\rmw,i},
\label{eqB:Willie_single_use}
\end{align}
where $\bZ_{\rmw,i}\sim\calC\calN(\mathbf 0,\bI_{N_\rma})$.
Let $Q_{1,i}$ denote the marginal distribution of $\bY_{\rmw,i}$ under $\rmS_1$ induced by $\bX_i$ with $\Psi_{i}(n)$.
Let $Q_{0,i}:=\calC\calN(\mathbf0,\bI_{N_\rma})$ denote the noise distribution under $\rmS_0$, which implies that
\begin{align}
Q_0^{(n)}=\prod_{i=1}^n Q_{0,i}.
\label{eqB:Q0_product}
\end{align}
It follows from \cite[Eq. (13)]{wang_fundamental_2016} that
\begin{align}
\bbD(Q_1^{(n)}\|Q_0^{(n)})
\ge
\sum_{i=1}^n
\bbD(Q_{1,i}\|Q_{0,i}).
\label{eqB:chain_rule_lower}
\end{align}
We next lower bound $\bbD(Q_{1,i}\|Q_{0,i})$:
\begin{align}
\bbD(Q_{1,i}\|Q_{0,i})
&=-h(\bY_{\rmw,i})+\bbE_{Q_{1,i}}\bigg[\log \frac{1}{Q_{0,i}(\bY_{\rmw,i})}\bigg]\label{eq:KLD h 1}\\*
&=-h(\bY_{\rmw,i})+\sum_{j\in[N_\rma]}\bbE_{Q_{1,i}}\bigg[\log \frac{1}{Q_{0,i}(Y_{\rmw,ij})}\bigg]\label{eq:KLD h 2}\\*
&=-h(\bY_{\rmw,i})+N_\rma\log \pi+\sum_{j\in[N_\rma]}\big(1+\lambda_0\Psi_{i}(n)\big)\label{eq:KLD h 3}\\*
&\geq-\sum_{j\in[m]}\log\pi e\big(1+\lambda_0\Psi_{i}(n)\big)+m\log \pi+\sum_{j\in[m]}\big(1+\lambda_0\Psi_{i}(n)\big)\label{eq:KLD h 4}\\
&=N_\rma\cdot\Big( \lambda_0\Psi_{i}(n)-\log\big(1+\lambda_0\Psi_{i}(n)\big) \Big),\label{eq:KLD h 5}
\end{align}
where \eqref{eq:KLD h 1} follows from the definition of KLD and $h(\cdot)$ is differential entropy, \eqref{eq:KLD h 2} follows since the MIMO channel can be decomposed into $N_\rma$ independent parallel sub-channels via SVD, \eqref{eq:KLD h 3} follws from $Q_{0,i}=\calC\calN(\mathbf{0},\bI_{N_\rma})$ and $\bbE_{Q_{1,i}}[|Y_{\rmw,ij}|^2]=1+\lambda_0\Psi_{i}(n)$, \eqref{eq:KLD h 4} follows from zero-mean multivariate Gaussian distribution is the maximizer of the entropy functional under the second moment constraint \cite[Chapter 12]{cover1999elements}.

Fix any $\omega>1$. Combining \eqref{eqB:chain_rule_lower} and \eqref{eq:KLD h 5} leads to
\begin{align}
\bbD(Q_1^{(n)}\|Q_0^{(n)})&\ge N_\rma\cdot\sum_{i\in[n]}\Big(\lambda_0\Psi_{i}(n)-\log\big(1+\lambda_0\Psi_{i}(n)\big)\Big)\label{eqB:sum_vector_KLD_lower-1}\\
&\ge N_\rma\cdot n\bigg(\lambda_0\cdot\frac1n\sum_{i=1}^n\Psi_{i}(n)-\log\bigg(1+\lambda_0\frac1n\sum_{i=1}^n\Psi_{i}(n)\bigg)\bigg)\label{eqB:sum_vector_KLD_lower-2}\\
&=N_\rma\cdot n\Big(\lambda_0\Psi_{\rma}(n)-\log\big(1+\lambda_0\Psi_{\rma}(n)\big)\Big)\label{eqB:sum_vector_KLD_lower-3}\\
&> N_\rma\cdot\frac{n}{2\omega}\lambda_0^2\Psi_{\rma}^2(n),\label{eqB:sum_vector_KLD_lower-4}
\end{align}
where \eqref{eqB:sum_vector_KLD_lower-2} follows from the Jensen inequality which implies that $\bbE[f(X)]\ge f(\bbE[X])$ for a convex function $f(\cdot)$ and a random variable $X$, \eqref{eqB:sum_vector_KLD_lower-3} follows from the definition of $\Psi_{\rma}$ in \eqref{eq:def of avg power}, and \eqref{eqB:sum_vector_KLD_lower-4} follows from the Taylor series expansion of $\log(1+x)$ around $x=0$ which implies $\log(1+x)<x-\frac{x^2}{2\omega}$ for any $0<x<\frac{3(\omega-1)}{2\omega}$.

To satisfy the covertness constraint \eqref{eq2:both constraints}, the average power for any $(n,M)_\star$-code satisfies
\begin{align}\label{eq:converse power bound}
\Psi_\rma(n)\leq\sqrt{\frac{2\delta\omega}{n\tr\big(\blambda_0^4\big)}}.
\end{align}
The proof of Lemma \ref{lemma:converse power} is now completed.

\subsection{Proof of Lemma \ref{lemma:R* con non} (Non-Asymptotic Converse Bound for CSIRT)}
\label{appendix:csirt con non}

To obtain the converse bound, we shall assume that CSI is available at both transmitter and receiver.
Recall from \eqref{eq:yb channel} that Bob receives the noisy channel output $\bY_\rmb^n=\bX^n(W)\bH_\rmb+\bZ_\rmb^n$. Using the SVD procedure in Section \ref{section_main_results}, when CSI is available at the receiver, Bob post-processes the observation $\bY_\rmb^n$ by unitary transformation via $\bV_\rmb$ to obtain ${\bY'}_\rmb^n:=\bY_\rmb^n\bV_\rmb\in\calY_\rmb^{n\times N_\rma}$. With a slight abuse of notation, we reuse ${\bY}_\rmb^n$ to represent ${\bY'}_\rmb^n$, and similarly for ${\bZ}_\rmb^n$. 
Recall that $\bLambda_\rmb=\diag\{\sqrt{\Lambda_{\rmb,j}}\}_{j\in[N_\rma]}$ contains singular values of $\bH_\rmb$, and $\{Z^n_{\rmb,j}\}_{j\in[N_\rma]}$ are i.i.d. noise vectors distributed as $\calCN(\mathbf{0},\bI_n)$.
As a result, it follows that the MIMO channel \eqref{eq:yb channel} can be transformed into the set of $N_\rma$ parallel quasi-static sub-channels such that for $j\in[N_\rma]$,
\begin{align}\label{eq:parallel channel}
Y^n_{\rmb,j}=x^n_j\sqrt{\Lambda_{\rmb,j}}+Z^n_{\rmb,j},
\end{align}
via SVD.
We shall analyze from the perspective of sub-channels in \eqref{eq:parallel channel}.
Let $Z_{\rmb,ij}^n$ be the $(i,j)$-th entry of $\bZ_\rmb^n$.
Recall that the codeword is now $\bX^n=(X_1^n,\ldots,X_{N_\rma}^n)\in\calX^{n\times N_\rma}$, and $\calF^{n\times N_\rma}$ is a codeword set defined in \eqref{eq:def fset}. Let $\bX^n\in\calF^{n\times N_\rma}$ to satisfy the power constraint.

Let $P_{\bY_\rmb^n,\bLambda_\rmb|\bX^n}\in\calP(\calY_\rmb^{n\times N_\rma}\times \bbR_+^{N_\rma\times N_\rma}|\calX^{n\times N_\rma})$ be the output distribution induced by the codeword $\bX^n$, and $\tilP_{\bY_\rmb^n,\bLambda_\rmb}\in\calP(\calY_\rmb^{n\times N_\rma}\times \bbR_+^{N_\rma\times N_\rma})$ be any auxiliary distribution. Recall that $\beta_{(\cdot)}(\cdot,\cdot)$ is a performance metric defined in \eqref{eq:def beta}. 
Let
\begin{align}
P_{\bY_\rmb^n,\bLambda_\rmb|\bX^n}=P_{\bLambda_\rmb}P_{\bY_\rmb^n|\bLambda_\rmb,\bX^n},
\end{align}
and take as auxiliary channel $\tilP_{\bY_\rmb^n,\bLambda_\rmb}=P_{\bLambda_\rmb}\tilP_{\bY_\rmb^n|\bLambda_\rmb}$, where $\tilP_{\bY_\rmb^n|\bLambda_\rmb=\blambda_\rmb}=\prod_{j\in[N_\rma]}\tilP_{Y_{\rmb,j}^n|\Lambda_{\rmb,j}=\lambda_{\rmb,j}}$ and
\begin{align}
\label{eq:def tilP CN}
\tilP_{Y_{\rmb,j}^n|\Lambda_{\rmb,j}=\lambda_{\rmb,j}}:=\calCN\Big(\mathbf{0},\big(1+\lambda_{\rmb,j}\Psi(n)\big)\bI_n\Big).
\end{align}
By the meta-converse theorem \cite[Theorem 30]{polyanskiy_channel_2010}, we have that for any $\varepsilon\in(0,1)$ and any $\tilP_{\bY_\rmb^n,\bLambda_\rmb}$, every $(n,M)_\rmrt$-code with the maximal error probability $\varepsilon$ and codewords chosen from $\calF^{n\times N_\rma}$ satisfying
\begin{align}
\label{eq:con tilM leq}
M\leq\frac{1}{\inf_{\bx^n\in\calF^{n\times N_\rma}}\beta_{1-\varepsilon}(P_{\bY_\rmb^n,\bLambda_\rmb|\bX^n=\bx^n},\tilP_{\bY_\rmb^n,\bLambda_\rmb})}.
\end{align}
We next focus on the denominator on the right hand side of \eqref{eq:con tilM leq}.
Define a log-likelihood ratio function
\begin{align}
\label{eq:def r log}
r\big(\bX^n;\bY_\rmb^n,\bLambda_\rmb\big):=\log\frac{\rmd P_{\bY_\rmb^n,\bLambda_\rmb|\bX^n}}{\rmd\tilP_{\bY_\rmb^n,\bLambda_\rmb}}.
\end{align}
Let $\bx_2^n:=(x_{ij})_{n\times N_\rma}$, where $x_{ij}=\sqrt{\Psi(n)}$ for any $i\in [n]$ and $j\in[N_\rma]$. Given $\bX^n=\bx_2^n$, from the Neyman--Pearson Lemma \cite[Appendix B]{polyanskiy_channel_2010}, it follows that
\begin{align}
\label{eq:NP beta tilP r}
\beta_{1-\varepsilon}(P_{\bY_\rmb^n,\bLambda_\rmb|\bX^n=\bx_2^n},\tilP_{\bY_\rmb^n,\bLambda_\rmb})=\Pr_{\tilP_{\bY_\rmb^n,\bLambda_\rmb}} \Big\{r\big(\bx_2^n;\bY_\rmb^n,\bLambda_\rmb\big)\geq n\gamma_n\Big\},
\end{align}
where $\gamma_n$ is chosen so that
\begin{align}
\label{eq:r Y|X epsilon}
\Pr_{P_{\bY_\rmb^n,\bLambda_\rmb|\bX^n=\bx_2^n}} \Big\{r\big(\bx_2^n;\bY_\rmb^n,\bLambda_\rmb\big)\leq n\gamma_n\Big\}=\varepsilon.
\end{align}
Note that the random variable $r(\bx_2^n;\bY_\rmb^n,\bLambda_\rmb)$ has the same distribution as $L_n(\bLambda_\rmb)$ in \eqref{eq:def of Lnrt} under $\tilP_{\bY_\rmb^n,\bLambda_\rmb}$, and it has the same distribution as $S_n(\bLambda_\rmb)$ in \eqref{eq:def of Snrt} under $P_{\bY_\rmb^n,\bLambda_\rmb|\bX^n=\bx_2^n}$, whose proofs are deferred to Appendix \ref{subapx:distributions r}.
With the equal power constraint, each codeword leads to the same $\beta_{1-\varepsilon}$, i.e.,
\begin{align}
\label{eq:inf beta equal}
\inf_{\bx^n\in\calF^{n\times N_\rma}}\beta_{1-\varepsilon}(P_{\bY_\rmb^n,\bLambda_\rmb|\bX^n=\bx^n},\tilP_{\bY_\rmb^n,\bLambda_\rmb})=\beta_{1-\varepsilon}(P_{\bY_\rmb^n,\bLambda_\rmb|\bX^n=\bx_2^n},\tilP_{\bY_\rmb^n,\bLambda_\rmb}).
\end{align}
Based on the above results, taking the logarithm of both sides of \eqref{eq:con tilM leq}, dividing by the blocklength $n$, and combining with \eqref{eq:NP beta tilP r} and \eqref{eq:inf beta equal} lead to \eqref{eq:R*rt non}, and \eqref{eq:r Y|X epsilon} leads to \eqref{eq:gamma S=epsilon}.
The proof of the non-asymptotic converse bound for CSIRT is completed. 

\subsubsection{Distributions of $r(\bx_2^n;\bY_\rmb^n,\mathbf{\Lambda}_\rmb)$}
\label{subapx:distributions r}

Let $r_1$ be the distribution of $r(\bx_2^n;\bY_\rmb^n,\bLambda_\rmb)$ under $\tilP_{\bY_\rmb^n,\bLambda_\rmb}$. Let $\tilx^n$ be any column of $\bx_2^n$ as all columns are identical, and $Z_{\rmb,ij}^n$ be the $(i,j)$-th entry of $\bZ_\rmb^n$. Recall that $P_{\bY_\rmb^n,\bLambda_\rmb|\bX^n}=P_{\bLambda_\rmb}P_{\bY_\rmb^n|\bLambda_\rmb,\bX^n}$ and $\tilP_{\bY_\rmb^n,\bLambda_\rmb}=P_{\bLambda_\rmb}\tilP_{\bY_\rmb^n|\bLambda_\rmb}$, from the definition of the log-likelihood ratio function in \eqref{eq:def r log}, it follows that
\begin{align}
r_1
&=\log\frac{\rmd P_{\bY_\rmb^n|\bLambda_\rmb,\bX^n=\bx_2^n}}{\rmd\tilP_{\bY_\rmb^n|\bLambda_\rmb}}\big(\bY_\rmb^n,\bLambda_\rmb\big)\label{eq:calculate r1-1}\\*
&=\sum_{j\in[N_\rma]}\log\frac{\rmd P_{Y_{\rmb,j}^n|\Lambda_{\rmb,j},X_j^n=\tilx^n}}{\rmd \tilP_{Y_{\rmb,j}^n|\Lambda_{\rmb,j}}}\big(Y_{\rmb,j}^n,\Lambda_{\rmb,j}\big)\label{eq:calculate r1-2}\\*
&=\sum_{j\in[N_\rma]}\bigg(n\log\big(1+\Lambda_{\rmb,j}\Psi(n)\big)-\frac{\Lambda_{\rmb,j}\Psi(n)}{1+\Lambda_{\rmb,j}\Psi(n)}\big|Y_{\rmb,j}^n\big|^2+2\sqrt{\Lambda_{\rmb,j}\Psi(n)}\Re\big(Y_{\rmb,j}^n\big)-\Lambda_{\rmb,j}\Psi(n)\bigg)\label{eq:calculate r1-4}\\*
&=\sum_{j\in[N_\rma]}\bigg(n\log\big(1+\Lambda_{\rmb,j}\Psi(n)\big)-\Lambda_{\rmb,j}\Psi(n)\big|Z_{\rmb,j}^n\big|^2+2\sqrt{\Lambda_{\rmb,j}\Psi(n)}\sqrt{1+\Lambda_{\rmb,j}\Psi(n)}\Re\big(Z_{\rmb,j}^n\big)-\Lambda_{\rmb,j}\Psi(n)\bigg)\label{eq:calculate r1-5}\\*
&=\sum_{i\in[n]}\sum_{j\in[N_\rma]}\bigg(\log\big(1+\Lambda_{\rmb,j}\Psi(n)\big)-\Lambda_{\rmb,j}\Psi(n)\big|Z_{\rmb,ij}\big|^2+2\sqrt{\Lambda_{\rmb,j}\Psi(n)}\sqrt{1+\Lambda_{\rmb,j}\Psi(n)}\Re\big(Z_{\rmb,ij}\big)-\Lambda_{\rmb,j}\Psi(n)\bigg)\label{eq:calculate r1-6}\\*
&=\sum_{i\in[n]}\sum_{j\in[N_\rma]}\bigg(\log\big(1+\Lambda_{\rmb,j}\Psi(n)\big)+1-\Big|\sqrt{\Lambda_{\rmb,j} \Psi(n)}Z_{\rmb,ij}-\sqrt{1+\Lambda_{\rmb,j} \Psi(n)}\Big|^2\bigg),\label{eq:calculate r1-7}
\end{align}
where \eqref{eq:calculate r1-2} follows from the fact that $P_{\bY_\rmb^n|\bLambda_\rmb,\bX^n=\bx_2^n}=\prod_{j\in[N_\rma]}P_{Y_{\rmb,j}^n|\Lambda_{\rmb,j},X_j^n=\tilx^n}$ and $\tilP_{\bY_\rmb^n|\bLambda_\rmb}=\prod_{j\in[N_\rma]}\tilP_{Y_{\rmb,j}^n|\Lambda_{\rmb,j}}$, \eqref{eq:calculate r1-4} follows from the fact that $P_{Y_{\rmb,j}^n|\Lambda_{\rmb,j},X_j^n=\tilx^n}=\calCN(\sqrt{\Lambda_{\rmb,j}}\tilx^n,\bI_n)$, $\tilP_{Y_{\rmb,j}^n|\Lambda_{\rmb,j}}=\calCN(\mathbf{0},(1+\Lambda_{\rmb,j}\Psi(n))\bI_n)$ and $\tilx^n=(\sqrt{\Psi(n)},\ldots,\sqrt{\Psi(n)})$, \eqref{eq:calculate r1-5} follows from the fact that $Y_{\rmb,j}^n=\sqrt{1+\Lambda_{\rmb,j}\Psi(n)}Z_{\rmb,j}^n$ under $\tilP_{\bY_\rmb^n,\bLambda_\rmb}$ and $Z_{\rmb,j}^n\sim\calCN(\mathbf{0},\bI_n)$, \eqref{eq:calculate r1-6} follows since the entries of $\bZ_\rmb^n$ are generated i.i.d. from $\calCN(0,1)$, and \eqref{eq:calculate r1-7} follows from the perfect square trinomial. Note that the right hand side of \eqref{eq:calculate r1-7} is indeed the definition of $L_n(\bLambda_\rmb)$ in \eqref{eq:def of Lnrt}.

Under $P_{\bY_\rmb^n,\bLambda_\rmb|\bX^n=\bx_2^n}$, let $r_2$ be the distribution of $r(\bx_2^n;\bY_\rmb^n,\bLambda_\rmb)$, and now $Y_{\rmb,j}^n=\sqrt{\Lambda_{\rmb,j}}\tilx^n+Z_{\rmb,j}^n$. Recall that the entries of $\bZ_\rmb^n$ are generated i.i.d. from $\calCN(0,1)$ and $\tilx^n=(\sqrt{\Psi(n)},\ldots,\sqrt{\Psi(n)})$. By calculating, it follows from \eqref{eq:calculate r1-4} that
\begin{align}
r_2
&=\sum_{i\in[n]}\sum_{j\in[N_\rma]}\bigg(\log\big(1+\Lambda_{\rmb,j}\Psi(n)\big)-\frac{\Lambda_{\rmb,j}\Psi(n)}{1+\Lambda_{\rmb,j}\Psi(n)}\Big|\sqrt{\Lambda_{\rmb,j}\Psi(n)}+Z_{\rmb,ij}\Big|^2+2\sqrt{\Lambda_{\rmb,j}\Psi(n)}\Re\big(Z_{\rmb,ij}\big)+\Lambda_{\rmb,j}\Psi(n)\bigg)\label{eq:calculate r2-1}\\*
&=\sum_{i\in[n]}\sum_{j\in[N_\rma]}\bigg(\log\big(1+\Lambda_{\rmb,j}\Psi(n)\big)-\frac{\Lambda_{\rmb,j}\Psi(n)\big|Z_{\rmb,ij}\big|^2-2\sqrt{\Lambda_{\rmb,j}\Psi(n)}\Re\big(Z_{\rmb,ij}\big)-\Lambda_{\rmb,j}\Psi(n)}{1+\Lambda_{\rmb,j}\Psi(n)}\bigg)\label{eq:calculate r2-2}\\*
&=\sum_{i\in[n]}\sum_{j\in[N_\rma]}\bigg(\log\big(1+\Lambda_{\rmb,j}\Psi(n)\big)+1-\frac{\big|\sqrt{\Lambda_{\rmb,j} \Psi(n)}Z_{\rmb,ij}-1\big|^2}{1+\Lambda_{\rmb,j} \Psi(n)}\bigg).\label{eq:calculate r2-3}
\end{align}
Note that the right hand side of \eqref{eq:calculate r2-3} is indeed the definition of $S_n(\bLambda_\rmb)$ in \eqref{eq:def of Snrt}.

\subsection{Proof of Lemma \ref{lemma:csirt con reliability} (Asymptotic Converse Bound for CSIRT)}
\label{appendix:csirt con asy}

Recall that $P_{\bY_\rmb^n,\bLambda_\rmb|\bX^n}$ is the output distribution induced by the codeword $\bX^n$, $\tilP_{\bY_\rmb^n,\bLambda_\rmb}$ is any auxiliary distribution, and $\bx_2^n=(x_{ij})_{n\times N_\rma}$, where $x_{ij}=\sqrt{\Psi(n)}$ for any $i\in [n]$ and $j\in[N_\rma]$.
From Appendix \ref{appendix:csirt con non}, we obtain
\begin{align}
\Pr\big\{L_n(\bLambda_\rmb)\geq n\gamma_n\big\}
&=\beta_{1-\varepsilon}\big(P_{\bY_\rmb^n,\bLambda_\rmb|\bX^n=\bx_2^n},\tilP_{\bY_\rmb^n,\bLambda_\rmb}\big),\label{eq:L = beta}\\*
\Pr\big\{S_n(\bLambda_\rmb)\leq n\gamma_n\big\}
&=1-\Pr_{P_{\bY_\rmb^n,\bLambda_\rmb|\bX^n=\bx_2^n}}\Bigg\{\log\frac{\rmd P_{\bY_\rmb^n,\bLambda_\rmb|\bX^n=\bx_2^n}}{\rmd \tilP_{\bY_\rmb^n,\bLambda_\rmb}}\geq n\gamma_n\Bigg\},\label{eq:S = 1-beta}
\end{align}
where \eqref{eq:L = beta} is obviously follows from \eqref{eq:NP beta tilP r}, and \eqref{eq:S = 1-beta} follows from \eqref{eq:def r log} and \eqref{eq:r Y|X epsilon}.
We first relax the upper bound of \eqref{eq:R*rt non} by lower-bounding its denominator, it follows from \eqref{eq:L = beta} and \cite[Eq. (106)]{polyanskiy_channel_2010} that
\begin{align}
\Pr\big\{L_n(\bLambda_\rmb)\geq n\gamma_n\big\}
&\geq \frac{1}{e^{n\gamma_n}}\Bigg(1-\Pr_{P_{\bY_\rmb^n,\bLambda_\rmb|\bX^n=\bx_2^n}}\Bigg\{\log\frac{\rmd P_{\bY_\rmb^n,\bLambda_\rmb|\bX^n=\bx_2^n}}{\rmd \tilP_{\bY_\rmb^n,\bLambda_\rmb}}\geq n\gamma_n\Bigg\}-\varepsilon\Bigg) \label{eq:beta geq S}\\*
&= e^{-n\gamma_n}\Big(\Pr\big\{S_n(\bLambda_\rmb)\leq n\gamma_n\big\}-\varepsilon\Big),\label{eq:beta geq S-2}
\end{align}
where \eqref{eq:beta geq S-2} follows from \eqref{eq:S = 1-beta}.
Substituting \eqref{eq:beta geq S-2} into \eqref{eq:R*rt non} leads to
\begin{align}
\label{eq:R* original S}
R_{\rmrt,\equ}^*(n,\varepsilon)\leq \gamma_n-\frac{1}{n}\log\Big(\Pr\big\{S_n(\bLambda_\rmb)\leq n\gamma_n\big\}-\varepsilon\Big).
\end{align}
On the right hand side of \eqref{eq:R* original S}, we shall focus on
\begin{align}
\label{eq:pr S leq ngamma}
\Pr\big\{S_n(\bLambda_\rmb)\leq n\gamma_n\big\}.
\end{align} 
From the definition in \eqref{eq:def of Snrt}, the random variable $S_n(\bLambda_\rmb)$ is the sum of $n$ i.i.d. random variables with mean and variance:
\begin{align}
\mu(\bLambda_\rmb)&:=\sum_{j\in[N_\rma]}\log\big(1+\Lambda_{\rmb,j}\Psi(n)\big),\label{eq:S sum mean}\\*
\sigma^2(\bLambda_\rmb)&:=N_\rma-\sum_{j\in[N_\rma]}\frac{1}{\big(1+\Lambda_{\rmb,j}\Psi(n)\big)^2},\label{eq:S sum variance}
\end{align}
respectively. Let $\blambda_\rmb\in\bbR^{N_\rma\times N_\rma}$. Given $\bLambda_\rmb=\blambda_\rmb$, define
\begin{align}
\label{eq:S sum normalized}
u(\blambda_\rmb):=\frac{\gamma_n-\mu(\blambda_\rmb)}{\sigma(\blambda_\rmb)}.
\end{align}
Recall that $Q(\cdot)$ is the complementary Gaussian cumulative distribution function.
Using the Cram\'er--Esseen Theorem \cite[Theorem 15]{yang2014quasi}, we shall show in Appendix \ref{appendix:Proof of CLT S} that
\begin{align}
\label{eq:CLT S}
\Pr\big\{S_n(\bLambda_\rmb)\leq n\gamma_n\big|\bLambda_\rmb=\blambda_\rmb\big\}\geq Q(-\sqrt{n}u(\blambda_\rmb))+O\Big(\frac{1}{n}\Big).
\end{align}
For any $R\in\bbR_+$, it follows from \eqref{eq:con C final def} that the covert outage probability is given by
\begin{align}
\label{eq:def of Fout_con}
F_\rmout(R):=\Pr\Big\{ \log\det \big( \bI_{N_\rmb}+ \Psi(n)\bH_\rmb^\rmH\bH_\rmb \big)<R \Big\}.
\end{align}
Furthermore, we average \eqref{eq:CLT S} over $\bLambda_\rmb$ and use Lemma \ref{lemma:B sqrtn A ff'} to show in Appendix \ref{appendix:Proof of eliminate Q in S} that
\begin{align}
\label{eq:eliminate Q in S}
\Pr\big\{S_n(\bLambda_\rmb)\leq n\gamma_n\big\}\geq F_\rmout(\gamma_n)+O\Big(\frac{1}{\sqrt{n}}\Big).
\end{align}
Note that under the covert constraint, the $O(\frac{1}{n})$ term in \eqref{eq:CLT S} is enlarged to $O(\frac{1}{\sqrt{n}})$ in \eqref{eq:eliminate Q in S}, but it will be compensated by the covert outage probability in the following analysis.
We choose $\gamma_n=C_{\varepsilon}(\Psi(n))+O(\frac{1}{n})$. Substituting \eqref{eq:eliminate Q in S} into \eqref{eq:R* original S} leads to
\begin{align}
R_{\rmrt,\equ}^*(n,\varepsilon)
&\leq \gamma_n-\frac{1}{n}\log\bigg(F_\rmout(\gamma_n)-\varepsilon+O\Big(\frac{1}{\sqrt{n}}\Big)\bigg)\label{eq:con final-1}\\*
&=C_{\varepsilon}(\Psi(n))+O\Big(\frac{1}{n}\Big)-\frac{1}{n}\log\bigg(O\Big(\frac{1}{\sqrt{n}}\Big)+O\Big(\frac{1}{n^2}\Big)\bigg)\label{eq:con final-2}\\*
&=C_{\varepsilon}(\Psi(n))+O\Big(\frac{\log n}{n}\Big),\label{eq:con final-3}
\end{align}
where \eqref{eq:con final-2} follows from the Taylor Series of expansion of $F_\rmout(\gamma_n)$ around the covert outage rate $C_{\varepsilon}(\Psi(n))$ by \eqref{eq:Ft taylor}, and the fact that $\gamma_n=C_{\varepsilon}(\Psi(n))+O(\frac{1}{n})$, $F_\rmout(C_{\varepsilon}(\Psi(n)))=\varepsilon$, $O((\gamma_n-C_{\varepsilon}(\Psi(n)))^2)=O(n^{-2})$, and $F'_\rmout(C_{\varepsilon}(\Psi(n)))=O(\sqrt{n})$ by \eqref{eq:F' final-3}. Recall that $R_{\rmrt,\max}^*$ defined in \eqref{eq:def M max} is the maximal achievable rate for quasi-static MIMO fading channels with the maximal power constraint under CSIRT case. It follows from \cite[Lemma 39]{polyanskiy_channel_2010} that
\begin{align}
R_{\rmrt,\max}^*(n,\varepsilon)
&\leq(1+\frac{1}{n})R_{\rmrt,\equ}^*(n+1,\varepsilon)\\*
&\leq C_{\varepsilon}(\Psi(n))+O\Big(\frac{\log n}{n}\Big),\label{eq:Re to Rm}
\end{align}
where \eqref{eq:Re to Rm} follows from the fact that $\Psi(n)=O(\frac{1}{\sqrt{n}})$ and the remaining terms can be collected into $O(\frac{\log n}{n})$.
The proof of Lemma \ref{lemma:csirt con reliability} is now completed.

\subsection{Proof of \eqref{eq:CLT S}}
\label{appendix:Proof of CLT S}
We use the Cram\'er--Esseen Theorem \cite[Theorem 15]{yang2014quasi} to obtain the second-order asymptotics, and here we restate it as a lemma. Let $\{X_i\}_{i\in[n]}$ be i.i.d. real random variables having zero mean and unit variance, $(t,\xi_1)\in\bbR^2$. Define two functions
\begin{align}
\varphi(t)&:=\bbE\big[e^{\imath tX_1}\big],\\*
\tilF_n(\xi_1)&:=\Pr\Bigg\{\frac{1}{\sqrt{n}}\sum_{i\in[n]}X_i\leq\xi_1\Bigg\}.
\end{align}
Let $k_0<1$, $k_1:=\frac{\bbE[X_1^3]}{6\sqrt{2\pi}}$, and $k_2\in\bbR_+$ be a constant independent of $\{X_i\}_{i\in[n]}$ and $\xi_1$. Define $\xi_2:=\frac{1}{12\bbE[|X_1|^3]}$.
\begin{lemma}
\label{lemma:cramer-esseen}
For any $\xi_1$ and $n$, and some $k_0<1$, if $\bbE[|X_1|^4]<\infty$ and $\sup_{|t|\geq\xi_2}|\varphi(t)|\leq k_0$, then
\begin{align}
\bigg|\tilF_n(\xi_1)-Q(-\xi_1)-k_1(1-\xi_1^2)e^{-\frac{\xi_1^2}{2}}\frac{1}{\sqrt{n}}\bigg|\leq k_2\bigg(\frac{\bbE\big[|X_1|^4\big]}{n}+n^6\Big(k_0+\frac{1}{2n}\Big)^n\bigg).
\end{align}
\end{lemma}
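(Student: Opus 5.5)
The plan is to follow the classical route to the one-term Edgeworth expansion: Esseen's smoothing inequality, with the frequency axis split at the Cramér threshold $\xi_2$. Since $Q(-\xi_1)=\Phi(\xi_1)$, define the target function
\begin{align}
G_n(\xi_1):=Q(-\xi_1)+k_1(1-\xi_1^2)e^{-\frac{\xi_1^2}{2}}\frac{1}{\sqrt{n}}.
\end{align}
Its Fourier--Stieltjes transform is $g_n(t)=e^{-t^2/2}\big(1+\frac{(\imath t)^3\bbE[X_1^3]}{6\sqrt{n}}\big)$, and $G_n(\pm\infty)$ equal $1$ and $0$. The derivative $G_n'$ is bounded by an absolute constant times $1+|\bbE[X_1^3]|/\sqrt{n}$. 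Esseen's smoothing lemma then gives, for every $T>0$,
\begin{align}
\sup_{\xi_1}\big|\tilF_n(\xi_1)-G_n(\xi_1)\big|\le\frac{1}{\pi}\int_{-T}^{T}\frac{|\varphi(t/\sqrt{n})^n-g_n(t)|}{|t|}\rmd t+\frac{c\,\sup|G_n'|}{T}.
\end{align}
I will take $T=n^{3/2}$, or $T=n$ if that suffices. The boundary term is then $O(1/n)$, and since $\bbE[X_1^4]\ge(\bbE[X_1^2])^2=1$ it is absorbed into $k_2\bbE[|X_1|^4]/n$; the factor $|\bbE[X_1^3]|\le\bbE[|X_1|^4]^{3/4}$ is handled the same way.

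I then split the integral into a low-frequency region $|t|\le\xi_2\sqrt{n}$ and a high-frequency region $\xi_2\sqrt{n}<|t|\le T$.

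\textbf{Low frequencies.} Expand $\log\varphi(s)=-\frac{s^2}{2}+\frac{(\imath s)^3\bbE[X_1^3]}{6}+R(s)$ with $|R(s)|\le c\,\bbE[|X_1|^4]s^4$. This is valid for $|s|\le\xi_2=\frac{1}{12\bbE[|X_1|^3]}$, because in that range $|\varphi(s)-1|\le s^2/2\le 1/2$ and the logarithm is well defined. Substituting $s=t/\sqrt{n}$ and using $|e^{a}-1-b|\le(|a-b|+|b|^2/2)e^{\max(|a|,|b|)}$, I aim for
\begin{align}
\big|\varphi(t/\sqrt{n})^n-g_n(t)\big|\le c\,\frac{\bbE[|X_1|^4]}{n}\big(t^4+t^6\big)e^{-t^2/4}.
\end{align}
The choice of $\xi_2$ is exactly what guarantees that the cubic term satisfies $|t|^3\bbE[|X_1|^3]/(6\sqrt{n})\le t^2/4$ there, which yields the $e^{-t^2/4}$ envelope. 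Dividing by $|t|$ and integrating gives $O(\bbE[|X_1|^4]/n)$.

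\textbf{High frequencies.} Here the hypothesis $\sup_{|s|\ge\xi_2}|\varphi(s)|\le k_0$ gives $|\varphi(t/\sqrt{n})|^n\le k_0^n$. Integrating $k_0^n/|t|$ over $\xi_2\sqrt{n}<|t|\le T$ contributes $2k_0^n\log(T/(\xi_2\sqrt{n}))$. Using $\xi_2\ge\frac{1}{12}\bbE[|X_1|^4]^{-3/4}$ this is at most polynomial in $n$ times $k_0^n$, up to a $\log\bbE[|X_1|^4]$ term that I will handle the same way. It is therefore dominated by $n^6(k_0+\frac{1}{2n})^n$, and the slack between $k_0$ and $k_0+\frac{1}{2n}$ comfortably absorbs such factors. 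The contribution of $g_n$ on this region is at most $c(1+|t|^3/\sqrt{n})e^{-t^2/2}$, whose integral is $O(e^{-\xi_2^2n/4})$.

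\textbf{Main obstacle.} The step I expect to be hardest is making all constants uniform, so that $k_2$ does not depend on the law of $X_1$. Two places need care. First, the Gaussian tail $e^{-\xi_2^2n/4}$ must be bounded by $c\,\bbE[|X_1|^4]/n$. I will use $xe^{-x}\le1$ with $x=\xi_2^2n/4$, together with Lyapunov's inequality $\xi_2^{-2}\le144\,\bbE[|X_1|^4]^{3/2}$; if the exponent $3/2$ does not close directly, I will fall back on the trivial bound $|\tilF_n-G_n|\le 1+o(1)$ when $\bbE[|X_1|^4]\ge n$. Second, the cumulant remainder must be controlled by $\bbE[|X_1|^4]$ alone. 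If the logarithmic expansion proves awkward, I will instead expand $\varphi(s)$ itself to third order with a fourth-moment remainder and use the telescoping bound $|a^n-b^n|\le n|a-b|\max(|a|,|b|)^{n-1}$.
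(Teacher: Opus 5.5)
The paper does not prove this lemma: it imports it from \cite[Theorem 15]{yang2014quasi} and \cite[Theorem VI.1]{petrov2012sums}. Your outline (Esseen smoothing, split at $|t|=\xi_2\sqrt{n}$, cumulant expansion below the split, Cram\'er's condition above it) is the classical argument behind that citation. However, two steps do not close as you wrote them.

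\textbf{The Gaussian envelope.} On $|t|\le\xi_2\sqrt{n}$, the envelope does not follow from controlling the cubic term alone. The exponent is $a=b+nR(t/\sqrt{n})$, and your remainder bound $|R(s)|\le c\,\bbE[X_1^4]s^4$ only gives $|nR|\le c\,t^2\,\bbE[X_1^4]/(144(\bbE|X_1|^3)^2)$ at the edge of that range. The ratio $\bbE[X_1^4]/(\bbE|X_1|^3)^2$ is unbounded over unit-variance laws; for instance, put mass $M^{-5/2}$ at $\pm M$ and the rest near $\pm 1$. So $e^{\max(|a|,|b|)}\le e^{t^2/4}$ is not established uniformly. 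Your telescoping fallback has the same hole, since it needs $|\varphi(t/\sqrt{n})|^{n-1}\le e^{-ct^2}$.

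The missing ingredient is a \emph{third}-moment bound on this range: $|\varphi(s)|\le 1-\frac{s^2}{2}+\frac{\bbE|X_1|^3|s|^3}{6}\le e^{-s^2/3}$ for $|s|\le\xi_2$. In your logarithmic route, the equivalent is $|R(s)|\le c\,\bbE|X_1|^3|s|^3$ there, which yields $|nR|\le c\,t^2/12$. Reserve the fourth-moment bound on $R$ for the difference term $|a-b|$ only.

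\textbf{Uniform constants.} Lyapunov's inequality is too weak here. What you need is $(\bbE|X_1|^3)^2\le\bbE[X_1^2]\,\bbE[X_1^4]=\bbE[X_1^4]$, i.e.\ Cauchy--Schwarz with unit variance. It is used in three places:
\begin{enumerate}
\item It turns $|b|^2/2=t^6(\bbE[X_1^3])^2/(72n)$ into the claimed $\bbE[X_1^4]\,t^6/n$.
\item It gives $1/(\xi_2^2n)\le 144\,\bbE[X_1^4]/n$ for the Gaussian tail. With only $\xi_2^{-2}\le 144\,\bbE[X_1^4]^{3/2}$, the tail $e^{-\xi_2^2 n/4}$ need not be small when $n^{2/3}\lesssim\bbE[X_1^4]<n$. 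Your fallback, which applies only for $\bbE[X_1^4]\ge n$, does not cover that range.
\item It makes the fallback itself work. The trivial bound is $1+c\,\bbE|X_1|^3/\sqrt{n}$, not $1+o(1)$, and it is $\le c'\,\bbE[X_1^4]/n$ on $\{\bbE[X_1^4]\ge n\}$ only via Cauchy--Schwarz.
\end{enumerate}

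With these two repairs and a case split at $\bbE[X_1^4]=n$, the argument closes with an absolute $k_2$. The boundary term in the smoothing inequality is $O(1/n)$. On $\{\bbE[X_1^4]<n\}$ one has $\log(T/(\xi_2\sqrt{n}))=O(\log n)$, so your high-frequency contribution $2k_0^n\log(T/(\xi_2\sqrt{n}))$ is absorbed by $n^6(k_0+\frac{1}{2n})^n$.
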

The proof of Lemma \ref{lemma:cramer-esseen} can be found in \cite[Theorem 15]{yang2014quasi} and \cite[Theorem VI.1]{petrov2012sums}.

Based on the definitions of $S_n(\bLambda_\rmb)$ in \eqref{eq:def of Snrt} and $\sigma^2(\bLambda_\rmb)$ in \eqref{eq:S sum variance}, we construct the form of $\{X_i\}_{i\in[n]}$ in Lemma \ref{lemma:cramer-esseen}. Recall that $\{Z_{\rmb,ij}\}_{i\in[n],j\in[N_\rma]}$ are i.i.d. $\calCN(0,1)$ distributed. For $i\in[n]$, define a random variable
\begin{align}
\label{eq:def Di}
D_i(\bLambda_\rmb):=\frac{1}{\sigma(\bLambda_\rmb)}\sum_{j\in[N_\rma]}\bigg(1-\frac{\big|\sqrt{\Lambda_{\rmb,j} \Psi(n)}Z_{\rmb,ij}-1\big|^2}{1+\Lambda_{\rmb,j} \Psi(n)}\bigg).
\end{align}
Note that $\{D_i(\bLambda_\rmb)\}_{i\in[n]}$ have zero mean and unit variance, and are conditionally i.i.d. given $\bLambda_\rmb$. Recall that $u(\bLambda_\rmb)$ is defined in \eqref{eq:S sum normalized}, and using these definitions we can rewrite \eqref{eq:pr S leq ngamma} as
\begin{align}
\label{eq:S=Di}
\Pr\big\{S_n(\bLambda_\rmb)\leq n\gamma_n\big\}=\Pr\Bigg\{\frac{1}{\sqrt{n}}\sum_{i\in[n]}D_i(\bLambda_\rmb)\leq\sqrt{n}u(\bLambda_\rmb)\Bigg\}.
\end{align}
For $\{D_i(\bLambda_\rmb)\}_{i\in[n]}$, define
\begin{align}
\varphi_{D_i}(t)&:=\bbE\big[e^{\imath tD_i(\bLambda_\rmb)}\big|\bLambda_\rmb=\blambda_\rmb\big],\label{eq:def phi Di}\\*
\tilde{\xi}_2&:=\frac{1}{12\bbE\big[|D_i(\bLambda_\rmb)|^3\big|\bLambda_\rmb=\blambda_\rmb\big]}.\label{eq:def til xi 2}
\end{align}
We shall verify at the end of this appendix that $\{D_i(\bLambda_\rmb)\}_{i\in[n]}$ defined in \eqref{eq:def Di} indeed satisfy the conditions of Lemma \ref{lemma:cramer-esseen}, i.e., for $i\in[n]$, \emph{1)} $\bbE[|D_i(\bLambda_\rmb)|^4|\bLambda_\rmb=\blambda_\rmb]=O(1)$ showed in Appendix \ref{subapx:clt condition1}, and \emph{2)} there exists a $\tilk_0<1$ such that $\sup_{|t|\geq\tilde{\xi}_2}|\varphi_{D_i}(t)|\leq\tilk_0$ showed in Appendix \ref{subapx:clt condition2}.
For any $\blambda_\rmb\in\bbR^{N_\rma\times N_\rma}$ and $n\in\bbN$, and a constant $\tilk_2\in\bbR_+$ which is independent of $\{D_i(\bLambda_\rmb)\}_{i\in[n]}$ and $u(\bLambda_\rmb)$, it follows from Lemma \ref{lemma:cramer-esseen} that
\begin{align}
\label{eq:clt 1result}
&\Pr\Bigg\{\frac{1}{\sqrt{n}}\sum_{i\in[n]}D_i(\bLambda_\rmb)\leq\sqrt{n}u(\bLambda_\rmb)\Bigg|\bLambda_\rmb=\blambda_\rmb\Bigg\}-Q\big(-\sqrt{n}u(\blambda_\rmb)\big)\nn\\*
&\quad\geq \frac{\bbE\big[D_i(\bLambda_\rmb)^3\big|\bLambda_\rmb=\blambda_\rmb\big]}{6\sqrt{2\pi}\sqrt{n}}\big(1-nu(\blambda_\rmb)^2\big)e^{-\frac{nu(\blambda_\rmb)^2}{2}}-\frac{\tilk_2}{n}\bbE\big[|D_i(\bLambda_\rmb)|^4\big|\bLambda_\rmb=\blambda_\rmb\big]-\tilk_2n^6\Big(\tilk_0+\frac{1}{2n}\Big)^n.
\end{align}
We shall show in Appendix \ref{subapx:clt rhs order 1/n} that the right hand side of \eqref{eq:clt 1result} is of order $O(\frac{1}{n})$, then we obtain
\begin{align}
\label{eq:clt 2result}
\Pr\Bigg\{\frac{1}{\sqrt{n}}\sum_{i\in[n]}D_i(\bLambda_\rmb)\leq\sqrt{n}u(\bLambda_\rmb)\Bigg|\bLambda_\rmb=\blambda_\rmb\Bigg\}
\geq Q\big(-\sqrt{n}u(\blambda_\rmb)\big)+O\Big(\frac{1}{n}\Big).
\end{align}
Combining \eqref{eq:S=Di} and \eqref{eq:clt 2result} leads to \eqref{eq:CLT S}.
The proof is completed.

\subsubsection{Proof of Condition 1 of Lemma \ref{lemma:cramer-esseen}}
\label{subapx:clt condition1}
Recall that $D_i(\bLambda_\rmb)$ is defined in \eqref{eq:def Di}, and we shall calculate $\bbE[|D_i(\bLambda_\rmb)|^4|\bLambda_\rmb=\blambda_\rmb]$. Recall that $\sigma^2(\bLambda_\rmb)$ is defined in \eqref{eq:S sum variance}, and $\bLambda_\rmb=\diag\big(\sqrt{\Lambda_{\rmb,1}},\ldots,\sqrt{\Lambda_{\rmb,N_\rma}}\big)$. Given $\Lambda_{\rmb,j}=\lambda_{\rmb,j}$ for $j\in[N_\rma]$, let
\begin{align}
\label{eq:def tilD_ij}
\tilD_{ij}:=1-\frac{\big|\sqrt{\lambda_{\rmb,j} \Psi(n)}Z_{\rmb,ij}-1\big|^2}{1+\lambda_{\rmb,j} \Psi(n)},
\end{align}
then we have
\begin{align}
\bbE\big[|D_i(\bLambda_\rmb)|^4\big|\bLambda_\rmb=\blambda_\rmb\big]
&=\frac{1}{\sigma^4(\blambda_\rmb)}\cdot\bbE\Bigg[\bigg|\sum_{j\in[N_\rma]}\tilD_{ij}\bigg|^4\Bigg]\label{eq:D to tilD-1}\\*
&=\frac{1}{\sigma^4(\blambda_\rmb)}\cdot\Bigg(\sum_{j\in[N_\rma]}\bbE\Big[\big|\tilD_{ij}\big|^4\Big]+6\sum_{j\in[N_\rma],j'\in(j,N_\rma]}\bbE\Big[\big|\tilD_{ij}\big|^2\Big]\Big[\big|\tilD_{ij'}\big|^2\Big]\Bigg),\label{eq:D to tilD-2}
\end{align}
where \eqref{eq:D to tilD-2} follows from the fact that $\bbE[|\tilD_{ij}|]=0$.
By calculating, the quartic term in \eqref{eq:D to tilD-2} follows that
\begin{align}
\big|\tilD_{ij}\big|^4
&=\Big(\frac{1}{1+\lambda_{\rmb,j}\Psi(n)}\Big)^4 \cdot \Big|\lambda_{\rmb,j}\Psi(n)\big|Z_{\rmb,ij}\big|^2-2\sqrt{\lambda_{\rmb,j}\Psi(n)}\Re\big(Z_{\rmb,ij}\big)-\lambda_{\rmb,j}\Psi(n)\Big|^4\label{eq:calculate 4th tilD-1}\\*
&\leq \Big(\frac{1}{1+\lambda_{\rmb,j}\Psi(n)}\Big)^4 \cdot \Big(\lambda_{\rmb,j}\Psi(n)\big|Z_{\rmb,ij}\big|^2+2\sqrt{\lambda_{\rmb,j}\Psi(n)}\big|\Re\big(Z_{\rmb,ij}\big)\big|+\lambda_{\rmb,j}\Psi(n)\Big)^4\label{eq:calculate 4th tilD-2}\\*
&=\Big(\frac{1}{1+\lambda_{\rmb,j}\Psi(n)}\Big)^4 \cdot \Big( \big(\lambda_{\rmb,j}\Psi(n)\big)^4 + \big(\lambda_{\rmb,j}\Psi(n)\big)^4 \big|Z_{\rmb,ij}\big|^8 + 16\big(\lambda_{\rmb,j}\Psi(n)\big)^2 \big|\Re\big(Z_{\rmb,ij}\big)\big|^4 \nn\\*
&\quad +8\big(\lambda_{\rmb,j}\Psi(n)\big)^\frac{7}{2}\big|\Re\big(Z_{\rmb,ij}\big)\big| + 32\big(\lambda_{\rmb,j}\Psi(n)\big)^\frac{5}{2}\big|\Re\big(Z_{\rmb,ij}\big)\big|^3 + 48\big(\lambda_{\rmb,j}\Psi(n)\big)^3 \big|Z_{\rmb,ij}\big|^2 \big|\Re\big(Z_{\rmb,ij}\big)\big|^2 \Big) \nn\\*
&\quad +32\big(\lambda_{\rmb,j}\Psi(n)\big)^\frac{5}{2} \big|Z_{\rmb,ij}\big|^2 \big|\Re\big(Z_{\rmb,ij}\big)\big|^3 + 4\big(\lambda_{\rmb,j}\Psi(n)\big)^4 \big|Z_{\rmb,ij}\big|^6 + 8\big(\lambda_{\rmb,j}\Psi(n)\big)^\frac{7}{2} \big|Z_{\rmb,ij}\big|^6 \big|\Re\big(Z_{\rmb,ij}\big)\big| \nn\\*
&\quad +6\big(\lambda_{\rmb,j}\Psi(n)\big)^4 \big|Z_{\rmb,ij}\big|^4 + 24\big(\lambda_{\rmb,j}\Psi(n)\big)^3 \big|\Re\big(Z_{\rmb,ij}\big)\big|^2 + 24\big(\lambda_{\rmb,j}\Psi(n)\big)^3 \big|Z_{\rmb,ij}\big|^4 \big|\Re\big(Z_{\rmb,ij}\big)\big|^2 \nn\\*
&\quad + 4\big(\lambda_{\rmb,j}\Psi(n)\big)^4 \big|Z_{\rmb,ij}\big|^2 + 24\big(\lambda_{\rmb,j}\Psi(n)\big)^\frac{7}{2} \big|Z_{\rmb,ij}\big|^4 \big|\Re\big(Z_{\rmb,ij}\big)\big| + 24\big(\lambda_{\rmb,j}\Psi(n)\big)^\frac{7}{2} \big|Z_{\rmb,ij}\big|^2 \big|\Re\big(Z_{\rmb,ij}\big)\big|,\label{eq:calculate 4th tilD-3}
\end{align}
where \eqref{eq:calculate 4th tilD-2} follows from the triangle inequality and ensures that each term is nonnegative, and \eqref{eq:calculate 4th tilD-3} follows by expanding the quartic term.
Define the following nonnegative constants:
\begin{align}
C_1&:=\bbE\big[\big|Z_{\rmb,ij}\big|^8\big] + 4\bbE\big[\big|Z_{\rmb,ij}\big|^6\big] + 6\bbE\big[\big|Z_{\rmb,ij}\big|^4\big] + 4\bbE\big[\big|Z_{\rmb,ij}\big|^2\big] + 1,\\*
C_2&:=8\bbE\big[\big|Z_{\rmb,ij}\big|^6 \big|\Re\big(Z_{\rmb,ij}\big)\big|\big] + 24\bbE\big[\big|Z_{\rmb,ij}\big|^4 \big|\Re\big(Z_{\rmb,ij}\big)\big|\big] + 24\bbE\big[\big|Z_{\rmb,ij}\big|^2 \big|\Re\big(Z_{\rmb,ij}\big)\big|\big] + 8\bbE\big[\big|\Re\big(Z_{\rmb,ij}\big)\big|\big],\\*
C_3&:=24\bbE\big[\big|Z_{\rmb,ij}\big|^4 \big|\Re\big(Z_{\rmb,ij}\big)\big|^2\big] + 48\bbE\big[\big|Z_{\rmb,ij}\big|^2 \big|\Re\big(Z_{\rmb,ij}\big)\big|^2\big] + 24\bbE\big[\big|\Re\big(Z_{\rmb,ij}\big)\big|^2\big],\\*
C_4&:=32\bbE\big[\big|Z_{\rmb,ij}\big|^2 \big|\Re\big(Z_{\rmb,ij}\big)\big|^3\big] + 32\bbE\big[\big|\Re\big(Z_{\rmb,ij}\big)\big|^3\big],\\*
C_5&:=16\bbE\big[\big|\Re\big(Z_{\rmb,ij}\big)\big|^4\big].
\end{align}
As $\Psi(n)=O(\frac{1}{\sqrt{n}})$, it follows that
\begin{align}
&\bbE\Big[\big|\tilD_{ij}\big|^4\Big]\nn\\*
&\quad \leq \Big(\frac{1}{1+\lambda_{\rmb,j}\Psi(n)}\Big)^4 \cdot \Big( C_1\big(\lambda_{\rmb,j}\Psi(n)\big)^4 + C_2\big(\lambda_{\rmb,j}\Psi(n)\big)^\frac{7}{2} + C_3\big(\lambda_{\rmb,j}\Psi(n)\big)^3 + C_4\big(\lambda_{\rmb,j}\Psi(n)\big)^\frac{5}{2} + C_5\big(\lambda_{\rmb,j}\Psi(n)\big)^2 \Big)\label{eq:bbE tilD 4th-1}\\*
&\quad =O(1) \cdot \Big( O(n^{-2}) + O(n^{-\frac{7}{4}}) + O(n^{-\frac{3}{2}}) + O(n^{-\frac{5}{4}}) +O(n^{-1}) \Big)\label{eq:bbE tilD 4th-2}\\*
&\quad =O(n^{-1}),\label{eq:bbE tilD 4th-3}
\end{align}
where \eqref{eq:bbE tilD 4th-1} follows from \eqref{eq:calculate 4th tilD-3}, and the first multiplicative term is $O(1)$, the latter term is dominated by $C_5(\lambda_{\rmb,j}\Psi(n))^2=O(n^{-1})$. Similarly, the square term in \eqref{eq:D to tilD-2} follows that
\begin{align}
\bbE\Big[\big|\tilD_{ij}\big|^2\Big]
&=\Big(\frac{1}{1+\lambda_{\rmb,j}\Psi(n)}\Big)^2 \cdot \Big( 5\big(\lambda_{\rmb,j}\Psi(n)\big)^2 - 4\big(\lambda_{\rmb,j}\Psi(n)\big)^\frac{3}{2} + 2\lambda_{\rmb,j}\Psi(n) \Big)\label{eq:bbE tilD 2nd-1}\\*
&=O(1) \cdot \Big( O(n^{-1}) + O(n^{-\frac{3}{4}}) + O(n^{-\frac{1}{2}}) \Big)\label{eq:bbE tilD 2nd-2}\\*
&=O(n^{-\frac{1}{2}}),\label{eq:bbE tilD 2nd-3}
\end{align}
where \eqref{eq:bbE tilD 2nd-1} follows from the fact that $\bbE[|Z_{\rmb,ij}|^4]=2$, $\bbE[|Z_{\rmb,ij}|^2]=1$, $\bbE[|\Re(Z_{\rmb,ij})|^2]=\frac{1}{2}$ and $\bbE[|\Re(Z_{\rmb,ij})|]=0$.
Substituting \eqref{eq:bbE tilD 4th-3} and \eqref{eq:bbE tilD 2nd-3} into \eqref{eq:D to tilD-2} leads to
\begin{align}
\bbE\big[|D_i(\bLambda_\rmb)|^4\big|\bLambda_\rmb=\blambda_\rmb\big]
&= \frac{1}{\sigma^4(\blambda_\rmb)}\cdot O(n^{-1})\label{eq:bbE tilD 4th=O(1)-1}\\*
&= O(1),\label{eq:bbE tilD 4th=O(1)-2}
\end{align}
where \eqref{eq:bbE tilD 4th=O(1)-2} follows from the fact that $\sigma^2(\blambda_\rmb)=O(n^{-\frac{1}{2}})$. 
Note that the order only depends on $\Psi(n)$, and in the covertness analysis, the power scaling is actually constrained to be of order $\Theta(\frac{1}{\sqrt{n}})$ in \eqref{eq:con power level}, which ensures that \eqref{eq:bbE tilD 4th=O(1)-2} holds strictly.
From \eqref{eq:bbE tilD 4th=O(1)-2}, \emph{Condition 1} of Lemma \ref{lemma:cramer-esseen} is satisfied for $\{D_i(\bLambda_\rmb)\}_{i\in[n]}$. 

\subsubsection{Proof of Condition 2 of Lemma \ref{lemma:cramer-esseen}}
\label{subapx:clt condition2}
To prove $\sup_{|t|\geq\tilde{\xi}_2}|\varphi_{D_i}(t)|\leq\tilk_0$, we first calculate $|\varphi_{D_i}(t)|$ to obtain an upper bound.
Define $\tila:=\lambda_{\rmb,j}\Psi(n)$ and $\tilde{\sigma}^2:=1-\frac{1}{(1+\tila)^2}$. Given $\bLambda_\rmb=\blambda_\rmb$, from the definition of $\varphi_{D_i}(t)$ in \eqref{eq:def phi Di} and $\tilD_{ij}$ in \eqref{eq:def tilD_ij}, it follows that
\begin{align}
\big|\varphi_{D_i}(t)\big|
&=\Bigg|\bbE\Bigg[\exp\Bigg({\frac{\imath t}{\sigma(\blambda_\rmb)}\sum_{j\in[N_\rma]}\tilD_{ij}}\Bigg)\Bigg]\Bigg|\label{eq:calculate phi-1}\\*
&=\prod_{j\in[N_\rma]}\Bigg|\bbE\bigg[\exp\bigg({\frac{\imath t}{\tilde{\sigma}}\tilD_{ij}}\bigg)\bigg]\Bigg|\label{eq:calculate phi-2}\\*
&=\prod_{j\in[N_\rma]}\Bigg| \exp\bigg(\frac{\imath t}{\tilde{\sigma}}\bigg) \bbE\bigg[ \exp\bigg( -\frac{\imath t}{\tilde{\sigma}}\cdot\frac{\big| \sqrt{\tila}Z_{\rmb,ij}-1 \big|^2}{1+\tila} \bigg) \bigg] \Bigg|\label{eq:calculate phi-3}\\*
&=\prod_{j\in[N_\rma]}\Bigg| \exp\bigg(\frac{\imath t}{\tilde{\sigma}}\bigg) \exp\bigg( -\frac{\imath t}{\tilde{\sigma}(1+\tila)} - \frac{t^2 \tila}{\tilde{\sigma}^2(1+\tila)^2} \bigg) \bigg( 1+\frac{\imath t \tila}{\tilde{\sigma}(1+\tila)} \bigg)^{-1} \Bigg| \label{eq:calculate phi-4}\\*
&=\prod_{j\in[N_\rma]}\Bigg( \exp\bigg( -\frac{t^2\tila}{\tilde{\sigma}^2(1+\tila)^2 } \bigg) \bigg( 1 + \frac{t^2 \tila^2}{\tilde{\sigma}^2(1+\tila)^2} \bigg)^{-\frac{1}{2}}  \Bigg)\label{eq:calculate phi-5}\\*
&=\prod_{j\in[N_\rma]} \Bigg( \bigg( 1-\frac{t^2}{\tila+2}+o\Big(\frac{t^2}{\tila+2}\Big) \bigg) \sqrt{\frac{\tila +2}{\tila t^2+\tila+2}} \Bigg),\label{eq:calculate phi-6}
\end{align}
where \eqref{eq:calculate phi-4} follows since that for $(a_1,a_2)\in\bbR^2$ and a random variable $A\sim\calCN(-1,a_1)$, the characteristic function of $a_2|A|^2$ is given as $\bbE[\exp(\imath ta_2|A|^2)]=\frac{\exp(\imath ta_2-t^2a_1a_2^2)}{1-\imath ta_1a_2}$, \eqref{eq:calculate phi-5} follows from taking the modulus of complex terms, and \eqref{eq:calculate phi-6} follows from the Taylor Series of expansion of $e^{-x}$ which implies $e^x=1-x+o(x)$ and the fact that $\tilde{\sigma}^2=1-\frac{1}{(1+\tila)^2}$.
From \eqref{eq:calculate phi-6}, $|\varphi_{D_i}(t)|$ decreases monotonically with $t$, implying that
\begin{align}
\label{eq:def tilk0 sup phi}
\sup_{|t|\geq\tilde{\xi}_2}|\varphi_{D_i}(t)|\leq \prod_{j\in[N_\rma]} \Bigg( \bigg( 1-\frac{\tilde{\xi}_2^2}{\tila+2}+o\Big(\frac{\tilde{\xi}_2^2}{\tila+2}\Big) \bigg) \sqrt{\frac{\tila +2}{\tila \tilde{\xi}_2^2+\tila+2}} \Bigg):=\tilk_0.
\end{align}
By Lyapunov's inequality \cite[Page 18]{petrov2012sums} and \eqref{eq:bbE tilD 4th=O(1)-2}, it follows that
\begin{align}
\label{eq:Di 3 O1}
\bbE\big[|D_i(\bLambda_\rmb)|^3\big|\bLambda_\rmb=\blambda_\rmb\big]\leq\Big(\bbE\big[|D_i(\bLambda_\rmb)|^4\big|\bLambda_\rmb=\blambda_\rmb\big]\Big)^\frac{3}{4}=O(1).
\end{align}
Likewise, the power scaling, on which the order of each parameter uniquely depends, is actually constrained to be of order $\Theta(\frac{1}{\sqrt{n}})$ in \eqref{eq:con power level}, thereby excluding arbitrarily small power levels, and hence from the definition of $\tilde{\xi}_2$ in \eqref{eq:def til xi 2} and \eqref{eq:Di 3 O1} we have $\tilde{\xi}_2=O(1)$ strictly.
Note that $\tila=O(\frac{1}{\sqrt{n}})$. Substituting the orders of $\tilde{\xi}_2$ and $\tila$ into $\tilk_0$ defined in \eqref{eq:def tilk0 sup phi} leads to $\tilk_0<1$.
From the fact that $\tilk_0<1$ and \eqref{eq:def tilk0 sup phi}, \emph{Condition 2} of Lemma \ref{lemma:cramer-esseen} is satisfied for $\{D_i(\bLambda_\rmb)\}_{i\in[n]}$.

\subsubsection{Proof of \eqref{eq:clt 2result}}
\label{subapx:clt rhs order 1/n}

In this part, we shall show that the right hand side of \eqref{eq:clt 1result} is of order $O(\frac{1}{n})$ to obtain \eqref{eq:clt 2result}, and we restate it here as
\begin{align}
\label{eq:def E123}
\underbrace{\frac{\bbE\big[D_i(\bLambda_\rmb)^3\big|\bLambda_\rmb=\blambda_\rmb\big]}{6\sqrt{2\pi}\sqrt{n}}\big(1-nu(\blambda_\rmb)^2\big)e^{-\frac{nu(\blambda_\rmb)^2}{2}}}_{:=\calE_1}-\underbrace{\frac{\tilk_2}{n}\bbE\big[|D_i(\bLambda_\rmb)|^4\big|\bLambda_\rmb=\blambda_\rmb\big]}_{:=\calE_2}-\underbrace{\tilk_2n^6\Big(\tilk_0+\frac{1}{2n}\Big)^n}_{:=\calE_3}.
\end{align}
For the first term $\calE_1$ in \eqref{eq:def E123}, we first calculate $\bbE[D_i(\bLambda_\rmb)^3|\bLambda_\rmb=\blambda_\rmb]$. Recall that $\tilD_{ij}$ is defined in \eqref{eq:def tilD_ij}, it follows from the definition of $D_i(\bLambda_\rmb)$ in \eqref{eq:def Di} that
\begin{align}
\bbE\big[D_i(\bLambda_\rmb)^3\big|\bLambda_\rmb=\blambda_\rmb\big]
&=\frac{1}{\sigma^3(\blambda_\rmb)}\cdot\bbE\Bigg[\bigg(\sum_{j\in[N_\rma]}\tilD_{ij}\bigg)^3\Bigg]\label{eq:D3 to tilD-1}\\*
&=\frac{1}{\sigma^3(\blambda_\rmb)}\cdot\sum_{j\in[N_\rma]}\bbE\Big[\tilD_{ij}^3\Big],\label{eq:D3 to tilD-2}
\end{align}
where \eqref{eq:D3 to tilD-2} follows from the fact that $\bbE[|\tilD_{ij}|]=0$. Similar to \eqref{eq:bbE tilD 4th-3} and \eqref{eq:bbE tilD 2nd-3}, by calculating we obtain $\bbE[\tilD_{ij}^3]=O(n^{-\frac{3}{2}})$. Combining with the fact that $\sigma^2(\blambda_\rmb)=O(n^{-\frac{1}{2}})$ leads to
\begin{align}
\frac{\bbE\big[D_i(\bLambda_\rmb)^3\big|\bLambda_\rmb=\blambda_\rmb\big]}{6\sqrt{2\pi}\sqrt{n}}=O(n^{-\frac{5}{4}}).
\end{align}
Recall that $u(\bLambda_\rmb)$ is defined in \eqref{eq:S sum normalized}. From the fact that $\gamma_n=C_{\varepsilon}(\Psi(n))+O(\frac{1}{n})$, $C_{\varepsilon}(\Psi(n))=O(n^{-\frac{1}{2}})$, $\mu(\bLambda_\rmb)=O(n^{-\frac{1}{2}})$ and $\sigma(\bLambda_\rmb)=O(n^{-\frac{1}{4}})$, it follows that $u(\bLambda_\rmb)=O(n^{-\frac{1}{4}})$, and then
\begin{align}
\calE_1
&=O(n^{-\frac{5}{4}})\cdot\big(1-O(n^\frac{1}{2})\big)\cdot O\big(\exp(-n^\frac{1}{2})\big)\\*
&=O\big( n^{-\frac{3}{4}}\exp(-n^{\frac{1}{2}}) \big).
\end{align}
For the second term $\calE_2$ in \eqref{eq:def E123}, combining \eqref{eq:bbE tilD 4th=O(1)-2} with the fact from Lemma \ref{lemma:cramer-esseen} that $\tilk_2$ is a constant, we obtain $\calE_2=O(n^{-1})$.
For the third term $\calE_3$ in \eqref{eq:def E123}, as $n\to\infty$, it follows that
\begin{align}
\calE_3
&=\tilk_2n^6\tilk_0^n\Big(1+\frac{1}{2\tilk_0 n}\Big)^n\\*
&=\tilk_2n^6\tilk_0^n e^{\frac{1}{2\tilk_0}}\label{eq:E3 order-2}\\*
&=O(\tilk_0^n),\label{eq:E3 order-3}
\end{align}
where \eqref{eq:E3 order-2} follows from the second important limit which implies $\lim_{x\to\infty}(1+\frac{1}{x})^x=e$, and \eqref{eq:E3 order-3} follows from the fact that $\tilk_2$ is a constant, $\tilk_0\in(0,1)$ and the exponential decay term $\tilk_0^n$ overwhelmingly dominates the polynomial factor $n^6$.
Consequently, the order of \eqref{eq:def E123} is dominated by the term $\calE_2$, i.e., the right hand side of \eqref{eq:clt 1result} is of order $O(n^{-1})$, and \eqref{eq:clt 2result} follows obviously.

\subsection{Proof of \eqref{eq:eliminate Q in S}}
\label{appendix:Proof of eliminate Q in S}

To obtain \eqref{eq:eliminate Q in S}, we shall average \eqref{eq:CLT S} over $\bLambda_\rmb$. For clarity, we restate \eqref{eq:CLT S} here as
\begin{align}
\label{eq:CLT S restate}
\Pr\big\{S_n(\bLambda_\rmb)\leq n\gamma_n\big|\bLambda_\rmb=\blambda_\rmb\big\}\geq Q(-\sqrt{n}u(\blambda_\rmb))+O\Big(\frac{1}{n}\Big).
\end{align}
Let $\tilU\sim\calN(0,1)$ be a real random variable independent of $\bLambda_\rmb$.
Taking expectations on both sides of \eqref{eq:CLT S restate} leads to
\begin{align}
\Pr\big\{S_n(\bLambda_\rmb)\leq n\gamma_n\big\}
&\geq \bbE\big[ Q(-\sqrt{n}u(\bLambda_\rmb))\big]+O\Big(\frac{1}{n}\Big)\label{eq:E Q to u-1}\\*
&=\bbE\bigg[ \Pr\bigg\{ u(\bLambda_\rmb)\geq\frac{\tilU}{\sqrt{n}} \bigg\} \bigg]+O\Big(\frac{1}{n}\Big)\label{eq:E Q to u-2}\\*
&=\bbE\bigg[ \Pr\bigg\{ u(\bLambda_\rmb)\geq\frac{\tilU}{\sqrt{n}} \bigg\}\bigg|\bLambda_\rmb \bigg]+O\Big(\frac{1}{n}\Big)\label{eq:E Q to u-3}\\*
&=\Pr\bigg\{ u(\bLambda_\rmb)\geq\frac{\tilU}{\sqrt{n}} \bigg\} +O\Big(\frac{1}{n}\Big),\label{eq:E Q to u-4}
\end{align}
where \eqref{eq:E Q to u-2} follows from the definition of the complementary Gaussian cumulative distribution function $Q(\cdot)$, \eqref{eq:E Q to u-3} follows from the fact that the unconditional probability coincides with the conditional probability when $\tilU$ and $\bLambda_\rmb$ are independent, and \eqref{eq:E Q to u-4} follows by the law of total expectation.
Let $f_U$ be the pdf of $u(\bLambda_\rmb)$ and $f'_U$ be the derivative of $f_U$.
From Lemma \ref{lemma:B sqrtn A ff'}, there exists an $\iota>0$ such that
\begin{align}
\Pr\bigg\{ u(\bLambda_\rmb)\geq\frac{\tilU}{\sqrt{n}} \bigg\}
&\geq \Pr\big\{ u(\bLambda_\rmb)\geq0 \big\}-\frac{1}{n}\bigg( \frac{2}{\iota^2}+\frac{f_U}{\iota}+\frac{f'_U}{2} \bigg)\label{eq:u to 0-0}\\*
&= \Pr\big\{ u(\bLambda_\rmb)\geq0 \big\}+O\Big(\frac{1}{\sqrt{n}}\Big)\label{eq:u to 0-1}\\*
&=\Pr\big\{ \mu(\bLambda_\rmb)\leq \gamma_n \big\}+O\Big(\frac{1}{\sqrt{n}}\Big)\label{eq:u to 0-2}\\*
&=F_\rmout(\gamma_n)+O\Big(\frac{1}{\sqrt{n}}\Big),\label{eq:u to 0-3}
\end{align}
where \eqref{eq:u to 0-1} follows from the fact that $f_U=O(n^{\frac{1}{4}})$ and $f'_U=O(n^{\frac{1}{2}})$, whose proofs are deferred to Appendices \ref{subapx:f U} and \ref{subapx:f' U}, respectively, \eqref{eq:u to 0-2} follows from \eqref{eq:S sum normalized}, and \eqref{eq:u to 0-3} follows from \eqref{eq:S sum mean} and the definition of the covert outage probability $F_\rmout(\cdot)$ in \eqref{eq:def of Fout_con}.
Substituting \eqref{eq:u to 0-3} into \eqref{eq:E Q to u-4} leads to \eqref{eq:eliminate Q in S}.
The proof is completed. To conclude the proof, it remains to calculate $f_U$ and $f'_U$.

\subsubsection{Proof of $f_U=O(n^{\frac{1}{4}})$}
\label{subapx:f U}
For clarity, let $U:=u(\bLambda_\rmb)$ and $u:=u(\blambda_\rmb)$. Note that we only focus on the order with respect to $n$, and assume that for each $j\in[N_\rma]$, $\Lambda_{\rmb,j}=\Lambda_\rmb$, which implies
\begin{align}
u(\bLambda_\rmb)=\frac{\gamma_n-N_\rma\log\big(1+\Lambda_\rmb\Psi(n)\big)}{\sqrt{N_\rma-N_\rma\big(1+\Lambda_\rmb\Psi(n)\big)^{-2}}}.
\end{align}
Given $\Lambda_\rmb=\lambda_\rmb$, define $\tilc:=\lambda_\rmb\Psi(n)$ and $\sigma_\dagger:=\sqrt{N_\rma-N_\rma(1+\tilc)^{-2}}$.
Let $f_{\Lambda_\rmb}$ be the pdf of $\Lambda_\rmb$. Since $U$ decreases monotonically with respect to $\Lambda_\rmb$, it follows that
\begin{align}
f_U(u)
&=f_{\Lambda_\rmb}(\lambda_\rmb)\cdot\bigg|\frac{\rmd u}{\rmd \lambda_\rmb}\bigg|^{-1}\label{eq:calculate U-1}\\*
&=f_{\Lambda_\rmb}(\lambda_\rmb)\cdot\bigg| \frac{1}{\sigma_\dagger^2}\cdot\Big( -\frac{N_\rma\Psi(n)\sigma_\dagger}{1+\tilc} - \big(\gamma_n-N_\rma\log(1+\tilc)\big)\cdot\frac{\rmd\sigma_\dagger}{\rmd\lambda_\rmb} \Big) \bigg|^{-1}\label{eq:calculate U-2}\\*
&=f_{\Lambda_\rmb}(\lambda_\rmb)\cdot\frac{\sigma_\dagger^3}{N_\rma\Psi(n)}\cdot\bigg| \frac{\sigma_\dagger^2}{1+\tilc} + \frac{\gamma_n-N_\rma\log(1+\tilc)}{(1+\tilc)^3} \bigg|^{-1}\label{eq:calculate U-3}\\*
&=f_{\Lambda_\rmb}(\lambda_\rmb)\cdot\frac{\sqrt{N_\rma}(\tilc^2+2\tilc)^\frac{3}{2}}{\Psi(n)\big| \big( \tilc^2+2\tilc-\log(1+\tilc) \big)N_\rma+\gamma_n \big|}\label{eq:calculate U-4}\\*
&=f_{\Lambda_\rmb}(\lambda_\rmb)\cdot\frac{\sqrt{N_\rma}(\tilc^2+2\tilc)^\frac{3}{2}}{\Psi(n)\big(\big( \tilc^2+\tilc+o(\tilc) \big)N_\rma+\gamma_n \big)}\label{eq:calculate U-5}\\*
&=f_{\Lambda_\rmb}(\lambda_\rmb)\cdot\frac{O(n^{-\frac{3}{4}})}{O(n^{-1})}\label{eq:calculate U-6}\\*
&=O(n^\frac{1}{4}),\label{eq:calculate U-7}
\end{align}
where \eqref{eq:calculate U-3} follows from the calculation that $\frac{\rmd\sigma_\dagger}{\rmd\lambda_\rmb}=\frac{N_\rma\Psi(n)}{\sigma_\dagger(1+\tilc)^3}$, \eqref{eq:calculate U-4} follows from the fact that $\sigma_\dagger:=\sqrt{N_\rma-N_\rma(1+\tilc)^{-2}}$, \eqref{eq:calculate U-5} follows from the Taylor Series of expansion of $\log(1+x)$ around $x=0$ which implies $\log(1+x)=x+o(x)$, \eqref{eq:calculate U-6} follows from the fact that $\Psi(n)=O(n^{-\frac{1}{2}})$, $\tilc=O(n^{-\frac{1}{2}})$, $\gamma_n=C_{\varepsilon}(\Psi(n))+O(\frac{1}{n})$ and $C_{\varepsilon}(\Psi(n))=O(n^{-\frac{1}{2}})$, and \eqref{eq:calculate U-7} follows from the fact that $f_{\Lambda_\rmb}(\lambda_\rmb)=O(1)$ and each parameter depends only on terms of order $\Theta(\frac{1}{\sqrt{n}})$ in \eqref{eq:con power level}, avoiding excessively small denominators.

\subsubsection{Proof of $f'_U=O(n^{\frac{1}{2}})$}
\label{subapx:f' U}
Since $U$ is monotonic and invertible with respect to $\Lambda_\rmb$, it follows that
\begin{align}
\label{eq:fU'=}
f'_U(u)=\frac{\rmd f_U}{\rmd u}=\frac{\rmd f_U}{\rmd\lambda_\rmb}\cdot\frac{\rmd\lambda_\rmb}{\rmd u}.
\end{align}
We have shown in \eqref{eq:calculate U-7} that $\frac{\rmd\lambda_\rmb}{\rmd u}=O(n^\frac{1}{4})$. In the following, we shall calculate $\frac{\rmd f_U}{\rmd\lambda_\rmb}$. Define two fuctions
\begin{align}
F_3&:=\sqrt{N_\rma}(\tilc^2+2\tilc)^\frac{3}{2},\label{eq:def F3}\\*
F_4&:=\Psi(n)\big(\big( \tilc^2+\tilc+o(\tilc) \big)N_\rma+\gamma_n \big),\label{eq:def F4}
\end{align}
and by \eqref{eq:calculate U-5}, it follows that
\begin{align}
\label{eq:F3F4''}
\frac{\rmd f_U}{\rmd\lambda_\rmb}=f'_{\Lambda_\rmb}(\lambda_\rmb)\cdot\frac{F_3}{F_4}+f_{\Lambda_\rmb}(\lambda_\rmb)\cdot\frac{F_3'F_4-F_3F_4'}{F_4^2}.
\end{align}
Recall that $\tilc=\lambda_\rmb\Psi(n)=O(n^{-\frac{1}{2}})$.
By calculating, we obtain
\begin{align}
F_3'
&=\sqrt{N_\rma}\cdot\frac{3}{2}(\tilc^2+2\tilc)^\frac{1}{2}\cdot2\Psi(n)(\tilc+1)\\*
&=3\sqrt{N_\rma}\Psi(n)(\tilc+1)\sqrt{\tilc^2+2\tilc}\\*
&=O(n^{-\frac{3}{4}}),\label{eq:F3'}
\end{align}
and
\begin{align}
F_4'
&=2N_\rma\Psi(n)^2\tilc+N_\rma\Psi(n)^2+o\big(N_\rma\Psi(n)\tilc\big)\\*
&=O(n^{-1}),\label{eq:F4'}
\end{align}
Substituting \eqref{eq:def F3}, \eqref{eq:def F4}, \eqref{eq:F3'} and \eqref{eq:F4'} into \eqref{eq:F3F4''}, and using the fact that $f_{\Lambda_\rmb}$ and $f'_{\Lambda_\rmb}$ are independent with $n$, and $F_3=O(n^{-\frac{3}{4}})$ and $F_4=O(n^{-1})$ by \eqref{eq:calculate U-6} lead to $\frac{\rmd f_U}{\rmd\lambda_\rmb}=O(n^{\frac{1}{4}})$.
Combining with \eqref{eq:calculate U-7} and \eqref{eq:fU'=} leads to $f'_U(u)=O(n^{\frac{1}{2}})$.

\bibliographystyle{ieeetr}
\bibliography{reference}
\end{document}

%% file: preamble.tex
\usepackage{soul}
\usepackage[mathscr]{eucal}
\usepackage[cmex10]{amsmath}
\usepackage{epsfig,epsf,psfrag}
\usepackage{amssymb,amsmath,amsthm,amsfonts,latexsym}
\usepackage{amsmath,graphicx,bm,xcolor,url,overpic}
\usepackage{fixltx2e}
\usepackage{array}
\usepackage{verbatim}
\usepackage{bm}
\usepackage{algorithmic}
\usepackage{algorithm}
\usepackage{verbatim}
\usepackage{textcomp}
\usepackage{mathrsfs}
\usepackage{epstopdf}

\newcommand{\openone}{\leavevmode\hbox{\small1\normalsize\kern-.33em1}}

\catcode`~=11 \def\UrlSpecials{\do\~{\kern -.15em\lower .7ex\hbox{~}\kern .04em}} \catcode`~=13

\allowdisplaybreaks[4]

\newcommand{\nn}{\nonumber}

\newcommand{\equ}{\mathrm{equ}}
\newcommand{\avg}{\mathrm{avg}}
\newcommand{\rmtG}{\mathrm{tG}}
\newcommand{\rmrt}{\mathrm{rt}}
\newcommand{\rmno}{\mathrm{no}}
\newcommand{\rmout}{\mathrm{out}}

\newcommand{\bbo}{\mathbbm{1}}

\newcommand{\calCN}{\mathcal{CN}}
\newcommand{\calA}{\mathcal{A}}
\newcommand{\calB}{\mathcal{B}}
\newcommand{\calC}{\mathcal{C}}
\newcommand{\calD}{\mathcal{D}}
\newcommand{\calE}{\mathcal{E}}
\newcommand{\calF}{\mathcal{F}}
\newcommand{\calG}{\mathcal{G}}
\newcommand{\calH}{\mathcal{H}}

\newcommand{\calN}{\mathcal{N}}

\newcommand{\calP}{\mathcal{P}}
\newcommand{\calQ}{\mathcal{Q}}

\newcommand{\calV}{\mathcal{V}}

\newcommand{\calX}{\mathcal{X}}
\newcommand{\calY}{\mathcal{Y}}
\newcommand{\calZ}{\mathcal{Z}}

\newcommand{\ba}{\mathbf{a}}
\newcommand{\bA}{\mathbf{A}}
\newcommand{\bb}{\mathbf{b}}
\newcommand{\bB}{\mathbf{B}}

\newcommand{\be}{\mathbf{e}}

\newcommand{\bh}{\mathbf{h}}
\newcommand{\bH}{\mathbf{H}}

\newcommand{\bI}{\mathbf{I}}

\newcommand{\bL}{\mathbf{L}}

\newcommand{\bP}{\mathbf{P}}

\newcommand{\bV}{\mathbf{V}}

\newcommand{\bx}{\mathbf{x}}
\newcommand{\bX}{\mathbf{X}}
\newcommand{\by}{\mathbf{y}}
\newcommand{\bY}{\mathbf{Y}}
\newcommand{\bz}{\mathbf{z}}
\newcommand{\bZ}{\mathbf{Z}}

\newcommand{\rma}{\mathrm{a}}

\newcommand{\rmb}{\mathrm{b}}

\newcommand{\rmc}{\mathrm{c}}

\newcommand{\rmd}{\mathrm{d}}

\newcommand{\rme}{\mathrm{e}}

\newcommand{\rmF}{\mathrm{F}}

\newcommand{\rmG}{\mathrm{G}}

\newcommand{\rmH}{\mathrm{H}}

\newcommand{\rmm}{\mathrm{m}}

\newcommand{\rmP}{\mathrm{P}}

\newcommand{\rmr}{\mathrm{r}}

\newcommand{\rmS}{\mathrm{S}}
\newcommand{\rmt}{\mathrm{t}}

\newcommand{\rmu}{\mathrm{u}}

\newcommand{\rmw}{\mathrm{w}}

\newcommand{\bbC}{\mathbb{C}}
\newcommand{\bbD}{\mathbb{D}}
\newcommand{\bbE}{\mathbb{E}}

\newcommand{\bbN}{\mathbb{N}}

\newcommand{\bbR}{\mathbb{R}}

\newcommand{\bbV}{\mathbb{V}}

\DeclareMathAlphabet{\mathbsf}{OT1}{cmss}{bx}{n}
\DeclareMathAlphabet{\mathssf}{OT1}{cmss}{m}{sl}

\newcommand{\rvQ}{\mathsf{Q}}

\DeclareSymbolFont{bsfletters}{OT1}{cmss}{bx}{n}
\DeclareSymbolFont{ssfletters}{OT1}{cmss}{m}{n}
\DeclareMathSymbol{\bsfGamma}{0}{bsfletters}{'000}
\DeclareMathSymbol{\ssfGamma}{0}{ssfletters}{'000}
\DeclareMathSymbol{\bsfDelta}{0}{bsfletters}{'001}
\DeclareMathSymbol{\ssfDelta}{0}{ssfletters}{'001}
\DeclareMathSymbol{\bsfTheta}{0}{bsfletters}{'002}
\DeclareMathSymbol{\ssfTheta}{0}{ssfletters}{'002}
\DeclareMathSymbol{\bsfLambda}{0}{bsfletters}{'003}
\DeclareMathSymbol{\ssfLambda}{0}{ssfletters}{'003}
\DeclareMathSymbol{\bsfXi}{0}{bsfletters}{'004}
\DeclareMathSymbol{\ssfXi}{0}{ssfletters}{'004}
\DeclareMathSymbol{\bsfPi}{0}{bsfletters}{'005}
\DeclareMathSymbol{\ssfPi}{0}{ssfletters}{'005}
\DeclareMathSymbol{\bsfSigma}{0}{bsfletters}{'006}
\DeclareMathSymbol{\ssfSigma}{0}{ssfletters}{'006}
\DeclareMathSymbol{\bsfUpsilon}{0}{bsfletters}{'007}
\DeclareMathSymbol{\ssfUpsilon}{0}{ssfletters}{'007}
\DeclareMathSymbol{\bsfPhi}{0}{bsfletters}{'010}
\DeclareMathSymbol{\ssfPhi}{0}{ssfletters}{'010}
\DeclareMathSymbol{\bsfPsi}{0}{bsfletters}{'011}
\DeclareMathSymbol{\ssfPsi}{0}{ssfletters}{'011}
\DeclareMathSymbol{\bsfOmega}{0}{bsfletters}{'012}
\DeclareMathSymbol{\ssfOmega}{0}{ssfletters}{'012}

\newcommand{\tila}{\tilde{a}}

\newcommand{\tilc}{\tilde{c}}

\newcommand{\tilD}{\tilde{D}}

\newcommand{\tilF}{\tilde{F}}

\newcommand{\tilg}{\tilde{g}}
\newcommand{\tilG}{\tilde{G}}

\newcommand{\tilk}{\tilde{k}}

\newcommand{\tilP}{\tilde{P}}

\newcommand{\tilt}{\tilde{t}}
\newcommand{\tilT}{\tilde{T}}

\newcommand{\tilU}{\tilde{U}}

\newcommand{\hatW}{\hat{W}}

\newcommand{\tilx}{\tilde{x}}

\newcommand{\blambda}{\bm{\lambda}}

\newcommand{\bLambda}{\bm{\Lambda}}
\newcommand{\bSigma	}{\bm{\Sigma}}
\newcommand{\bPsi}{\bm{\Psi}}

\newcommand{\iid}{i.i.d.\ }

\DeclareMathOperator{\diag}{diag}

\DeclareMathOperator{\tr}{tr}
\DeclareMathOperator{\spn}{span}

\DeclareMathOperator{\rank}{rank}

\newcommand{\bzero}{\mathbf{0}}

\newtheorem{theorem}{Theorem}
\newtheorem{lemma}[theorem]{Lemma}

\newtheorem{corollary}[theorem]{Corollary}
\newtheorem{definition}{Definition}
\newtheorem{example}{Example}

